\tikzset{>=latex} 
\colorlet{myred}{red!80!black}
\colorlet{myblue}{blue!80!black}
\colorlet{mygreen}{green!60!black}
\colorlet{myorange}{orange!70!red!60!black}
\colorlet{mydarkred}{red!30!black}
\colorlet{mydarkblue}{blue!40!black}
\colorlet{DarkGreen}{green!30!black}
\colorlet{FireBrick}{red!178!green!34!blue!34}
\tikzstyle{node}=[thick,circle,draw=myblue,minimum size=22,inner sep=0.5,outer sep=0.6]
\tikzstyle{node in}=[node,green!20!black,draw=mygreen!30!black,fill=mygreen!25]
\tikzstyle{node hidden}=[node,blue!20!black,draw=myblue!30!black,fill=myblue!20]
\tikzstyle{node convol}=[node,orange!20!black,draw=myorange!30!black,fill=myorange!20]
\tikzstyle{node out}=[node,red!20!black,draw=myred!30!black,fill=myred!20]
\tikzstyle{connect}=[thick,mydarkblue]
\tikzstyle{connect arrow}=[-{Latex[length=4,width=3.5]},thick,mydarkblue,shorten <=0.5,shorten >=1]
\tikzset{
  node 1/.style={node in},
  node 2/.style={node hidden},
  node 3/.style={node out},
}
\newcommand\p{\partial}
\renewcommand\sout{\bgroup \color{red} \ULdepth=-.5ex \ULset}
\def\changed#1{{#1}}
\newcommand\I{ {\rm I} }
\newcommand\II{ {\rm I\hspace{-1.1pt}I} }
\theoremstyle{plain}
\newtheorem{thm}{Theorem}[section]
\newtheorem{prop}[thm]{Proposition}
\theoremstyle{definition}
\begin{document}
\preprint{KUNS-3006}

\title{
Data-driven discovery of self-similarity using neural networks
}

\author{Ryota~Watanabe}
\thanks{watanabe@gauge.scphys.kyoto-u.ac.jp}
\affiliation{Department of Physics, Kyoto University, Kyoto 606-8502, Japan}

\author{Takanori~Ishii}
\thanks{ishiit@gauge.scphys.kyoto-u.ac.jp}
\affiliation{Department of Physics, Kyoto University, Kyoto 606-8502, Japan}

\author{Yuji~Hirono}
\thanks{Contact author. yuji.hirono@gmail.com}
\affiliation{Department of Physics, Osaka University, Toyonaka, Osaka 560-0043, Japan}

\author{Hirokazu~Maruoka}

\thanks{Contact author. \\ 
 hirokazu.maruoka@oist.jp, hmaruoka1987@gmail.com}
\affiliation{Advanced Statistical Dynamics, Yukawa Institute for Theoretical Physics, Kyoto University, Kitashirakawa Oiwake-Cho, Sakyo-ku, Kyoto 606-8502, Japan}
\affiliation{Nonlinear and Non-equilibrium Physics Unit, Okinawa Institute of Science and Technology (OIST), Tancha, Onna-son, Kunigami-gun Okinawa 904-0495,
Japan}

\begin{abstract}
Finding self-similarity is a key step for understanding the governing law behind complex physical phenomena.
Traditional methods for identifying self-similarity often rely on specific models, which can introduce significant bias. 
In this paper, we present a novel neural network-based approach that discovers self-similarity directly from observed data, without presupposing any models. 
The presence of self-similar solutions in a physical problem signals that the governing law contains a function whose arguments are given by power-law monomials of physical parameters, which are characterized by power-law exponents. 
The basic idea is to enforce such particular forms structurally in a neural network in a parametrized way. 
We train the neural network model using the observed data, and when the training is successful, we can extract the power exponents that characterize scale-transformation symmetries of the physical problem.
We demonstrate the effectiveness of our method with both synthetic and experimental data, validating its potential as a robust, model-independent tool for exploring self-similarity in complex systems.
\end{abstract}

\maketitle

\tableofcontents

\section{ Introduction }

The concept of {\it self-similarity} is of crucial importance in statistical physics, especially in understanding critical phenomena and phase transitions~\cite{goldenfeld2018lectures}.
Beyond its role in equilibrium physics, self-similarity emerges as a vital concept in deciphering the dynamics of various non-equilibrium phenomena~\cite{barenblatt_1996,barenblatt_2003}
such as 
soft matter physics~\cite{deGennesScaling}, 
complex fluids~\cite{Baumchen}, 
granular physics~\cite{Hatano,Saitoh}, and fluid dynamics~\cite{Barenblatt_2014}.
Self-similarity implies that certain properties of physical systems remain invariant under a scale transformation~\cite{Mandelbrot},
revealing an underlying universality across different scales. 
This feature not only facilitates a deeper understanding of the dynamical properties of materials but also aids the development of theoretical models that capture the essence of a physical phenomenon.
Solutions with this feature are called {\it self-similar solutions}.
Typically, self-similar solutions have the following structure:
\begin{equation}
\frac{y}{t^{q}} = f\left(\frac{x}{t^{p}} \right),
\label{eq:y-f-x}
\end{equation}
where $x$, $y$ and $t$ are physical parameters and $f$ is some smooth function.
Equation~\eqref{eq:y-f-x} is invariant under the scale transformation of the form: $t\to At,\, x \to A^p x,\, y \to A^q y$.
The essence of self-similar solutions lies in the appearance 
of {\it similarity parameters} such as $x/t^{p}$ and $y/t^{q}$, which are power-law monomials 
specified by their exponents, $p$ and $q$.
These exponents characterize
the scale-transformation symmetries 
that give rise to self-similar solutions.
Once the correct exponents are identified, their self-similar structure can be exploited to get a {\it data collapse}, in which all data points lie on a low-dimensional manifold by choosing the similarity parameters as the axes of the data plot~\cite{Stanley_1999,Cabella,Kimchi,Yokota,Okumura2020}.
The dimension of the manifold is determined by the number of the independent similarity parameters appearing in the self-similar solution.

For given experimental or simulated data, 
the identification of these power exponents, crucial for discovering the self-similar structure of a physical phenomenon, has traditionally been pursued through empirical methods.
While dimensional analysis serves as a common technique for determining the exponents, 
self-similar solutions found this way are
restricted to the cases of the {\it similarity of the first kind}, to which only specific problems belong.
However, there can be further similarity
that cannot be identified by dimensional analysis.
Such cases are referred to as the {\it similarity of the second kind}~\cite{barenblatt_1996,barenblatt_2003}, and many problems fall into this category, reflecting the nontrivial scale-invariance of the underlying physical laws.
To find the similarity of the second kind, theoretical techniques based on nonlinear eigenvalue problems~\cite{barenblatt_1996,barenblatt_2003} or renormalization method~\cite{goldenfeld1989intermediate} 
have been utilized. 
However, these methods often rely on pre-existing knowledge of the physical laws at play, potentially introducing significant bias through assumed models.
Therefore, there is a need for a more robust methodology that can discover self-similarity directly from data, independent of predefined models.

In this paper, we introduce a model-independent method designed to uncover the self-similarity based on observed data, using neural networks.
Deep neural networks~\cite{Goodfellow-et-al-2016,bishop2024deep}
provide a versatile way of parametrizing and optimizing functions of various forms. 
Indeed, universal approximation theorems \cite{cybenko1989approximation,HORNIK1989359} ensure that sufficiently complex neural networks are capable of approximating a wide variety of functions.
In recent years, neural networks have proven beneficial in tackling a variety of physical problems~\cite{wang2023scientific}.
For instance, they have been employed as variational wave functions in quantum many-body problems~\cite{doi:10.1126/science.aag2302}, 
used to enforce laws of physics during optimization processes~\cite{RAISSI2019686}, 
and utilized to detect phase transitions~\cite{doi:10.7566/JPSJ.86.063001}. 
In our approach, we configure neural networks to incorporate symmetry under scale transformations in a parametrized manner.
By optimizing the neural networks based on the provided data, the trained parameters encode the self-similarity inherent to the problem\footnote{We provide source codes implementing the method in Ref.~\cite{repository}}.
This model-independent approach not only pinpoints the scale-transformation symmetries essential for understanding the phenomenon but also significantly narrows down the spectrum of plausible theoretical models.
The approach we propose can be seen as an extension of the method proposed in Ref.~\cite{Somendra_M_Bhattacharjee_2001}, 
in which data collapse is detected through the minimization of a parametrized loss function. The strategy also utilizes the framework of crossover of scaling laws~\cite{Maruoka_2023}.

The identification self-similarity is equivalent to finding 
scale-transformation {\it symmetry} inherent in 
the phenomenon of interest.
Symmetry is a fundamental concept for describing natural phenomena, and its implementation or detection is becoming increasingly significant at the interface of physics and machine learning~\cite{wang2023physicsguided,otto2023unified}.
Several strategies have been developed based on the fact that neural networks for classification tasks can encode symmetries in their hidden layer~\cite{10.21468/SciPostPhys.11.1.014, Krippendorf_2021}. 
Moreover, there has been an approach employing Generative Adversarial Networks (GANs) for the discovery of symmetries. These networks are trained to identify transformations that preserve a given data distribution~\cite{pmlr-v202-yang23n, PhysRevD.105.096031}.
In the present approach, we integrate scale-transformation symmetries into the architecture of neural networks. In these networks, the generators of the symmetries are embedded as specific parameters. 
When the network is trained successfully, it becomes possible to extract the generators of scale transformations directly from the optimized parameters.

The rest of the paper is structured as follows. 
In Sec.~\ref{sec:self-similarity}, we review the self-similar solutions in physical phenomena, and provide a mathematical characterization of data collapse.
In Sec.~\ref{sec:method}, we describe the method for finding self-similarity using neural networks. 
We illustrate our method with synthetic and experimental data in Sec.~\ref{sec:example-1} and Sec.~\ref{sec:example-two-combinations}.
In Sec.~\ref{sec:conclusion}, we give a summary and discuss the tips for effective estimation and limitations.

\section{ Self-similarity and data collapse }\label{sec:self-similarity}

In this section, we give a brief review on the self-similarity of physical phenomena. 
Self-similar solutions appear as a result of scale-transformation symmetry in the physical phenomenon of interest. 
We give a mathematical characterization of 
the concept of {\it data collapse}, which is the reduction of the solution space utilizing the scale-transformation symmetry.

\subsection{ Dimension function }\label{sub_sec:self-similarity}

Let $\{z_i\}_{i=1,\ldots,N_z}$ denote a set of {\it physical parameters}\footnote{
\changed{
Note that ``physical parameters'' in our context refer 
to spacetime coordinates as well as numerical parameters that characterize the system with physical units. For example, in the one-dimensional diffusion equation, $\partial_t u(t,x) = D \partial_x^2 u(t,x)$, with the initial condition $u(0,x) = Q\delta (x)$, the set of physical parameters is $\{t, x, D, u, Q\}$ and all the physical parameters are expressed as a product of a numerical number and a physical unit.}
} describing a certain physical phenomenon.
We denote the physical unit of parameter $z_i$ 
by ${\rm unit}_i$ (such as ``cm'' and ``kg$\cdot$m/${\rm s}^2$'').
A physical parameter is expressed as a product of 
a numerical number and a physical unit. 
Namely, $z_i$ can be always written in the following form:
\begin{equation}
z_i = N [z_i, {\rm unit}_i] \, {\rm unit}_i ,
\label{eq:phys-para}
\end{equation}
where $N [z_i, {\rm unit}_i]$ denotes the numerical number 
of physical parameter $z_i$ measured in 
${\rm unit}_i$. 
For example, $N[10\,{\rm m}, {\rm m}] = 10$.

Let $\{U_{i}\}_{i=1,\ldots,N_{\rm unit}}$ be a chosen set of units, e.g., $\bm U = ({\rm m}, {\rm kg}, {\rm second})^\top$.
Suppose that we would like to use rescaled units,
\begin{equation}
U_{i} \mapsto U_{i} / L_{i} 
\quad \text{for each $i$},
\label{eq:unit-scale}
\end{equation}
where $L_i \in \mathbb R_{>0}$. 
The unit of $z_i$ is transformed
under Eq.~\eqref{eq:unit-scale} as
\begin{equation}
{\rm unit}_i \mapsto 
{\rm unit}'_i
= 
\frac{1}{\phi_i (\bm L)}
{\rm unit}_i .
\end{equation}
The function $\phi_i$ is called the {\it dimension function}.
One can show that $\phi_i(\bm L)$ is always 
written 
as a power-law monomial,
\begin{equation}
 \phi_i(\bm L) = \prod_{j=1}^{N_{\rm unit}} (L_j)^{\gamma_{ij}} .
 \label{eq:phi-power}
\end{equation}
\changed{where $\gamma_{ij}$ are power exponents.}

\changed{This follows from the following argument. 
Let us consider the scale transformation with scale factor ${\bm L} = \left( L_1, \cdots , L_{N_{\rm unit}} \right)$ by 
$\phi_i (\bm L)$,  
${\rm unit}_i \mapsto \frac{1}{\phi_i (\bm L)}{\rm unit}_i$.
This should be equivalent to the transformation with another scale factor ${\bm L'} = \left( L'_1, \cdots , L'_{N_{\rm unit}} \right)$ by $\phi_i (\bm L')$ 
followed by the transformation by 
$\phi_i \left(\frac{L_1}{L'_1}, \cdots , \frac{L_{N_{\rm unit}}}{L'_{N_{\rm unit}}} \right)$.
Thus, the function $\phi_i$ should satisfy the following consistency condition,
\begin{equation}
\frac{\phi_i (\bm L)}{\phi_i (\bm L')}
= \phi_i \left(\frac{L_1}{L'_1}, \cdots , \frac{L_{N_{\rm unit}}}{L'_{N_{\rm unit}}} \right).
\end{equation}
Solutions of this functional equation are power-law monomials like Eq.~\eqref{eq:phi-power} \cite{barenblatt_1996, barenblatt_2003}.\footnote{\changed{This argument will be repeated for an element $\varphi_i$, scale transformation acting on dimensionless parameters. How to solve the functional equation \eqref{eq:phi-power} is written in Ch.1 of Ref.~\cite{barenblatt_2003}}}}

%
Since a physical parameter $z_i$ is invariant 
under the scaling of units
(e.g., $1\,{\rm m} = 100\,{\rm cm}$),
\begin{equation}
\begin{split}
z_i 
&= 
\phi_i (\bm L)  
N [z_i, {\rm unit}_i] \, 
\frac{1}{\phi_i (\bm L)} 
{\rm unit}_i 
\\
&=
\phi_i (\bm L) 
N [z_i, {\rm unit}_i] \, 
{\rm unit}'_i     
\\
&= 
N [z_i, {\rm unit}'_i] \, 
{\rm unit}'_i .
\end{split}
\end{equation}
Thus, under the change of units $U_i \mapsto U_i / L_i$, the numerical value of a physical parameter $z_i$ is changed as
\begin{equation}
N [z_i, \frac{1}{\phi_i (\bm L)} {\rm unit}_i]
= 
\phi_i (\bm L) N [z_i, {\rm unit}_i] .
\end{equation}

\subsection{ Physical relations and symmetry groups }

Let us denote the space of possible physical parameters 
by $M$, which we take to be an $N_z$-dimensional manifold.
Suppose that physical parameters $\bm z$
satisfy relations of the form
\begin{equation}
\bm F(\bm z) = \bm 0 ,
\label{eq:f-0}
\end{equation}
where $\bm F: \mathbb R^{N_z} \to \mathbb R^{N_F}$ 
is a smooth function, and $N_F$ is the number of constraints satisfying  $N_F \le N_z$.
We will refer to $\bm z$ satisfying Eq.~\eqref{eq:f-0} 
a {\it solution}, and consider the space of solutions,
\begin{equation}
M_{\rm sol} \coloneqq \{ \bm z \in M 
\, | \, \bm F(\bm z) = \bm 0 \}. 
\end{equation}
We take $\bm F$ to be of maximal rank,
meaning that 
${\rm rank}\, \frac{\p F_k}{\p z^i} = N_F$ for 
$\bm z$ satisfying Eq.~\eqref{eq:f-0}.
Then, $M_{\rm sol}$ is an $(N_z - N_F)$-dimensional submanifold of $M$.

We can separate physical parameters into two groups; $\{z^{(\I)}_i\}_{i=1,\dots,N_{\rm unit}}$ are a chosen set of parameters with independent dimensions, with which the dimensions of other parameters can be expressed, 
and $\{ z^{( \II )}_i \}_{i=1,\ldots,N_\pi}$
are the other physical parameters,
where we defined $N_\pi \coloneqq N_z - N_{\rm unit}$.
The choice of the set of parameters with independent dimensions is arbitrary as long as 
parameters $z^{( \I )}_i$ are of independent dimensions.
One can introduce dimensionless parameters 
by rescaling 
$z^{( \II )}_i$
by a product of powers of 
$z^{( \I )}_i$,
\begin{equation}
\pi_i \coloneqq z_i^{(\II )} 
\prod_j (z^{(\I )}_j)^{\alpha_{ij}} .
\end{equation}
The function can be written in the form
\begin{equation}
F_k(\bm z) 
=
F_k(\bm z^{(\I )}, \bm z^{(\II )})
=
\prod_i (z^{(\I )}_i)^{\alpha_{ki}}
\mathcal F_k (\bm z^{(\I )}, \bm \pi),
\end{equation}
where $\mathcal F_k$ is dimensionless.

Now consider the rescaling of dimensionful parameters $\bm z^{(\I )} \mapsto \bm z'^{(\I )}$.
Since $\bm \pi$ and $\mathcal F$ are dimensionless, we have 
\begin{equation}
\bm{\mathcal F} (\bm z^{(\I )}, \bm \pi)
= 
\bm{\mathcal F} (\bm z'^{(\I )}, \bm \pi).
\end{equation}
This means that $\bm{\mathcal F} (\bm z^{(\I )}, \bm \pi)$ 
is in fact independent of $\bm z^{(\I )}$. 
Let us denote this function by $\bm \Phi$,
\begin{equation}
\bm \Phi (\bm \pi) 
\coloneqq 
\bm{\mathcal F} (\bm z^{(\I )},
\bm \pi).
\end{equation}
The function $\bm F$ is expressed as 
\begin{equation}
F_k(\bm z)    
= \prod_i (z^{(\I )}_i)^{\alpha_{ki}} \Phi_k (\bm \pi) .
\label{eq:f-phi}
\end{equation}

To summarize the discussion so far,
we have shown that a physical relation $\bm F(\bm z) = \bm 0$ 
can be stated in an equivalent form 
using $\bm \Phi(\bm \pi)$, 
\begin{equation}
\bm F(\bm z) = \bm 0 
\iff \bm \Phi (\bm \pi) = \bm 0 .
\end{equation}
This is nothing but the content of the Buckingham $\Pi$ theorem.

We consider transformations
of a physical state $\bm z \in M$.
Transformations  constitute a group action on $M$.
Namely, a transformation is a map $\cdot : G \times M \to M$, where $G$ is a group, under which a physical state is transformed as
\begin{equation}
M \ni \bm z \mapsto \bm z' = g \cdot \bm z \in M .
\end{equation}
We say that a transformation $g$
preserves the relation $\bm F (\bm z)=\bm 0$ if
the following holds:
\begin{equation}
\bm F (\bm z) = \bm 0 
\implies \bm F (g \cdot \bm z) = \bm 0.
\end{equation}
A set $G$ of transformations preserving $\bm F (\bm z)=\bm 0$
is called a {\it symmetry group}~\cite{MR836734}.
Namely, a symmetry group of a physical system 
is defined as a group of transformations $G$
acting on $M$ with the property that
$g \in G$ transforms solutions of the system to other solutions. 
In the following, we restrict our 
attention to connected Lie groups of symmetries,
and do not consider discrete symmetry groups.

\subsection{ Scale transformations and data collapse }

The basic idea behind data collapse is to obtain a reduced state space by exploiting scale-transformation symmetries of a system. 
We here mathematically formalize this idea.

Under a {\it scale transformation}, 
a state $\bm z \in M$ is transformed as
\begin{equation}
z_i \mapsto 
(\sigma \cdot \bm z)_i 
= \alpha_i z_i, \,\,\, \alpha_i \in \mathbb R_{>0}.
\end{equation}
Let $G^{\rm (s)} \subset G$ be the group of all scale transformations preserving $\bm F(\bm z) =\bm 0$.

As a particular kind of scale transformations, 
let us introduce the {\it scale transformation of units}\footnote{While it is called {\it scale transformation of units}, it does not mean that it actually changes the scale of units (e.g. 1~m $\to$ 100~cm). It is the scale transformations acting on {\it the physical parameters} (e.g. 1~m $\to$ 100~m) but their scale functions, which determine the factor by which the parameters change following their similarity, are equal to the dimension functions. In general, dimension functions are considered to act on the numerical values of physical quantities (see p.16 of Ref.~\cite{barenblatt_2003}). Here we intend to discuss the self-similarity of the physical parameters, the relation between the self-similarity of units and the self-similarity of physical parameters.}, under which physical parameters change
by a factor given by the dimension function,
\begin{equation}
z_i \mapsto z'_i = \phi_i (\bm L) z_i .
\label{eq:scale-transformation-of-units}
\end{equation}
Note here that numerical values are not transformed
and hence the value of $z_i$ is changed under this operation.
The group of scale transformations of units
will be denoted as 
$G_{\rm unit}^{\rm (s)}$,
\begin{equation}
G_{\rm unit}^{\rm (s)} 
\coloneqq
\{ \phi : z_i \mapsto \phi_i (\bm L) z_i \}.
\end{equation}
One can show the following:
\begin{prop}
A scale transformation of units is always 
a symmetry, i.e., $G^{\rm (s)}_{\rm unit} \subset G.$
\end{prop}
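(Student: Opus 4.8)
The plan is to obtain the statement as a direct corollary of the Buckingham $\Pi$ theorem established above. The key observation is that a scale transformation of units acts trivially on the dimensionless combinations $\bm\pi$, and therefore cannot alter the equivalent condition $\bm\Phi(\bm\pi)=\bm 0$. So the entire content reduces to verifying this invariance and then reading off the symmetry property from the equivalence $\bm F(\bm z)=\bm 0 \iff \bm\Phi(\bm\pi)=\bm 0$.

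First I would verify the invariance of each $\pi_i$ under $z_i \mapsto \phi_i(\bm L)\, z_i$. Splitting the action on the two groups of parameters as $z^{(\I)}_j \mapsto \phi^{(\I)}_j(\bm L)\, z^{(\I)}_j$ and $z^{(\II)}_i \mapsto \phi^{(\II)}_i(\bm L)\, z^{(\II)}_i$, and substituting into the definition $\pi_i = z_i^{(\II)} \prod_j (z^{(\I)}_j)^{\alpha_{ij}}$, one finds
\begin{equation}
\pi_i \mapsto \left[\phi^{(\II)}_i(\bm L) \prod_j \big(\phi^{(\I)}_j(\bm L)\big)^{\alpha_{ij}}\right] \pi_i .
\end{equation}
The exponents $\alpha_{ij}$ are chosen precisely so that $\pi_i$ is dimensionless, which is equivalent to saying that its dimension function is identically one, i.e. the bracketed prefactor equals $1$ for all $\bm L$. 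Hence $\pi_i \mapsto \pi_i$, and the full tuple $\bm\pi$ is invariant under every element of $G^{\rm (s)}_{\rm unit}$.

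With this in hand the symmetry property is immediate. Let $\phi \in G^{\rm (s)}_{\rm unit}$ and suppose $\bm F(\bm z) = \bm 0$. By the Buckingham $\Pi$ theorem this is equivalent to $\bm\Phi(\bm\pi) = \bm 0$. Since $\bm\pi$ is invariant, the transformed state $\bm z' = \phi \cdot \bm z$ carries the same dimensionless coordinates $\bm\pi' = \bm\pi$, so $\bm\Phi(\bm\pi') = \bm\Phi(\bm\pi) = \bm 0$. Applying the theorem in the reverse direction yields $\bm F(\bm z') = \bm 0$. Thus every $\phi \in G^{\rm (s)}_{\rm unit}$ maps solutions to solutions, which is exactly the defining property of membership in $G$, establishing $G^{\rm (s)}_{\rm unit} \subset G$.

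The computation is short, and the only step requiring care is the middle one: the claim that the prefactor multiplying $\pi_i$ is exactly one. This is really a restatement of the defining property of the $\alpha_{ij}$, namely that they render $\pi_i$ dimensionless. Consequently I expect the main conceptual point to lie in cleanly identifying ``dimensionless'' with ``invariant under $G^{\rm (s)}_{\rm unit}$'', rather than in any genuine analytic difficulty; once that identification is made explicit the proposition follows without further work.
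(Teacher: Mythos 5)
Your proposal is correct and is essentially the paper's own argument: both rest on the invariance of $\bm\pi$ under unit rescalings combined with the factorization $F_k(\bm z)=\prod_i (z^{(\I )}_i)^{\alpha_{ki}}\Phi_k(\bm\pi)$, the only cosmetic difference being that you invoke that factorization through the Buckingham $\Pi$ equivalence applied in both directions (implicitly using that the monomial prefactor never vanishes), while the paper computes $F_k(\phi\cdot\bm z)\propto F_k(\bm z)$ directly. Your explicit check that the exponents $\alpha_{ij}$ force the prefactor multiplying $\pi_i$ to equal one merely fills in the step the paper asserts with ``because they are dimensionless.''
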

\begin{proof}
Under a scale transformation $\phi$ of units,
parameters $\bm \pi$ are invariant,
$\phi \cdot \bm \pi = \bm \pi$, because they are dimensionless.
Using the relation~\eqref{eq:f-phi},
\begin{equation}
\begin{split}
F_k (\phi \cdot \bm z) 
&= 
\prod_i 
( (L_i)^{\beta_{ki}} z^{(\I )}_i)^{\alpha_{ki}} 
\Phi_k (\phi \cdot \bm \pi) 
\\
&= \prod_i 
( (L_i)^{\beta_{ki}} z^{(\I )}_i)^{\alpha_{ki}} 
\Phi_k (\bm \pi) 
\\
&\propto F_k (\bm z)
\\
&= 0 . 
\end{split}
\end{equation}
Thus, the transformation $\phi$ preserves relation $\bm F$,
$\bm F(\phi \cdot \bm z) =\bm 0$, and $\phi \in G$.
\end{proof}
For example, let us consider the diffusion equation
\begin{equation}
\partial_t u(t,x) = D\partial_x^2u(t,x)\,, 
\end{equation}
with the following initial and boundary conditions 
\begin{equation}
u(0,x)=Q\delta(x), \quad 
u(t,\pm \infty)=0, \nonumber
\end{equation}
where $Q$ is the total mass, $u$ is the concentration of substance, $x$ is the location, $t$ is the time, $D$ is the diffusion coefficient, and $\delta(x)$ is the Dirac delta function.
The set of physical parameters for this system is $\bm z = (u, t, x, D, Q)^\top$.
We can construct two independent dimensionless parameters, $\pi_1 = \frac{u\sqrt{Dt}}{Q}$ and $\pi_2 = \frac{x}{\sqrt{Dt}}$.
The self-similar solution of diffusion equation is in the form of $\pi_1 = f\left(\pi_2 \right)$. The dimension function $\phi$, scale transformation of units, acting on $\bm z$ is given as
\begin{equation}    
\phi= \left(\frac{M}{L}, T, L, \frac{L^2}{T}, M \right)^{\top}.
\end{equation}
It is easy to see $\pi_1$ and $\pi_2$ are invariant under $\phi$. Therefore, the transformation $\phi$ preserves the relation $\pi_1 = f\left(\pi_2 \right)$. $\phi$ can be estimated by seeing the numerical values of $\bm z$ by rescaling the units of the length, mass and time as
\begin{equation}
    (U_{\rm length},U_{\rm mass},U_{\rm time})
    \mapsto
    \left(\frac{U_{\rm length}}{L}, \frac{U_{\rm mass}}{M}, \frac{U_{\rm time}}{T}\right)\,.
\end{equation}

The group 
$G_{\rm unit}^{\rm (s)}$
characterizes the self-similar solutions 
of the first kind. 
We can introduce a quotient space,
\begin{equation}
M_\pi \coloneqq M_{\rm sol} / G^{\rm (s)}_{\rm unit} ,
\end{equation}
which is the space of $G^{\rm (s)}_{\rm unit}$ orbits.
The dimensionless parameters $\bm \pi$ are the 
global coordinate of the quotient space, 
$\bm \pi \in M_\pi$. 

When the set of scale transformations preserving $\bm F(\bm z) = \bm 0$
is identical to the scale transformations of units,
i.e., $G^{\rm (s)} = G^{\rm (s)}_{\rm unit}$, 
the system only has {\it self-similar solutions 
of the first kind}\footnote{\changed{Note that similarity of the first kind corresponds to the case in which physical parameters and units posses the same structure of self-similarity as $G^{\rm (s)}_{\rm unit}$ is equivalent to the dimension functions that preserve the similarity under the scaling units. This is the reason why, in some case, the self-similarity can be exploited by simply applying dimensional analysis even though dimensional analysis should be nothing but the procedure to obtain the invariants of the scale transformation of the units, not the physical parameters. These are the cases when the problems belong to the similarity of the first kind.}}. 

In general, there can be scale transformations 
beyond $G^{\rm (s)}_{\rm unit}$. 
Such scale transformations lead to
{\it self-similar solutions of the second kind}\footnote{Similarity of the first kind and the second kind are a category of self-similarity. Barenblatt formulated this category by the convergence of their dimensionless functions $\Phi$. When a dimensionless function obtained by dimensional analysis $\Phi(\eta, \xi)$ converges to a finite limit as $\eta \to 0$ or $\infty$, it corresponds to {\it similarity of the first kind} or {\it complete similarity}. On the other hand, when $\Phi(\eta, \xi)$ does not satisfy the complete similarity but the convergence is recovered by introducing similarity parameters composing dimensionless parameter as $\xi / \eta^{\epsilon}$, it corresponds to {\it similarity of the second kind} or {\it incomplete similarity}. This formulation is identical with our formulation. 
}.
Possible self-similar solutions of the second kind 
are characterized by the following quotient group,
\begin{equation}
G^{\rm (s)}_{\rm second}
\coloneqq G^{\rm (s)} / G^{\rm (s)}_{\rm unit} .
\end{equation}
The group $G^{\rm (s)}_{\rm second}$ cannot be determined
by the dimensional analysis, and hence it is a consequence of a nontrivial physical law.
An element $\varphi \in G^{\rm (s)}_{\rm second}$ 
induces a nontrivial transformation on $\bm \pi$, 
\begin{equation}
\pi_i \mapsto (\varphi \cdot \pi)_i
= 
\varphi_i (\bm A)  \pi_i .
\end{equation}
where $\bm A$ are scale factors.
By repeating the same argument with the dimension function (see Sec.~\ref{sub_sec:self-similarity}),
one can show that $\varphi_i$ is always written as a power-law monomial,
\begin{equation}
\varphi_i (\bm A)
= 
\prod_{\alpha} (A_\alpha)^{\gamma_{i}^{(\alpha)}} .
\label{eq:varphi-action}
\end{equation}

Let us introduce the quotient space 
where all the scale transformations are modded out,
\begin{equation}
\bar M = M_{\rm sol} / G^{\rm (s)} . 
\end{equation}
On $\bar M$, all scale transformations act trivially.
We can introduce a coordinate on $\bar M$, $\bm Z \in \bar M$,
which is defined so that it is invariant under
all the scale transformation $\sigma \in G^{\rm (s)}$ 
preserving $\bm F (\bm z) = \bm 0$,
\begin{equation}
\bm Z (\sigma \cdot \bm z) 
= \bm Z (\bm z).
\end{equation}
Since the action of $G^{\rm (s)}_{\rm second}$ on $\bm \pi$ is given 
by Eq.~\eqref{eq:varphi-action}, 
$Z_i$ is given by products some powers of dimensionless parameters. 
The construction of these parameters 
can be done in a completely similar manner to the introduction of $\bm \pi$.
Namely, we can separate dimensionless parameters $\bm \pi$ 
into two groups, 
\begin{equation}
\bm \pi = 
\begin{pmatrix}
{\bm \pi}^{(\I )} \\
{\bm \pi}^{(\II )}
\end{pmatrix}, 
\end{equation}
where the number of the components 
of the first group ${\bm \pi}^{(\I )}$ 
is given by the dimension of $G^{\rm (s)}_{\rm second}$, 
\begin{equation}
N \coloneqq {\rm dim\,} G^{\rm (s)}_{\rm second},  
\end{equation}
\changed{and the component of ${\bm \pi}^{(\I )}$ must be functionally independent one another, which means that the component of ${\bm \pi}^{(\I )}$ cannot be rescaled by the power-law monomials composed of the other components of ${\bm \pi}^{(\I )}$ to obtain invariant parameters under $G^{\rm (s)}$\footnote{\changed{This argument is exactly the repartition of $z^{(\I )}$, the parameters with independent dimensions in Subsec. B. In a more general discussion, it corresponds to the functionally independent parameters on the scale transformation, with which the parameters cannot be rescaled by other parameters to obtain invariant parameters. The number of parameters that can be reduced is equivalent to the dimension of the orbit of the scale transformation acting on the parameters. See pp. 86-89 of Ref.~\cite{MR836734}.}}.}
One can introduce invariant parameters under $G^{\rm (s)}$ 
by rescaling $\pi^{(\II )}_i$ by a product of powers of 
$\pi^{(\I )}_i$,
\begin{equation}
Z_i 
= 
\pi^{(\II )}_i \prod_j (\pi^{(\I )}_j)^{r_{\mu j}} .
\end{equation}
The function $\Phi (\bm \pi)$
can be written in the form, 
\begin{equation}
\Phi_k (\bm \pi) 
= \prod_i (\pi^{(\I )}_i)^{r_{ki}} \bar\Phi_k (\bm Z).
\end{equation}
On the quotient space $\bar M$, the following $\bar G$ acts:
\begin{equation}
\bar G \coloneqq G /  G^{\rm (s)} . 
\end{equation}
The quotient $\bar G$ forms a group since $G^{\rm (s)}$
is a normal subgroup of $G$ as we show below:
\begin{prop}
The group of scale transformations $G^{\rm (s)}$ 
is a normal subgroup of $G$. 
\end{prop}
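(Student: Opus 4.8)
The plan is to show that $G^{\rm (s)}$ satisfies the defining property of a normal subgroup: for every $g \in G$ and every $\sigma \in G^{\rm (s)}$, the conjugate $g \sigma g^{-1}$ again lies in $G^{\rm (s)}$. Since $G^{\rm (s)}$ is defined as the subgroup of \emph{scale} transformations preserving $\bm F (\bm z) = \bm 0$, I must verify two things about the conjugate: first, that $g \sigma g^{-1}$ preserves the solution set, and second, that it is again a scale transformation, i.e.\ that it acts diagonally as $z_i \mapsto \alpha_i z_i$ with positive $\alpha_i$.

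First I would dispose of the solution-preserving property, which is the easy half. Because $G$ is a group and $\sigma \in G^{\rm (s)} \subset G$, the element $g \sigma g^{-1}$ is automatically in $G$, and membership in $G$ already means it maps solutions to solutions. So the only genuine content is the second requirement: that conjugation by an arbitrary symmetry $g$ sends a scale transformation to a scale transformation. The cleanest route is to pass to the Lie-algebra level. Since we have restricted attention to connected Lie groups of symmetries, $G^{\rm (s)}$ corresponds to the subalgebra $\mf g^{\rm (s)}$ of the Lie algebra $\mf g$ of $G$ spanned by the generators of scale transformations. Normality of the connected subgroup is then equivalent to $\mf g^{\rm (s)}$ being an ideal of $\mf g$, i.e.\ $[\mf g, \mf g^{\rm (s)}] \subseteq \mf g^{\rm (s)}$, and this is what I would establish.

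The key step is to characterize $\mf g^{\rm (s)}$ intrinsically. A scale transformation $z_i \mapsto \e^{s\lambda_i} z_i$ is generated by a vector field of the form $v = \sum_i \lambda_i z_i \, \p_{z_i}$, that is, by the Euler-type (linear, diagonal) vector fields on $M$. I would argue that these are exactly the generators whose flow is diagonal and multiplicative, so $\mf g^{\rm (s)}$ is the intersection of $\mf g$ with the space of diagonal linear vector fields. The main computation is then to show that bracketing such a $v$ with an arbitrary symmetry generator $w \in \mf g$ produces another diagonal linear vector field lying in $\mf g$. That $[w,v] \in \mf g$ is immediate since $\mf g$ is a Lie algebra; the real work is showing the bracket stays \emph{diagonal and linear}. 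Here I would invoke the explicit power-law form of the scale action established earlier: the generators of $G^{\rm (s)}_{\rm unit}$ and of $G^{\rm (s)}_{\rm second}$ act on the $\bm z$ (and hence on the $\bm \pi$) by the monomial rescalings of Eqs.~\eqref{eq:scale-transformation-of-units} and~\eqref{eq:varphi-action}, so that $\mf g^{\rm (s)}$ is spanned by commuting diagonal generators $D_a = \sum_i c^{(a)}_i z_i \p_{z_i}$.

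The main obstacle is precisely this diagonality-preservation step: a priori a general symmetry $g$ need not be a scale transformation, so conjugation could distort the diagonal action into something nonlinear, and one must rule this out. I expect the resolution to come from the abelian structure of scaling together with the invariant description of $\mf g^{\rm (s)}$. Concretely, $\mf g^{\rm (s)}$ is the maximal abelian subalgebra of $\mf g$ consisting of diagonalizable (semisimple) elements that act as dilations on the coordinates; equivalently it is the space of generators fixing the projectivized/dimensionless directions in the prescribed power-law manner. Since an automorphism of $\mf g$ induced by $\mathrm{Ad}_g$ must preserve such an intrinsic, canonically defined piece, it follows that $\mathrm{Ad}_g \, \mf g^{\rm (s)} \subseteq \mf g^{\rm (s)}$, giving the ideal property and hence normality. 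If an invariant characterization proves delicate, the fallback is a direct computation in the power-law coordinates $(\bm z^{(\I )}, \bm \pi, \bm Z)$, checking on generators that the bracket of any symmetry generator with a diagonal scaling generator closes back into the scaling generators.
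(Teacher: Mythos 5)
Your reduction of the problem is fine as far as it goes---the only real content is that conjugation by a general symmetry $g$ sends a scale transformation back to a scale transformation---but the step you yourself flag as ``the main obstacle'' is exactly where the argument fails, and your proposed resolution does not repair it. ``Acting diagonally in the coordinates $z_i$'' is not an intrinsic Lie-algebraic property: it depends on the realization of $\mf g$ as vector fields in a fixed coordinate system, so there is no a priori reason for $\mathrm{Ad}_g$ to preserve it. Worse, your invariant characterization of $\mf g^{\rm (s)}$ as ``the maximal abelian subalgebra of semisimple elements acting as dilations'' describes a Cartan-type subalgebra, and such subalgebras are never canonical---they are unique only \emph{up to conjugation}, which is the opposite of what you need. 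Concretely, take $M=\mathbb R_{>0}^2$ with a trivial constraint, so that every invertible transformation is a symmetry: the diagonal scalings form a maximal torus of $GL(2,\mathbb R)\subset G$, yet conjugating $\mathrm{diag}(2,1)$ by a rotation produces a non-diagonal map. So the appeal to a ``canonically defined piece'' proves nothing, and your fallback---``direct computation that the brackets close back into the scaling generators''---is precisely the assertion to be proved, not a proof; in the example above it is false. Any correct argument must use what is special about $G^{\rm (s)}$ being the group of \emph{all} scale transformations preserving $\bm F(\bm z)=\bm 0$, together with the structure of its invariants, not just the fact that its elements are diagonal.

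That special structure is what the paper's proof exploits and your proposal never touches. The paper characterizes $G^{\rm (s)}$ through the scale-invariant coordinates $\bm Z$ on $\bar M = M_{\rm sol}/G^{\rm (s)}$: scale transformations are exactly the transformations acting trivially on $\bm Z$. Since $\sigma$ acts trivially on $\bm Z$, so does $g\sigma g^{-1}$ (the outer $g$ and $g^{-1}$ cancel at the level of the induced action on the $\bm Z$ variables), and a transformation acting trivially on $\bm Z$ is realized pointwise on $M_{\rm sol}$ by an element of $G^{\rm (s)}$. In other words, $G^{\rm (s)}$ is exhibited as the kernel of the induced $G$-action on the quotient, and kernels of actions are automatically normal; no coordinate computation about preserving diagonality is ever made. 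To salvage your Lie-algebraic route you would have to prove $[\mf g, \mf g^{\rm (s)}]\subseteq \mf g^{\rm (s)}$ using the maximality of $G^{\rm (s)}$ among scale transformations preserving the relation and the fact that the $\bm Z$ cut out its orbits---at which point you would be reconstructing the paper's kernel argument in infinitesimal form. As it stands, the gap is essential, not cosmetic.
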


\begin{proof}
Since $\sigma \in G^{\rm (s)}$ acts trivially on $\bm Z$, we have
\begin{equation}
(g \sigma g^{-1}) \cdot \bm Z = \bm Z. 
\end{equation}
This implies that the action of $g\sigma g^{-1}$ 
on $\bm z \in M_{\rm sol}$ should be of the form
\begin{equation}
(g \sigma g^{-1}) \cdot \bm z = \sigma' \cdot \bm z ,
\label{eq:g-s-gi-sp}
\end{equation}
where $\sigma' \in G^{\rm (s)}$. 
Equation \eqref{eq:g-s-gi-sp} means that 
$G^{\rm (s)}$ is a normal subgroup of $G$.
\end{proof}

Let us summarize the discussion in this section
in the following diagram,
\begin{equation}
G 
\curvearrowright 
M_{\rm sol} 
\,\,\,\,
\longrightarrow
\,\,\,\,
G_\pi
\curvearrowright 
M_\pi 
\,\,\,\,
\longrightarrow
\,\,\,\,
\bar G
\curvearrowright 
\bar M
\end{equation}
where $\curvearrowright$ indicates a group action, 
and we have defined 
\begin{equation}
G_\pi \coloneqq G / G^{\rm (s)}_{\rm unit}.
\end{equation}
When we refer to achieving a {\it data collapse,} 
we are describing the process of
obtaining $\bar G \curvearrowright \bar M$ 
from $G \curvearrowright M_{\rm sol}$.
This means to fully exploit 
scaling symmetries existing in the phenomenon 
of interest to obtain a reduced state space.
The remaining symmetry group $\bar G = G / G^{\rm (s)}$
no longer contains scale transformations. 

In practice, this means that, 
when we plot the data points using newly introduced parameters 
$\bm Z$, all the data points converge to
a surface determined by $\bar {\bm \Phi} = \bm 0$, which we call a {\it scaling function}.

Note that there is a hierarchical structure (see Fig.~\ref{fig:F1b}). \changed{{\it Physical parameters} $\bm z$ are composed of a product of a numerical number and a physical units, which is not invariant under scale transformation of units}. 
While $\bm \pi$ that are composed of $\changed{\bm z}$
are invariant under the scale transformations of units $G^{\rm (s)}_{\rm unit}$,
$\bm Z$ are composed of $\bm \pi$ and 
are invariant under all the scale transformations.
We will refer to $\bm \pi$ and $\bm Z$
as the {\it similarity parameters of the first \changed{class}} and {\it similarity parameters of the second \changed{class}}, respectively \cite{Maruoka_2023}.

\begin{figure}
\includegraphics[width=8.5cm]{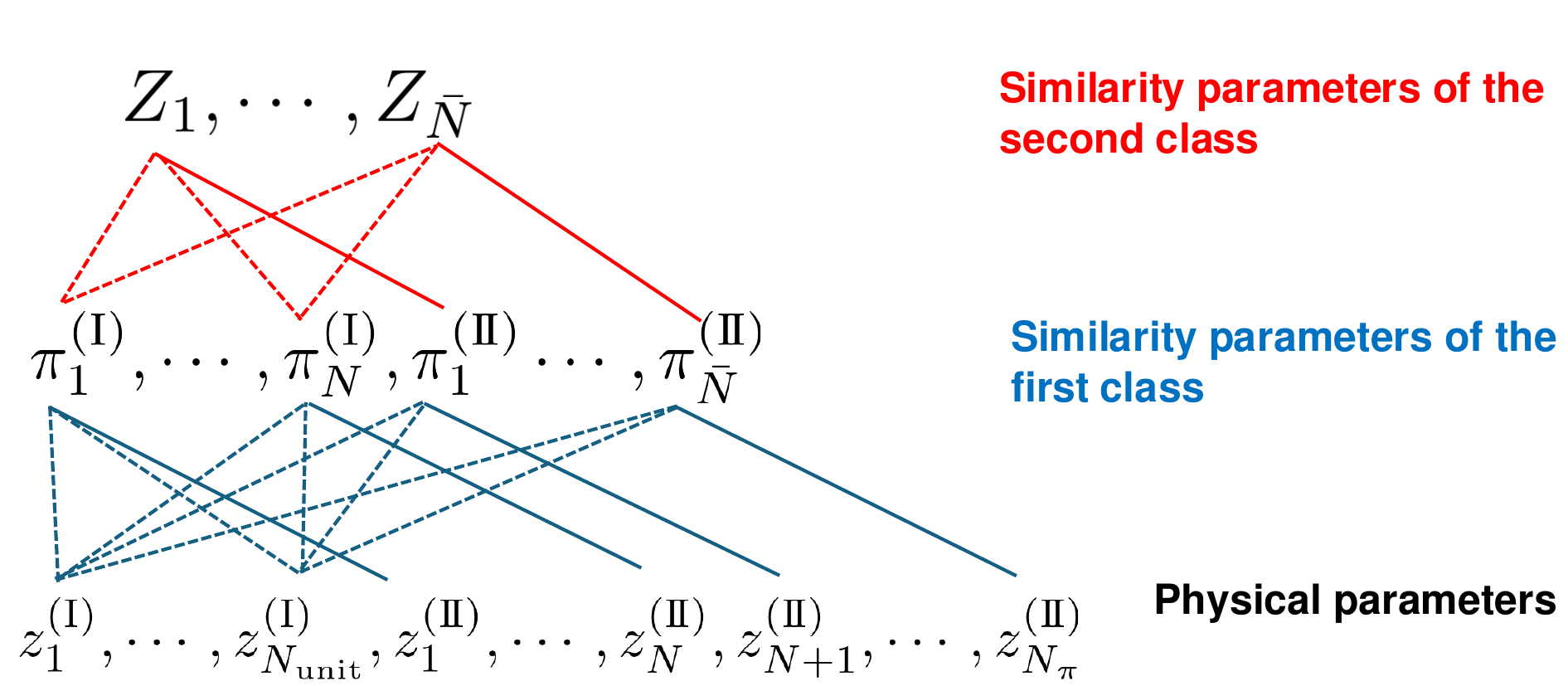}
\captionsetup{justification=raggedright,singlelinecheck=false}
\caption{Hierarchical structure of similarity parameters. \changed{Physical parameters $\bm z =  \left( z_1, \cdots, z_{N} \right)$ are parameters that are composed of a product of a numerical number and a physical unit. 
They are not invariant under the scale transformation of units.} $\bm \pi = \left( \pi_1, \cdots,\pi_{N_\pi} \right)$  
are referred to as similarity parameters of the first \changed{class}, that are composed of physical parameters and are invariant under the scale transformations of units. $\bm Z = \left(Z_1, \cdots, Z_{\bar N} \right)$ are the similarity parameters of the second \changed{class}, that are composed of similarity parameters of the first class and are invariant under the all the scale transformations where $\bar N \coloneqq N_\pi - N$. \changed{The dashed lines indicate the parameters with independent dimensions of their group (group $\I$) and the solid lines indicate the remaining parameters (group $\II$)}. 
} 
\label{fig:F1b}
\end{figure}

\subsection{ Infinitesimal group action }

Let us discuss infinitesimal actions of scale transformations. 
We here discuss 
the action of $G_\pi$ on $M_\pi$, 
$G_\pi \curvearrowright M_\pi$,
because we are interested in the similarity of the second kind.
There is a subgroup
$G^{\rm (s)}_{\rm second} \subset G_\pi$ of 
scale transformations,
that characterizes self-similar solutions of the second kind.
A generator of $G^{\rm (s)}_{\rm second}$ can be written as
\begin{equation}
\hat{\bm p}_\alpha
\coloneqq \sum_i p_{\alpha i} \pi_i \frac{\p}{\p \pi_i} ,
\label{eq:gen-scale}
\end{equation}
where $p_{\alpha i} \in \mathbb R$. 
Finite scale transformations can be generated by Eq.~\eqref{eq:gen-scale} as 
\begin{equation}
 \pi_i
 \mapsto 
 e^{\sum_\alpha \epsilon^\alpha \hat{\bm p}_\alpha } \pi_i 
 = 
 e^{\sum_{\alpha} \epsilon^\alpha p_{\alpha i} } \pi_i .
\end{equation}

We shall choose a set of vectors
$\{ \bm r_\mu \}_{\mu =1, \ldots, \bar N}$ (where $\bar N \coloneqq N_\pi - N$)
so that it is a basis of the orthogonal complement of 
${\rm span}
\{ 
\bm p_\alpha
\}_{\alpha=1,\ldots,N}
$, i.e., 
\begin{equation}
\left(
{\rm span}
\{ 
\bm p_\alpha
\}_{\alpha=1,\ldots, N}
\right)^\perp
= 
{\rm span}
\{ 
\bm r_\mu
\}_{\mu =1, \ldots, \bar N} .
\end{equation}
Specifying the generators 
is equivalent to specifying $\{\bm r_\mu \}_{\mu =1, \ldots, \bar N}$.

\section{ Finding self-similarity with neural networks }\label{sec:method}

\subsection{ Outline of the strategy }\label{sec:strategy}

As discussed in previous section, the identification of power exponents of similarity parameters is sufficient to exploit scale-invariance of the problems.
By the dimensional analysis, one can find the self-similar solutions of the first kind. 
When the problems belong to similarity of the second kind, the corresponding power-law exponents
reflect a nontrivial physical law.
Our primary goal is to obtain the power exponents of similarity parameters of the second \changed{class}
in a data-driven way.
Let us outline the strategy of our method in the following.

Firstly, we need to collect data relevant to the problem under study in a possibly wide range of parameters.
The data can come from either numerical simulations or actual experiments. Then we assume that we have sufficient amount of data plots of governing parameters, which are a minimum set of physical parameters to describe the problem.

Next, we perform dimensional analysis 
and introduce dimensionless parameters $\bm \pi$.
Then, we check if the system allows for self-similar solutions of the second kind, which means to identify the similarity parameters of the second \changed{class}.
Traditionally, this step involves making assumptions based on a physical law that is suitable for the specific problem, which can lead to biases.

To determine the power exponents of similarity parameters of the second \changed{class}, the framework of  crossover of scaling laws summarized in Ref.~\cite{Maruoka_2023} is useful. 
In many physical problems, there are certain region of parameters in which the dependencies on some parameters are weak and can be neglected, which results in a simple scaling law.
In such an ideal region there is a constant dimensionless number composed of the scaling law,
\begin{equation}
\frac{\pi}
{\pi_1^{q_1}\cdots\pi^{q_N}_{N}}
= {\rm const.}.
\label{eq:e5}
\end{equation} 
In such a case, the power exponents of similarity parameter can be easily identified by log-log plot. Or the scaling law can occasionally be obtained analytically as an idealized problem. Such a condition is generally {\it asymptotically} realized in a certain range of physical parameters, in which their scaling function $\Phi$ converges to a finite limit as similarity parameters goes zero or infinity, $\Phi \rightarrow {\rm const}$ as $\pi_k \ll 1$. 
The scaling law asymptotically valid in a certain scale range is called {\it intermediate asymptotics}\footnote{Recall that the ideal gas equation of state is intermediate asymptotics when the volume of molecule $b$ and molecular interaction $a$ are ignored in the van der Waals equation, $p = \frac{nRT}{V-nb}-\frac{an^2}{V^2} \rightarrow \frac{nRT}{V}~\left(\frac{an^2}{V^2} \ll p \ll \frac{RT}{b} \right)$.}.

Nonetheless, this idealized scaling law is disrupted when a similarity parameter exceeds its ideal region and begins to interfere. This disruption of the idealized scaling law might merely break the self-similarity, or it could give rise to a different scaling law.
Such a situation is frequently observed in experiments and simulations that deal with 
a broad range of parameters.
This phenomenon is referred to as the crossover of scaling laws.

Based on this framework, if we start from a scaling law in idealized range, 
where $\pi/\pi_1^{q_1}\cdots\pi_N^{q_N}={\rm const.}$ holds, and introducing a similarity parameter using the scaling law, we can identify at least one similarity parameter in the whole problem as
\begin{equation}
\Psi = \frac{\pi}{\pi_1^{q_1}\cdots\pi_N^{q_N}},
\label{eq:e6}
\end{equation}
which we call a {\it fixed similarity parameter}. 
Then the problem now is to identify other similarity parameters that breaks the asymptotics.
The breaking of the asymptotics is driven by a parameter which did not contribute the problem in the idealized situation.
Let us call such a parameter which starts to interfere as 
a {\it driving parameter}, $\pi_{D}$.
The interfering similarity parameter must include this parameter and it should be of the form 
\begin{equation}
Z = \frac{\pi_{D}}{\pi_1^{p_{1}}\cdots\pi_N^{p_{N}}},
\label{eq:e7}
\end{equation}
which we call an {\it interfering similarity parameter}.
For notational simplicity, we here describe the case when there is only one interfering similarity parameter. In general, there can be multiple ones, and the present method also applies to these situations. 
In Sec.~\ref{sec:example-two-combinations}, we analyze such examples.

The exponents $p_1,\cdots,p_N$ characterize the scaling law
that leads to self-similar solutions of the second kind.
Using neural networks, we estimate these exponents from the provided data.
The construction of these neural networks will be detailed in Sec.~\ref{sec:method-nn}.

The above procedure can be summarized as follows:
\begin{itemize}
    \item[1)] Collect the sufficient data of simulation or experimental data to describe the phenomenon of interest, 
    and identify governing parameters. 
    \item[2)] Apply dimensional analysis to governing parameters to obtain dimensionless parameters.
    \item[3)]
    Assuming that there can be nontrivial scale transformations beyond the scale transformations of units,
    find the asymptotic scaling law in an idealized region, $\pi = {\rm const.} \pi_1^{q_1}\cdots\pi_N^{q_N}$.
    \item[4)] Introduce a dimensionless number as fixed similarity parameter using scaling law in 3), 
    $\Psi = \frac{\pi}{\pi_1^{q_1} \cdots \pi_N^{q_N}}$.
    \item[5)] Identify the driving parameter $\pi_D$. 
    Then defining an interfering similarity parameter including $\pi_D$ as $ Z = \frac{\pi_{D}}{\pi_1^{p_1}\cdots\pi_N^{p_N}}$, assume the following self-similar solutions, $\Psi = \Phi\left[Z\left(p_1,\cdots, p_k \right)\right]$ with unknown power exponents $p_1,\cdots, p_k$.
    \item[6)] Determine the power exponents through the optimization of a neural network (see Sec.~\ref{sec:method-nn} for detail). 
\end{itemize}
\begin{figure}
\includegraphics[width=8.5cm]{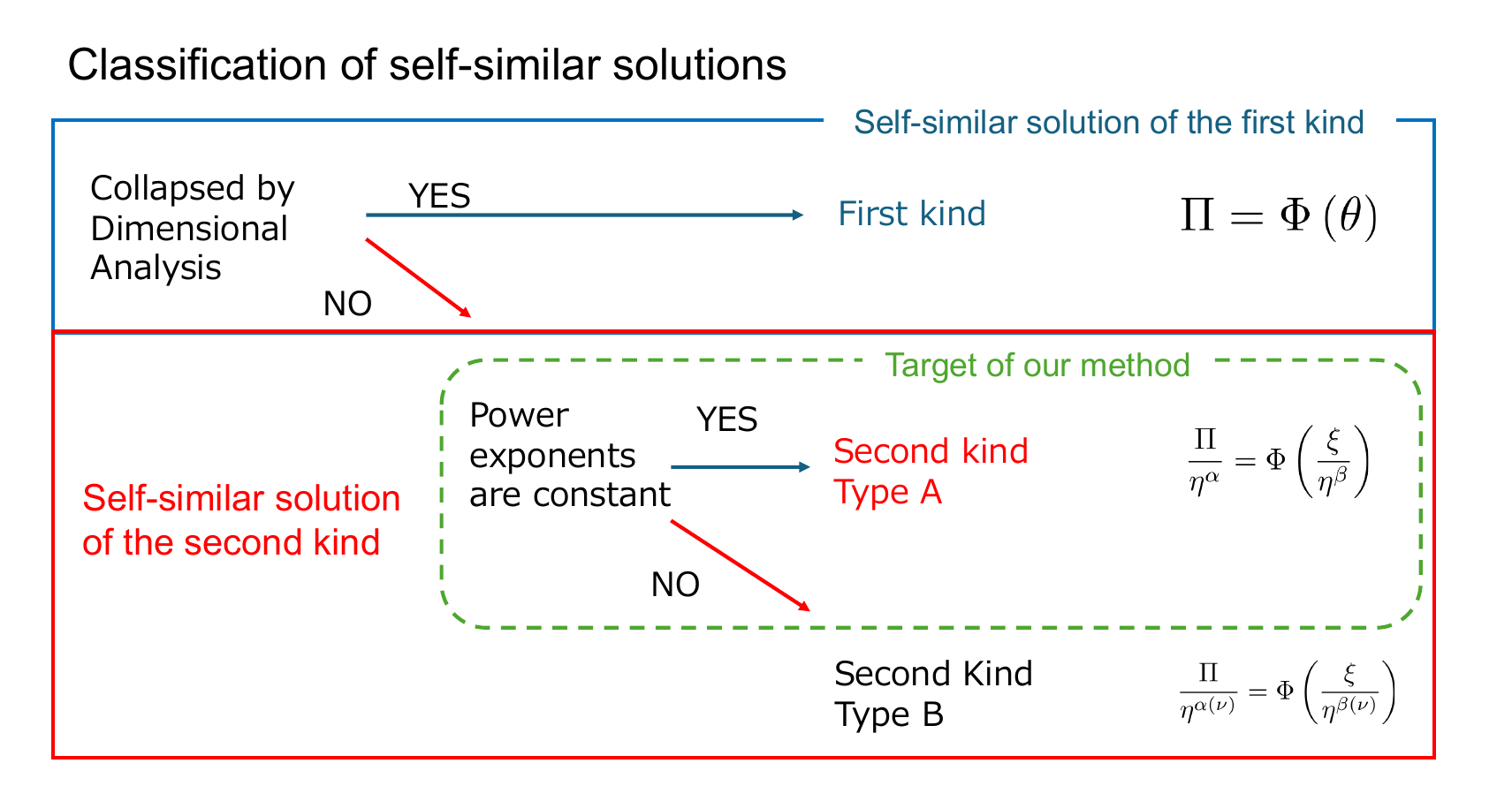}
\caption{Classification of self-similar solutions. The first branch asks whether data collapse is achieved by dimensional analysis. If this is the case, the problem is self-similar solution of the first kind. If not, the problem possesses self-similar solutions of the second kind. The latter case can be further divided into two types. Type A corresponds to the cases when power-law exponents of similarity parameters are constant, while in the case of Type B power-law exponents are functions of dimensionless parameters. The present neural network approach targets the determination of power-law exponents for problems under Type A.}
\label{fig:F1}
\end{figure}
If we succeed in finding the exponents leading to good data collapse, this means that the problem has similarity of the second kind.

In Fig.~\ref{fig:F1}, we summarize the classification of possible situations regarding data collapse.
First branch asks if dimensional analysis leads to data collapse. If this is the case, the problem belongs to similarity of the first kind.
When data collapse is not achieved by dimensional analysis, the physical problem has self-similarity of the second kind. These cases can be further classified into two subcategories depending on whether power-law exponents of similarity parameters are constant (Type A) or nontrivial functions of dimensionless parameters (Type B).
The present approach aims to find self-similar solutions of the second kind, Type A.
For physical problems under Type B, the approach remains effective provided there exists a parameter region where the power-law exponents can be regarded as constants.

\subsection{Neural network approach}\label{sec:method-nn}

Let us now give the details of the construction and optimization of neural networks for finding self-similarity.
For the simplicity of notations, we here describe the case when 
there is only one interfering similarity parameter\footnote{The construction can be readily extended to more general cases. We discuss examples of two interfering similarity parameters in
in Sec.~\ref{sec:example-two-combinations}.
}. 
Suppose that we have a system with $N_\pi$ dimensionless parameters
$\{\pi_i \}_{i=1,\ldots,N_\pi}$
and there is a physically significant relationship of the form
\begin{equation}
\prod_i (\pi_i)^{q_i}
= 
\Phi \left( \prod_i (\pi_i)^{p_i} \right),
\label{eq:F=G}
\end{equation}
where the vectors $\bm p$ and $\bm q$ are linearly independent.
The system has self-similar solutions associated with this relation.
Namely, we can rescale parameters,
\begin{equation}
\pi_i \mapsto e^{c_i} \pi_i,
\label{eq:scaling}
\end{equation}
in such a way that 
the both of the combinations 
$\prod_i (\pi_i)^{p_i}$ and $\prod_i (\pi_i)^{q_i}$ are invariant, 
and the rescaled parameters remain to be a solution.
To discuss scale transformations,
it is convenient to use the log of the original parameters, 
\begin{equation}
x_i \coloneqq \ln \pi_i 
\quad
\text{for each $i$.}
\end{equation}
Then, Eq.~\eqref{eq:F=G} can be rewritten as
\begin{equation} 
\bm q \cdot \bm x = \phi \left(\bm p \cdot \bm x \right), 
\label{eq:f-ax=g-bx}
\end{equation}
where 
$\phi(y)\coloneqq \ln \Phi(e^y)$. 
A scale transformation~\eqref{eq:scaling}
can be now represented as a shift of vector $\bm x$,
\begin{equation}
 \bm x \mapsto \bm x + \bm c.
 \label{eq:x-shift}
\end{equation}
Finding the scale invariance amounts to
finding a set of vectors $\{\bm c\}$
such that the combinations $\bm p \cdot \bm x$
and $\bm q \cdot \bm x$ are left invariant.
Namely, $\bm c$ has to be orthogonal to both $\bm p$ and $\bm q$,
$\bm c \cdot \bm p = \bm c \cdot \bm q = 0$.
We will denote the set of independent vectors with this property
by $\{ \bm c^{(\alpha)}\}_{\alpha=1,\ldots,N_\pi -2}$.
The scale invariance of a system is characterized 
by the following vector space $\mathcal C$,
\begin{equation}
\begin{split}    
\mathcal C
&\coloneqq 
{\rm span}
\{ \bm c^{(\alpha)} \}_{\alpha = 1, \ldots, N_\pi -2} 
\\
&= 
({\rm span}\{\bm p\})^\perp
\cap 
({\rm span}\{\bm q\})^\perp
\\
&= 
({\rm span}\{\bm p, \bm q \})^\perp ,
\end{split}
\end{equation}
where $V^\perp$ denotes the orthogonal complement of a vector space $V$.
Any element $\bm c \in \mathcal C$
gives rise to a scale transformation 
preserving the physical relation 
given by Eq.~\eqref{eq:F=G} (or equivalently Eq.~\eqref{eq:f-ax=g-bx}).

We note that the choice of $\{\bm p, \bm q\}$ 
that leads to a single $\mathcal C$ is not unique.
Obviously, the vectors $\bm p$ and $\bm q$ can be multiplied by any nonzero real numbers, $c,c' \in \mathbb R$, as
\begin{equation}
\bm p \mapsto c \bm p, \quad 
\bm q \mapsto c' \bm q. 
\end{equation}
Furthermore, we will have 
$
\mathcal C(\{\bm p, \bm q\})  
= 
\mathcal C(\{\bm p, \bm q' \})  
$ 
with 
\begin{equation}
\bm q' = \bm q + c \bm p ,
\end{equation}
where $c \in \mathbb R$. 
In general, any choice $\{\bm p', \bm q'\}$ will give rise to the same $\mathcal C = ({\rm span}\{\bm p, \bm q \})^\perp$  as long as 
${\rm span}\{\bm p', \bm q'\} = {\rm span}\{\bm p, \bm q\}$. 

\begin{figure*}[tb]
\centering
\includegraphics[width=14.5cm]{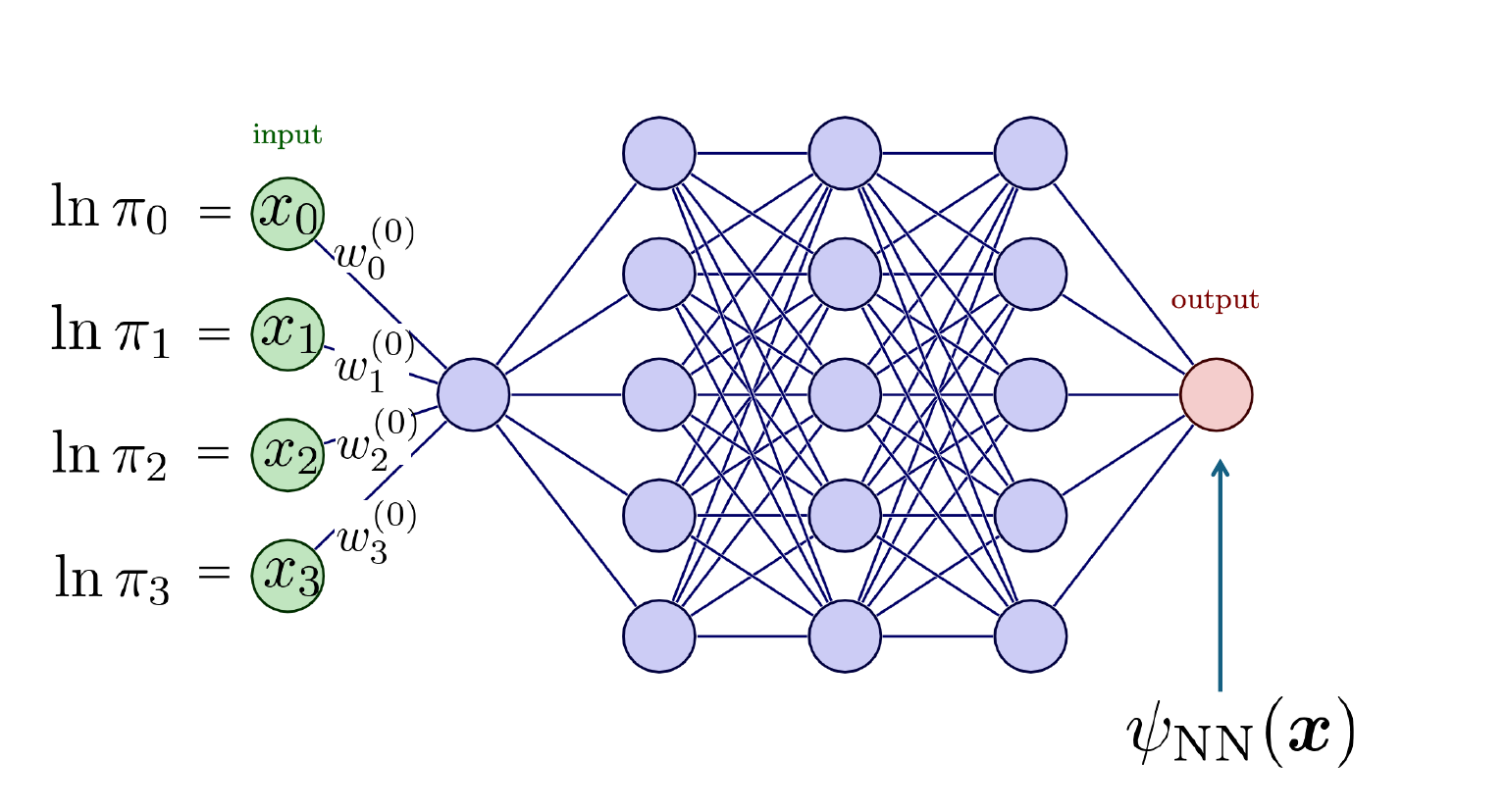}
\caption{Structure of the neural network for the case where the scaling function has single argument. In the case where the scaling function has two arguments, we should modify the structure so that the second layer has two nodes.}
\label{fig:nn-model}
\end{figure*}

The question we address here is whether
we can find the relation of the form~\eqref{eq:F=G} or equivalently~\eqref{eq:f-ax=g-bx} for given experimental or simulated data.
Let us consider the situation where $\bm q$ is already known, i.e., $\bm q$ represents a fixed similarity parameter.
The problem is to infer the vector $\bm p$ (up to the ambiguity) based on the data.
Denoting the fixed similarity parameter as
\begin{equation}
\Psi \coloneqq  \prod_i (\pi_i)^{q_i} ,
\end{equation}
the physically significant relation we would like to find is
\begin{equation}
\Psi = 
\Phi \left( \prod_i (\pi_i)^{p_i} \right) . 
\label{eq:physically_significant_relation}
\end{equation}
By taking the log of the parameters, we have 
\begin{equation}
\psi = \phi \left( \sum_i p_i x_i \right). 
\label{eq:z-f-ax}
\end{equation}
We parametrize the function \eqref{eq:z-f-ax} 
with a neural network (see Fig.~\ref{fig:nn-model}),
\begin{equation}
\psi_{\rm NN}(\bm x) 
\coloneqq 
\phi_{\rm NN} \left( \sum_i w^{(0)}_i x_i \right).
\label{eq:z-nn-def}
\end{equation}
Here, 
to enforce the functional dependence of Eq.~\eqref{eq:z-f-ax}, we chose the second layer to be of a single node, as shown in Fig.~\ref{fig:nn-model}.
Namely, the function $\psi_{\rm NN} (\bm x)$ 
is written by a function $\phi_{\rm NN}$ which 
has a single argument that takes a linear combination of $x_i$, 
and $w^{(0)}_i$ are parameters of the first linear layer.
We perform a supervised learning with training data $\{\psi_n, \bm x_n\}_{n=1,\ldots,N_{\rm data}}$, which can be experimental or synthetic data, by minimizing the following loss function, 
\begin{equation}
\ell_{\rm MSE} = \frac{1}{N_{\rm data}} 
\sum_{n=1}^{N_{\rm data} }
\left(\psi_n - \psi_{\rm NN}(\bm x_n) \right)^2. 
\label{eq:MSE_loss}
\end{equation}
To ensure that the network learns $\bm p$ rather 
than the already-known combination $\bm q$, we add the following regularization term,
\begin{equation}
\ell_{\rm reg} 
= 
\lambda_1 (|{\bm w}^{(0)}|^2-1)^2
+ 
\lambda_2 |\bm {\bm w}^{(0)} \cdot \bm q|^2.
\label{eq:regularization}
\end{equation}
The first term prevents $w^{(0)}_i$ from becoming too large.
The second term forces ${\bm w}^{(0)}$ to be orthogonal to ${\bm q}$ in order to avoid the trivial training, ${\bm w}^{(0)}=\bm q$ and $\psi_{\rm NN}$ being an identity.
If the neural network is successfully trained, we can expect that the first layer contains the information of $\bm p$.
Specifically, we will have 
\begin{equation}
\bm p \propto \bm w^{(0)} + c \bm q.
\label{eq:ambiguity}
\end{equation}
Here, the shift with $\bm q$ and the proportionality reflect the ambiguity of the neural network function $\psi_{\rm NN}$.
To see this, let us return to Eq.~\eqref{eq:physically_significant_relation}.
Without loss of generality, we can always replace this equation by the following form with a new function $F$:
\begin{equation}
    \Psi = F\left( \Psi^c \prod_i (\pi_i)^{p_i} \right)\,.
\label{eq:G}
\end{equation}
Specifically, $F$ can be constructed as follows.
By solving Eq.~\eqref{eq:physically_significant_relation} and multiplying by $\Psi^c$, one obtains
\begin{equation}
    \Psi^cF^{-1}(\Psi)=\Psi^c \prod_i (\pi_i)^{p_i}\,.
\end{equation}
The LHS is itself a function of $\Psi$. Denoting the inverse of this function as $F$ and solving the equation again, one can obtain the above expression.
This leads to the shift ambiguity of Eq.~\eqref{eq:ambiguity}.
The ambiguity of the proportionality in Eq.~\eqref{eq:ambiguity} is due to the fact that Eq.~\eqref{eq:G} can always be rewritten as
\begin{equation}
    \Psi = H\left( \left(\Psi^c \prod_i (\pi_i)^{p_i}\right)^\varepsilon \right)\,,
\end{equation}
where $H$ is a new function and $\varepsilon$ is any nonzero number.
These ambiguities need to be taken into account when we interpret the result of the training.

In general, the data can be noisy, which gives rise to uncertainties in parameter inference.
The uncertainties of the estimated parameters can be computed by the bootstrapping method, 
which is a statistical technique used for uncertainty quantification by resampling data with replacement. 
It generates multiple samples from a single dataset to estimate the distribution of a statistic 
such as mean or variance.
By repeatedly sampling, the bootstrap method creates a number of ``bootstrap samples,'' each of which is analyzed to produce an estimate. 
These estimates are then used to construct an empirical distribution of the statistic, allowing for the estimation of its variability.

In the following sections, we will illustrate the effectiveness of the procedure through applications to both synthetic and experimental data, as detailed in Sec.~\ref{sec:example-1}.
Moreover, our method can be readily extended to more general situations where the scaling function has multiple arguments. 
This extension is explored through specific examples in Sec.~\ref{sec:example-two-combinations}.

\section{ Example: dynamical impact of experiment of a solid sphere on a viscoelastic board }\label{sec:example-1}

To illustrate the present method, 
let us here discuss the experiment of dynamical impact of a solid sphere to a viscoelastic board~\cite{Maruoka_2023}.
We test our method using synthetic data as well as data obtained in experiments.
We provide source codes for the analyses in this and next sections in Ref.~\cite{repository}.

\subsection{ Setting }

The experiment investigates the response of a viscoelastic board when it is struck by a rigid sphere.
This scenario models the interaction between a rigid sphere and a Maxwell viscoelastic surface of finite thickness, based on the Winkler foundation model where only normal deformations are considered~\cite{Johnson_1985}. 
The observations from this experiment show that there is a crossover of scaling law between maximum deformation and impact velocity for different sizes of spheres. Elastic impact was observed for high-speed impacts with smaller spheres, while viscoelastic impact was observed for lower-speed impacts and larger spheres. This transition was governed by the inverse Deborah number, which includes the size of the sphere and the impact speed. The results were finally summarized by a self-similar solution of the second kind with two similarity parameters.

The setup of this experiment is illustrated in Fig.~\ref{fig:F2}. 
A viscoelastic board with thickness $h$ made of polydimethylsiloxane (PDMS) was placed on the floor.
To ensure the observed deformations stem solely from the material's bulk properties and not from any surface adhesion, 
the surface of the board is lightly coated with chalk.
The dynamical response of the board is modeled by the Maxwell foundation model,
which can be represented by a foundation consisting of Maxwell elements in which a dashpot with its viscous coefficient $\mu$ and an elastic spring with elastic modulus $E$ are serially connected.
A metallic sphere with radius $R$ and density $\rho$ is dropped 
from a set height to collide with the board at velocity $v_i$.
This impact results in the board deforming to a maximum deformation $\delta_m$.

\begin{figure}[t]
\includegraphics[width=0.9\columnwidth]{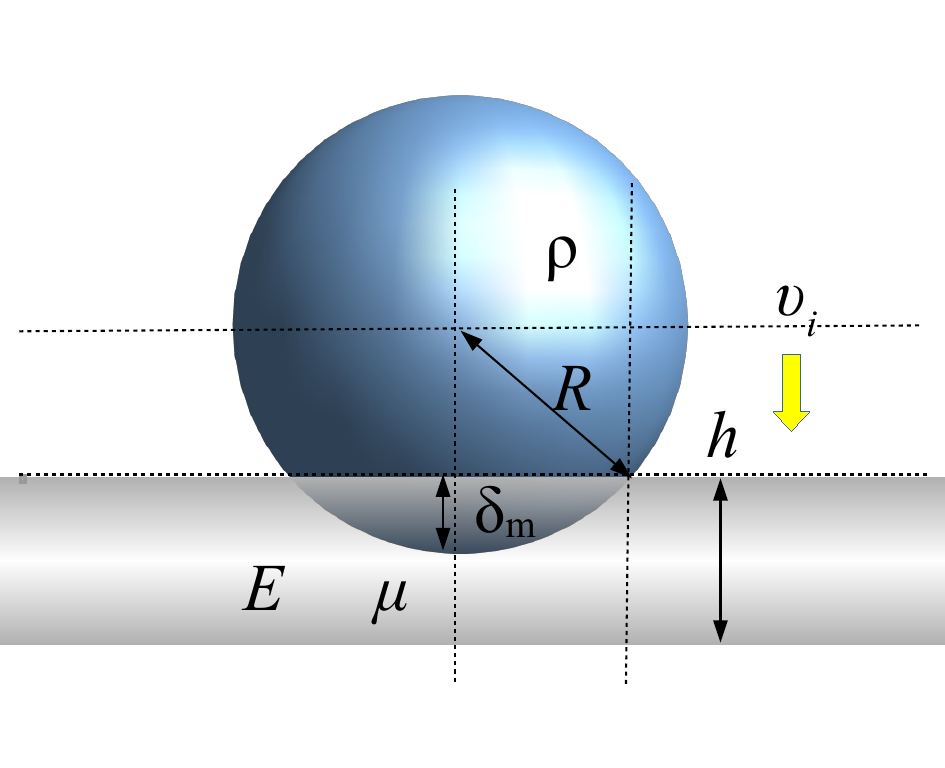}
\caption{The geometrical relation of physical parameters on the collision between a spherical impactor with radius $R$, density $\rho$, the impact velocity $v_i$, and a viscoelastic board with thickness $h$, elastic modulus $E$ and viscous coefficient $\mu$. $\delta_m$ is a maximum deformation after the impact. }
\label{fig:F2}
\end{figure}

In this problem, the maximum deformation can be determined by other parameters through a relation
\begin{equation}    
F\left(\delta_m, h, R, \rho, \mu, E, v_i \right) = 0.
\label{eq:f-rel}
\end{equation}
We introduce the dimensionless parameters as 
\begin{equation}    
\Pi \coloneqq \frac{\delta_m}{R} , \,\,\,\,
\kappa \coloneqq \frac{h}{R}, \,\,\,\,
\eta \coloneqq \frac{\rho v_i^2}{E}, \,\,\,\,
\theta \coloneqq \frac{\mu}{E^{1/2} \rho^{1/2} R}, 
\label{eq:dimless-vars_first}
\end{equation}
\changed{which were obtained by using the following physical parameters with independent dimensions, ${\bm z}^{(\I )} = \left(\rho, R, E \right)$}.
According to Ref.~\cite{Maruoka_2023}, using the dimensionless parameters, 
Eq.~\eqref{eq:f-rel} can be expressed by the following self-similar solution as 
\begin{equation}    
\Psi = \Phi \left( Z \right),
\label{eq:phi-truth}
\end{equation}
where 
\begin{equation}    
\Psi \coloneqq \frac{\Pi^3}{\kappa \eta} = \frac{\delta_m^3 E}{R^2 h \rho v_i^2} , \quad 
Z \coloneqq \frac{\Pi}{\theta \eta^{1/2}} = \frac{E \delta_m}{\mu v_i},
\label{eq:dimless-vars_second}
\end{equation}
and the function $\Phi$ is defined by 
\begin{equation}
\Phi (Z) \coloneqq \frac{2}{3}
\frac{Z}{1 - e^{-Z}} .
\label{eq:phi_EPJE}
\end{equation}
The problem belongs to the similarity of the second kind, which is the target of our approach.

Let $U_{\rm length},U_{\rm mass},U_{\rm time}$ be a chosen set of units for length, mass and time. Suppose that we are rescaling the units as
\begin{equation}
    (U_{\rm length},U_{\rm mass},U_{\rm time})
    \mapsto
    \left(\frac{U_{\rm length}}{L}, \frac{U_{\rm mass}}{M}, \frac{U_{\rm time}}{T}\right)\,.
    \label{eq:unit-rescale}
\end{equation}
In this case, the dimension function $\phi$ acting on the set of physical parameters
\begin{equation}    
\bm z = \left( \delta_m, h, R, \rho, \mu, E, v_i  \right)^\top\,
\end{equation}
is given as
\begin{equation}    
\phi= \left(L,\,L,\,L,\,\frac{M}{L^3},\,\frac{M}{LT},\,\frac{M}{LT^2}, \frac{L}{T} \right)^{\top}.
\end{equation}
An element $\varphi\in G_{\rm second}^{(\rm s)}$ acting on the dimensionless parameters $\bm{\pi} =(\Pi,\kappa,\eta,\theta)^{\top}$ is
\begin{equation}
\varphi = \left(A, A^3B^{-1}, B, AB^{-1/2} \right)^{\top}\,.
\end{equation}
One can easily check that this $\varphi$ leaves Eq.~\eqref{eq:dimless-vars_second} invariant.

In this problem, $\Psi = \frac{\Pi^3}{\kappa\eta}$ is a fixed similarity parameter of the second \changed{class}.
It is derived from the Chastel-Gondret-Mongruel solution~\cite{Chastel_2016,Chastel_2019,Mongruel_2020}, which is the scaling law in elastic impact regime. 
Here we assume that the CGM solution represents the intermediate asymptotics in ideal region where $Z \ll 1$.
That is, the similarity parameter $\Psi = \frac{\Pi^3}{\kappa\eta}$ approaches a constant as $Z \to 0$. Here we call $\Psi$ the CGM number.
On the other hand, the driving parameter which breaks the asymptotics is $\theta$. Therefore, the interfering similarity parameter should include $\theta$.

Suppose that we do not know the form of the function $\Phi$ and the combination 
of the argument of the RHS of Eq.~\eqref{eq:phi-truth}.
We would like to determine them through given data of parameters \eqref{eq:dimless-vars_first}. 
Let us parametrize the argument as 
\begin{equation}
\Psi = 
\Phi \left(
\Pi^{w^{(0)}_0} \kappa^{w^{(0)}_1} \eta^{w^{(0)}_2}\theta^{w^{(0)}_3}
\right).
\label{eq:1dim_parametrization}
\end{equation}
We introduce the log of the parameters,
\begin{equation}
\begin{split}    
&\psi \coloneq \ln \Psi, \quad 
x_0 \coloneqq \ln \Pi, \quad 
x_1 \coloneqq \ln \kappa, 
\\
&
x_2 \coloneqq \ln \eta, \quad 
x_3 \coloneqq \ln \theta,
\end{split}
\label{eq:log-variables}
\end{equation}
where the parameter $\psi$ can be written as 
\begin{equation}
\psi = 3 \ln \Pi - \ln \kappa - \ln \eta 
= 3 x_0 - x_1 - x_2    .
\end{equation}
The relation~\eqref{eq:1dim_parametrization} is now written as
\begin{equation}    
\psi = \ln \Phi 
\left( \exp \left[\sum_{i=0}^3 w^{(0)}_i x_i \right] \right)
\eqqcolon \phi \left(\sum_{i=0}^3 w^{(0)}_i x_i\right),
\end{equation}
where we have defined 
\begin{equation}
\phi (y) \coloneqq 
\ln \Phi(e^y). 
\end{equation}
In the following sections, we analyze this system by the neural network method described in the previous section.

\subsection{ Training with synthetic data } \label{sec:training-synthetic}
We verify that the neural network method works well with the data generated numerically according to Eq.~\eqref{eq:phi-truth}.
We prepare the training data in the following way:
\begin{itemize}
    \item[1)] Generate $N_{\rm data}$ sets of three random numbers uniformly from $[0, 1]$ and call them $(\Pi^{(i)}, \eta^{(i)}, \theta^{(i)})$ where $i=1,\cdots,N_{\rm data}$.
    \item[2)] Using $\Pi^{(i)}, \eta^{(i)}, \theta^{(i)}$ generated in Step 1, compute the RHS of Eq.~\eqref{eq:phi-truth} and call it $\Psi^{(i)}~(i=1,\cdots,N_{\rm data})$. Define $\kappa^{(i)}$ as $\kappa^{(i)}\coloneqq\Pi^{(i)3}/\Psi^{(i)} \eta^{(i)}$, so that $\Pi^{(i)}, \kappa^{(i)}, \eta^{(i)}, \theta^{(i)}$ and $\Psi^{(i)}$ satisfy the relation \eqref{eq:phi-truth}.
    \item[3)] For each of $\Pi^{(i)}, \kappa^{(i)}, \eta^{(i)}, \theta^{(i)}$ and $\Psi^{(i)}$, introduce noise as an artificial error as follows. Let $X$ be one of $\Pi^{(i)}, \kappa^{(i)}, \eta^{(i)}, \theta^{(i)}$ and $\Psi^{(i)}~(i=1,\cdots,N_{\rm data})$. The noise to $X$ is given as
    \begin{equation}
        \delta X = r \times \alpha_X \times X\,,
    \label{eq:noise}
    \end{equation}
    where $\alpha_X$ is a random variable sampled from the standard normal distribution $\mathcal{N}(0,1)$, and $r$ is an adjustable parameter, common to all $X$, that represents the strength of the noise,\footnote{
    For a given $X$, $\delta X$ is defined by sampling a random parameter $\alpha_X$ from the normal distribution $\mathcal{N}(0,1)$.
    If the sampling of $\alpha_X$ is repeated many times while $X$ is fixed, a set of data $\{X+\delta X\}$ will be obtained, which is scattered around $X$.
    The coefficient of variation of this data is nothing but $r$.
    In this sense, the parameter $r$ represents the strength of the noise.} and equals to coefficients of variation.
    \item[4)] Introduce the log of the parameters as
    \begin{align}
        &\psi^{(i)} \coloneq \ln(\Psi^{(i)}+\delta \Psi^{(i)})\,,\quad
        x_0^{(i)} \coloneqq \ln(\Pi^{(i)}+\delta\Pi^{(i)})\,,\notag 
        \\
        &x_1^{(i)} \coloneqq \ln(\kappa^{(i)}+\delta\kappa^{(i)})\,,\quad
        x_2^{(i)} \coloneqq \ln(\eta^{(i)}+\delta\eta^{(i)})\,, \\
        &
        x_3^{(i)} \coloneqq \ln(\theta^{(i)}+\delta\theta^{(i)})\,.\notag
    \end{align}
\end{itemize}
Using these $\psi^{(i)}$ and ${\bm x}^{(i)}$ as the training data, we perform the supervised learning.
If the training is done successfully, the parameters ${\bm w}^{(0)}$ of the first layer of the neural network should be close to
\begin{equation}
\begin{pmatrix}
1 & 0 & -1/2 & -1    
\end{pmatrix}^\top
\label{eq:true-value}
\end{equation}
up to the ambiguity of Eq.~\eqref{eq:ambiguity}.
Using the ambiguities, we can fix two of the components.
Specifically, we may fix the first and the last components as $w^{(0)}_0\to 1$ and $w^{(0)}_3\to -1$.
In practice, we perform this normalization as follows.
After training, we read off the parameters ${\bm w}^{(0)}$ of the first layer of the neural network.
Then, we normalize the last component $w_3^{(0)}$ of the ${\bm w}^{(0)}$ to $-1$ without loss of generality:
\begin{equation}
    {\bm w}^{(0)} \to \hat{{\bm w}}^{(0)} \coloneqq {\bm w}^{(0)}/(-w_3^{(0)})\,.
\label{eq:w_hat}
\end{equation}
After that, we normalize the first component of $\hat{{\bm w}}^{(0)}$ to unity by shifting a scalar times the vector
\begin{equation}
    {\bm q} = 
\begin{pmatrix}
3 & -1 & -1 & 0 
\end{pmatrix}^\top\,.
\label{eq:1dim_LHS_combination}
\end{equation}
Note that the LHS of Eq.~\eqref{eq:phi-truth} does not include $\theta$.
We denote the parameters after normalization by $\bm p$:
\begin{equation}
    {\bm p} \coloneqq \hat{{\bm w}}^{(0)} + \frac{1-\hat{w}_0^{(0)}}{3}{\bm q} = (1, p_1, p_2, -1)\,.
\end{equation}
After this normalization, we are interested in how close the remaining $p_1$ and $p_2$ are to the true values in Eq.~\eqref{eq:true-value}.

The structure of the neural network is 4-1-10-10-1,
where each number indicates the number nodes at each layer,
and 
we used the ReLU activation function. 
The number of data points is $N_{\rm data} = 500$.
As for the magnitudes of the regularization terms in Eq.~\eqref{eq:regularization}, we set $\lambda_1=\lambda_2=\lambda$ and tried training for $\lambda=0,\,0.5,\,1$.
For $\lambda=0$, the number of the epochs is set to $2,000$.
For $\lambda=0.5$ and $1$, more epochs are needed to obtain stable results, which in this case was set to $10,000$.

Figure~\ref{fig:3} shows the results of the training without the regularization terms, $\lambda=0$.
\footnote{The results shown hereafter are an example of several attempts of training. In all cases, if the value of the loss becomes sufficiently small, the training seems to be successful.}
The noise of the synthetic training data is increased in order from Fig.~\ref{fig:3a} to \ref{fig:3f}.
The plots demonstrate successful data collapse across all noise strengths.

Table \ref{tb:1} and Figs.~\ref{fig:4a}, \ref{fig:4b} show the obtained parameters and its relation with the strength of noise.
Here the reference values are obtained in a single training with all synthetic data and the statistical uncertainties are estimated by the bootstrap method.
This bootstrap method consists of repeating the process of resampling $N_{\rm data}$ data randomly from the original data set, allowing duplicates, and training with these resampled data.
The number of bootstrap resampling, $N_{\rm bs}$, is set to $N_{\rm bs} = 50$ in the present case.\footnote{During $N_{\rm bs}$ iterations of training with the bootstrap method, sometimes the same results are output repeatedly, which suggests that the parameter is trapped in a local minimum. 
Whether this behavior happens depends on the choice of optimization algorithm and the loss landscape of the problem. When one actually estimates uncertainty with the bootstrap method, such trapped results should be dismissed.}
We observe that the uncertainty of the learned parameters increases with the noise strength, $r$.
The estimation of the true values by the neural network 
delivers highly accurate results for the synthetic data up to a noise strength of approximately $r \sim 0.1$.
Beyond this noise strength, the learned parameters begin to deviate from the true values. 
Despite this divergence at higher noise strengths, it is evident that the learning process is functioning effectively.

In Fig.~\ref{fig:4c}, we show the dependence of the loss value on the strength of the noise.
Here, the loss function we employed is Eq.~\eqref{eq:MSE_loss}.
If the learning is successful without falling into overfitting, the neural network will extract average features from the training data fluctuated by noise.
In this case, the loss is expected to be roughly proportional to the square of the noise strength.
Indeed, the loss value can be closely approximated by a quadratic function as shown in the figure, signifying that overfitting is avoided.

\begin{figure*}[t]
    \begin{tabular}{ccc}
      \begin{minipage}[t]{0.33\hsize}
        \centering
        \includegraphics[width=.95\columnwidth]{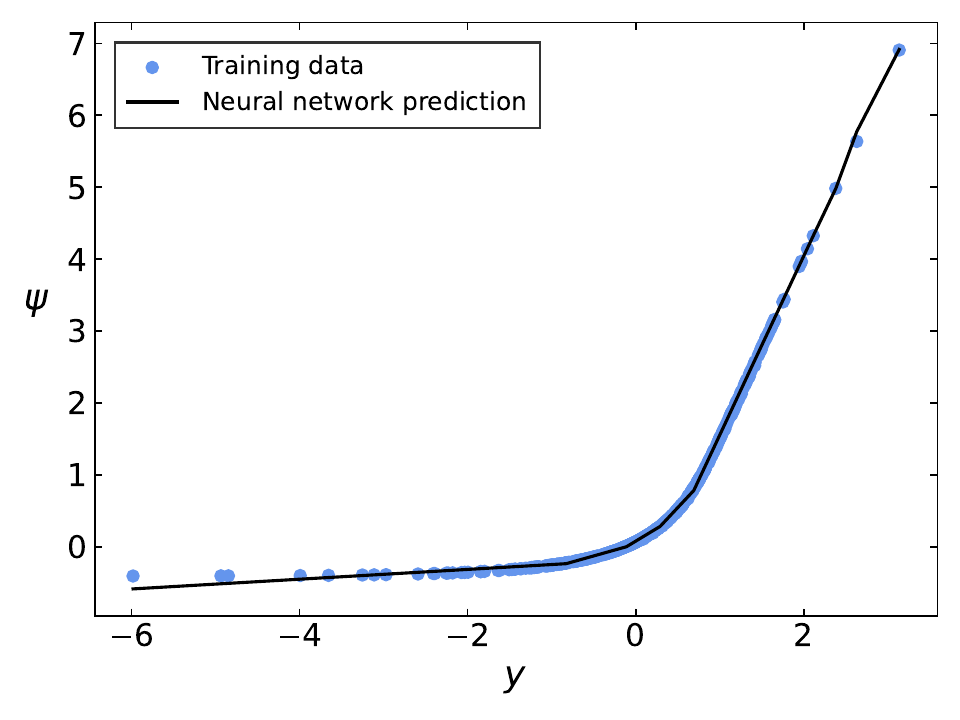}
        \subcaption{$r=0$}
        \label{fig:3a}
      \end{minipage} &
      \begin{minipage}[t]{0.33\hsize}
        \centering
        \includegraphics[width=.95\columnwidth]{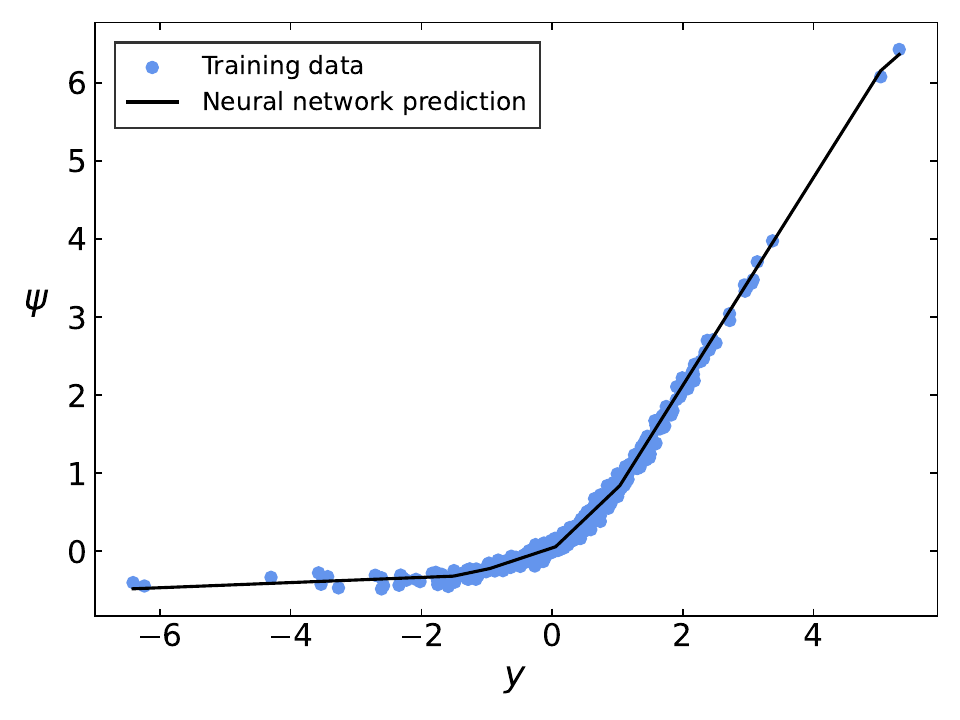}
        \subcaption{$r=0.05$}
        \label{fig:3b}
      \end{minipage}&
      \begin{minipage}[t]{0.33\hsize}
        \centering
        \includegraphics[width=.95\columnwidth]{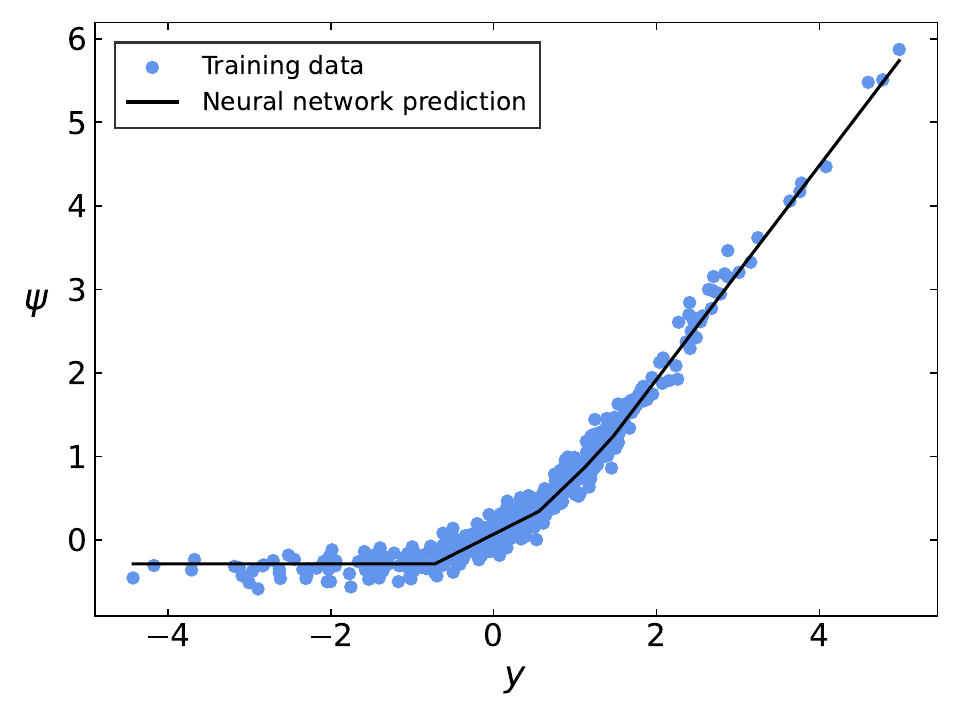}
        \subcaption{$r=0.1$}
        \label{fig:3c}
      \end{minipage} \\
      
      \begin{minipage}[t]{0.33\hsize}
        \centering
        \includegraphics[width=.95\columnwidth]{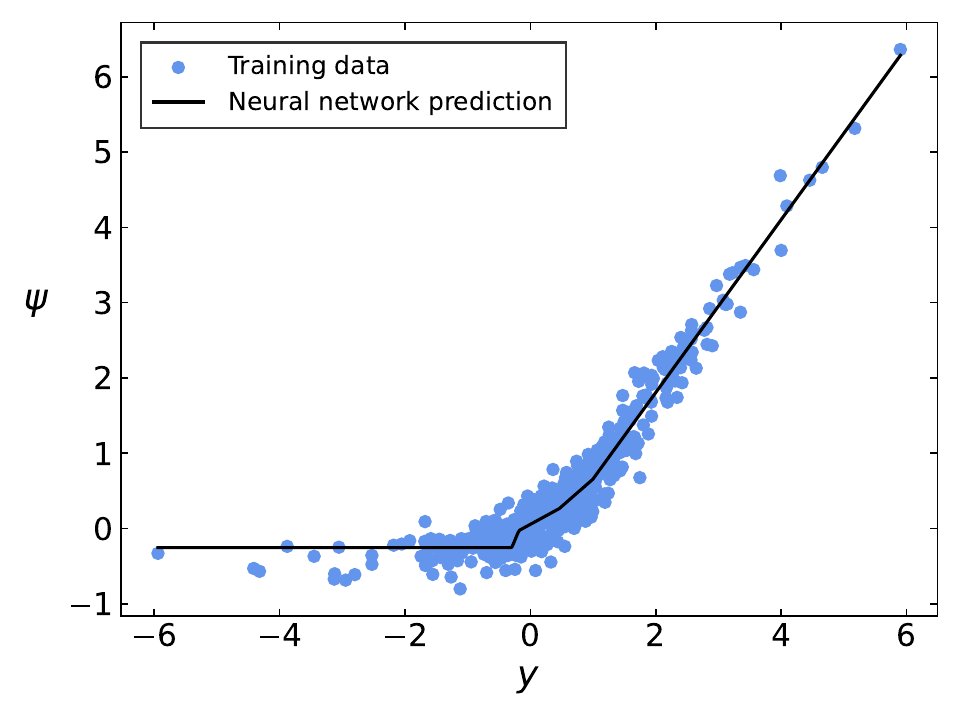}
        \subcaption{$r=0.15$}
        \label{fig:3d}
      \end{minipage} &
      \begin{minipage}[t]{0.33\hsize}
        \centering
        \includegraphics[width=.95\columnwidth]{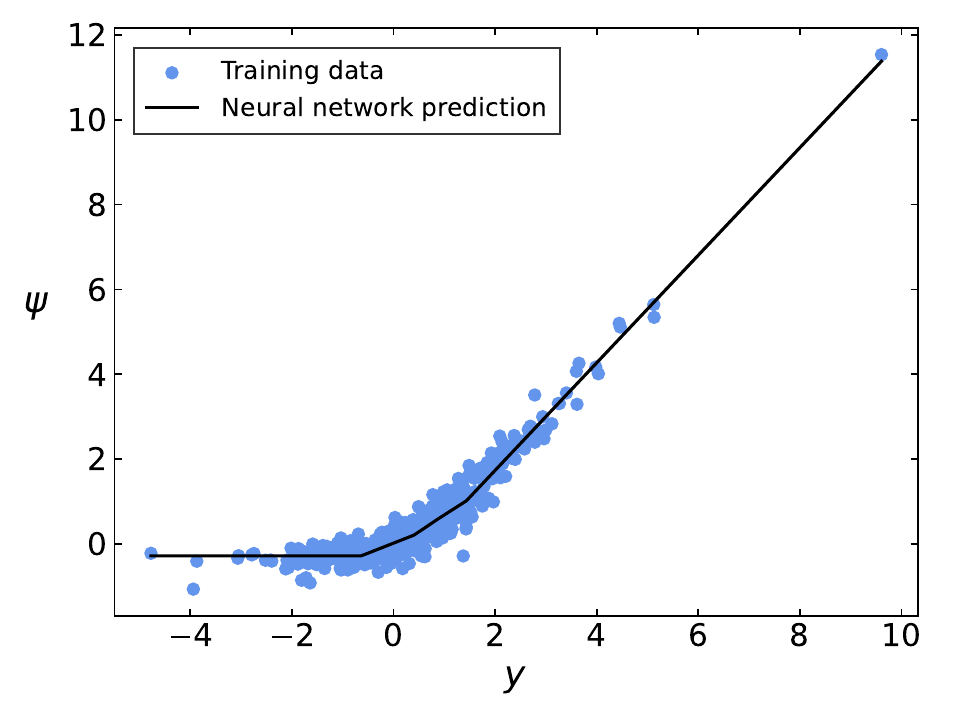}
        \subcaption{$r=0.2$}
        \label{fig:3e}
      \end{minipage}&
      \begin{minipage}[t]{0.33\hsize}
        \centering
        \includegraphics[width=.95\columnwidth]{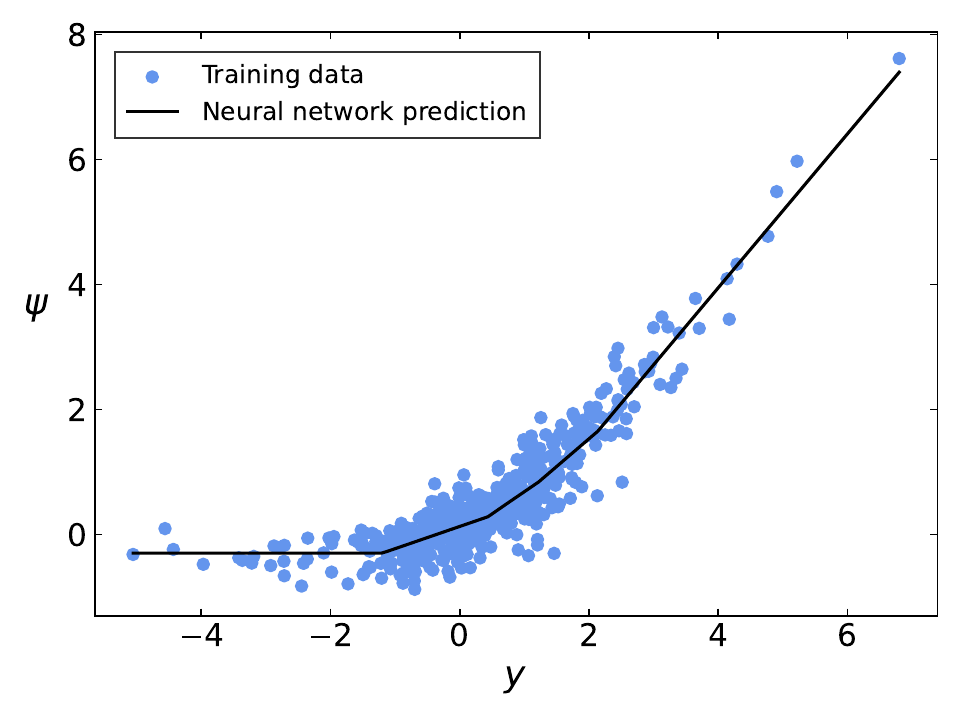}
        \subcaption{$r=0.25$}
        \label{fig:3f}
      \end{minipage} 
    \end{tabular}
    \caption{Synthetic training data ${\psi^{(i)}}$ and neural network predictions $\psi_{\rm NN}^{(i)}$ (without the regularization terms, $\lambda = 0$) as a function of $y^{(i)} = \hat{{\bm w}}^{(0)}\cdot {\bm x}^{(i)}$, where $\hat{{\bm w}}^{(0)}$ is given as \eqref{eq:w_hat}. Here, $r$ is the noise strength introduced in the synthetic data $\{({\bm x}^{(i)},\psi^{(i)})\}$.}
    \label{fig:3}
\end{figure*}

\begin{table}[t]
  \centering
  \begin{tabular}{ccc}
    $r$ & $p_1$ & $p_2$ \vspace{.5mm}\\ \hline
    $0$ & $-0.0025\pm 0.0008$ & $-0.5015\pm 0.0011$ \vspace{.5mm}\\ 
    \vspace{.5mm}
    $0.05$ & $-0.0033\pm 0.0034$ & $-0.5039\pm 0.0044$ \vspace{.5mm}\\ 
    \vspace{.5mm}
    $0.1$ & $0.0054\pm 0.0043$ & $-0.5198\pm 0.0079$ \vspace{.5mm}\\ 
    \vspace{.5mm}
    $0.15$ & $-0.007\pm 0.010$ & $-0.539\pm 0.015$ \vspace{.5mm}\\ 
    \vspace{.5mm}
    $0.2$ & $-0.0178\pm 0.0078$ & $-0.531\pm 0.020$ \vspace{.5mm}\\ 
    \vspace{.5mm}
    $0.25$ & $-0.005\pm 0.020$ & $-0.475\pm 0.025$ \vspace{.5mm}\\ 
    \hline
    \hline
    true & $0$ & $-0.5$    
  \end{tabular}
  \caption{The dependence of the normalized parameters $p_1$ and $p_2$ on the noise strength $r$, with the regularization term $\lambda = 0$.
  For each value of $r$, the reference values of the parameters are obtained in a single training with all data. The uncertainties are estimated by the bootstrap method.
  }
  \label{tb:1}
\end{table}

\begin{figure*}[t]
    \begin{tabular}{ccc}
      \begin{minipage}[t]{0.33\hsize}
        \centering
        \includegraphics[width=.95\columnwidth]{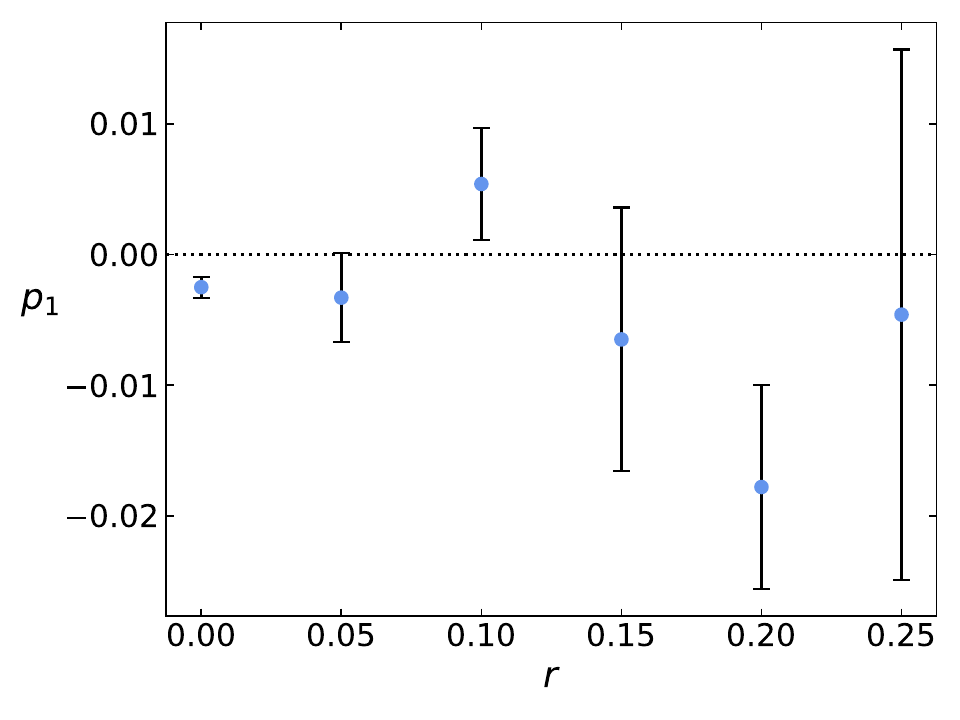}
        \subcaption{Parameter $p_1$ as a function of $r$. The true value is $p_1=0$.}
        \label{fig:4a}
      \end{minipage} &
      \begin{minipage}[t]{0.33\hsize}
        \centering
        \includegraphics[width=.95\columnwidth]{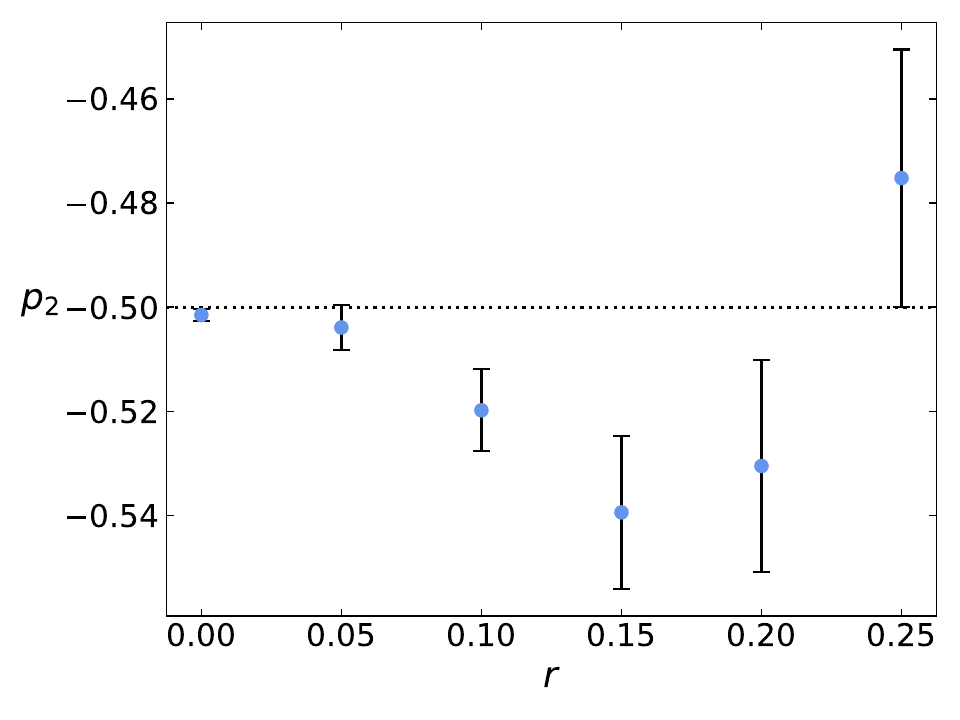}
        \subcaption{Parameter $p_2$ as a function of $r$. The true value is $p_2=-1/2$.}
        \label{fig:4b}
      \end{minipage}&
      \begin{minipage}[t]{0.33\hsize}
        \centering
        \includegraphics[width=.95\columnwidth]{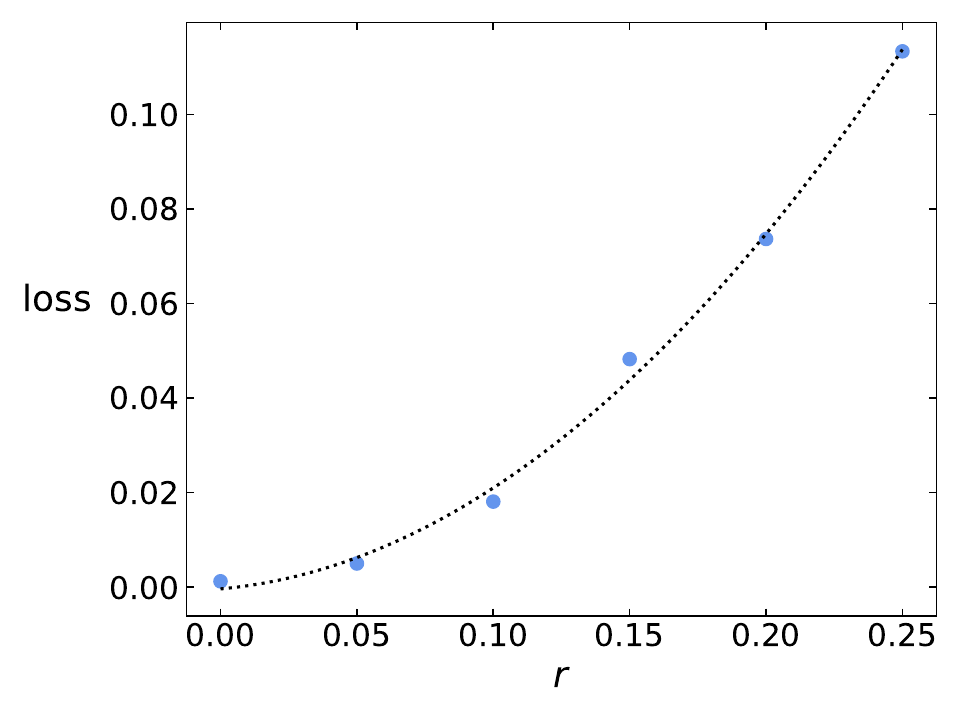}
        \subcaption{The loss as a function of $r$. The dotted line is a fit by a quadratic function (${\rm loss} = 1.62\,r^2 + 0.0517\,r -0.0004$).}
        \label{fig:4c}
      \end{minipage} 
    \end{tabular}
    \caption{The trained parameters after the normalization procedure and the values of the loss as a function of the noise strength in the synthetic data without the regularization term $\lambda = 0$. In Fig.~\ref{fig:4a} and \ref{fig:4b}, the error bars are estimated by the bootstrap method.}
    \label{fig:4}
\end{figure*}

Next, we turn on the regularization terms \eqref{eq:regularization}.
Figures~\ref{fig:5} and \ref{fig:7} show the results of the training for $\lambda=0.5$ and $1$, respectively.
The value of $\lambda$ was chosen so that the constraints set by the regularization terms \eqref{eq:regularization} work properly.
Table \ref{tb:2} and Fig.~\ref{fig:6a}, \ref{fig:6b} show the learned parameters after normalization, for $\lambda=0.5$.
Table \ref{tb:3} and Fig.~\ref{fig:8a}, \ref{fig:8b} are for $\lambda=1$.
Obviously, the training with regularization terms works as well as that without a regularization term ($\lambda=0$).
Thus, by adding regularization terms to the loss function, certain constraints can be imposed on the training parameters without spoiling the accuracy of the training.

\begin{figure*}[t]
    \begin{tabular}{ccc}
      \begin{minipage}[t]{0.33\hsize}
        \centering
        \includegraphics[width=.95\columnwidth]{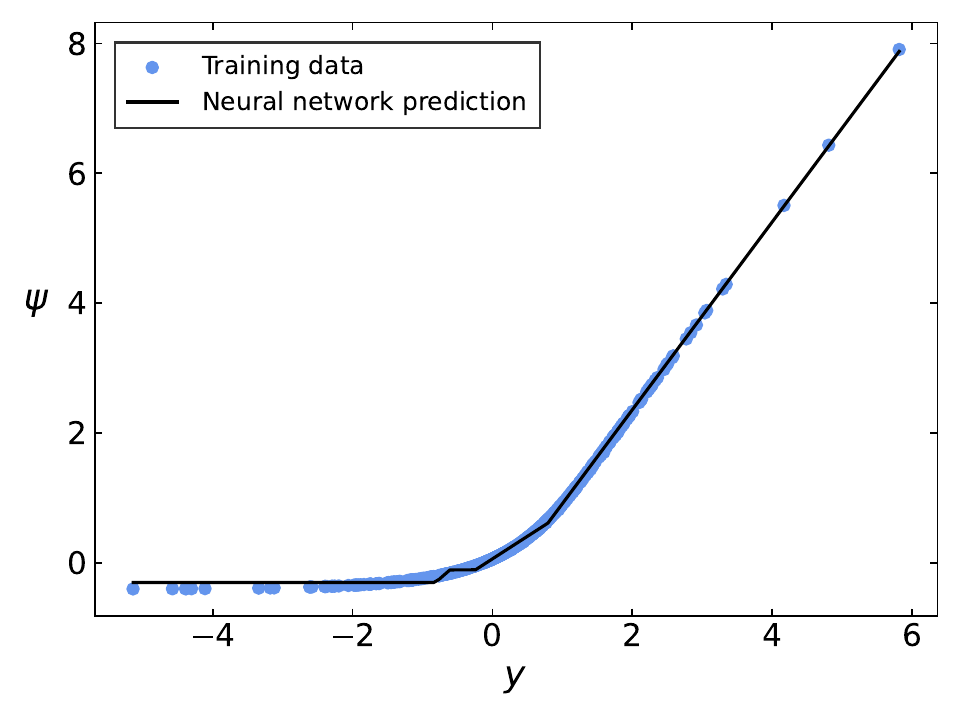}
        \subcaption{$r=0$}
        \label{fig:5a}
      \end{minipage} &
      \begin{minipage}[t]{0.33\hsize}
        \centering
        \includegraphics[width=.95\columnwidth]{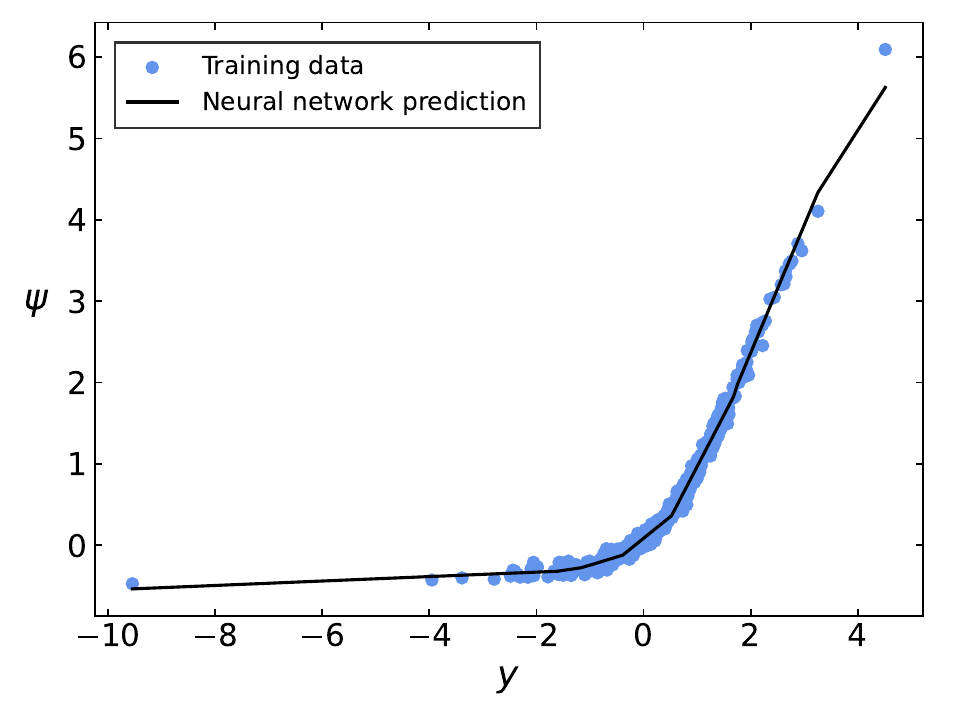}
        \subcaption{$r=0.05$}
        \label{fig:5b}
      \end{minipage}&
      \begin{minipage}[t]{0.33\hsize}
        \centering
        \includegraphics[width=.95\columnwidth]{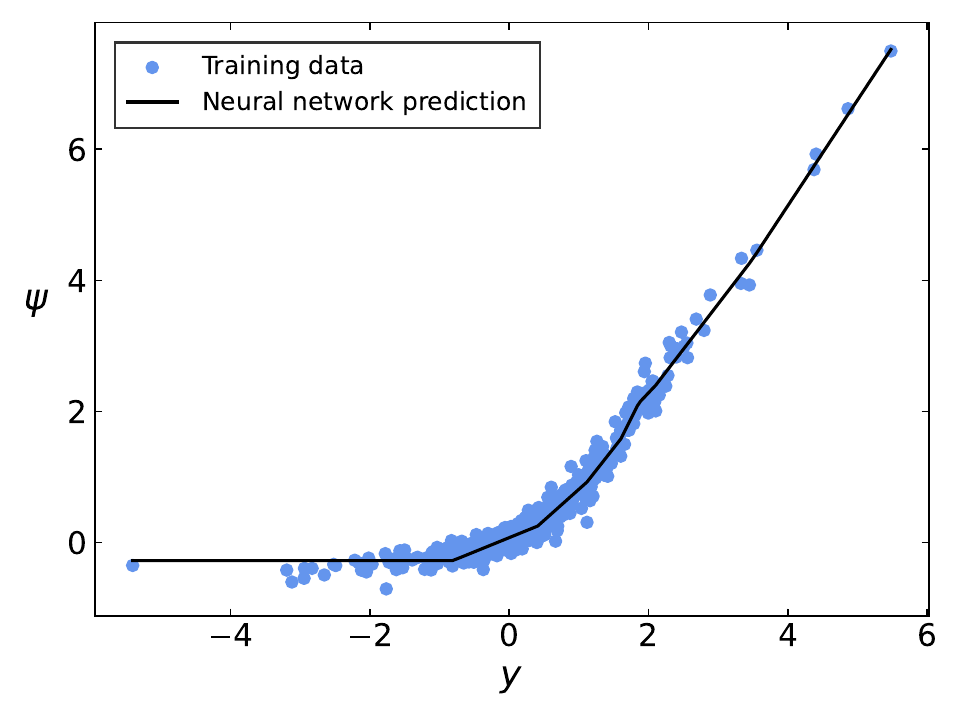}
        \subcaption{$r=0.1$}
        \label{fig:5c}
      \end{minipage} \\
      
      \begin{minipage}[t]{0.33\hsize}
        \centering
        \includegraphics[width=.95\columnwidth]{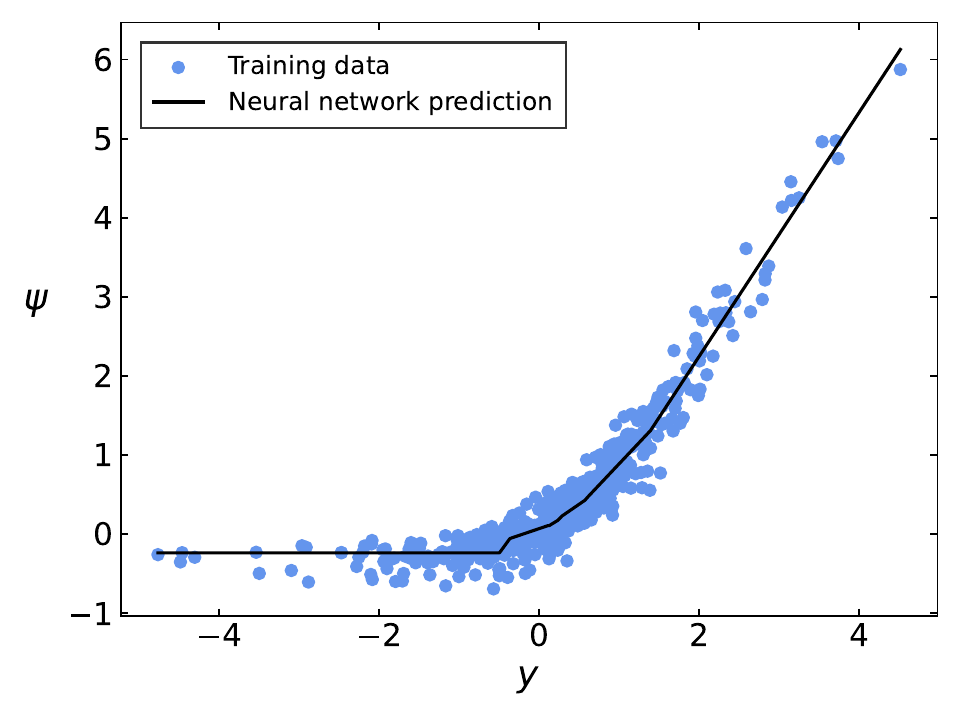}
        \subcaption{$r=0.15$}
        \label{fig:5d}
      \end{minipage} &
      \begin{minipage}[t]{0.33\hsize}
        \centering
        \includegraphics[width=.95\columnwidth]{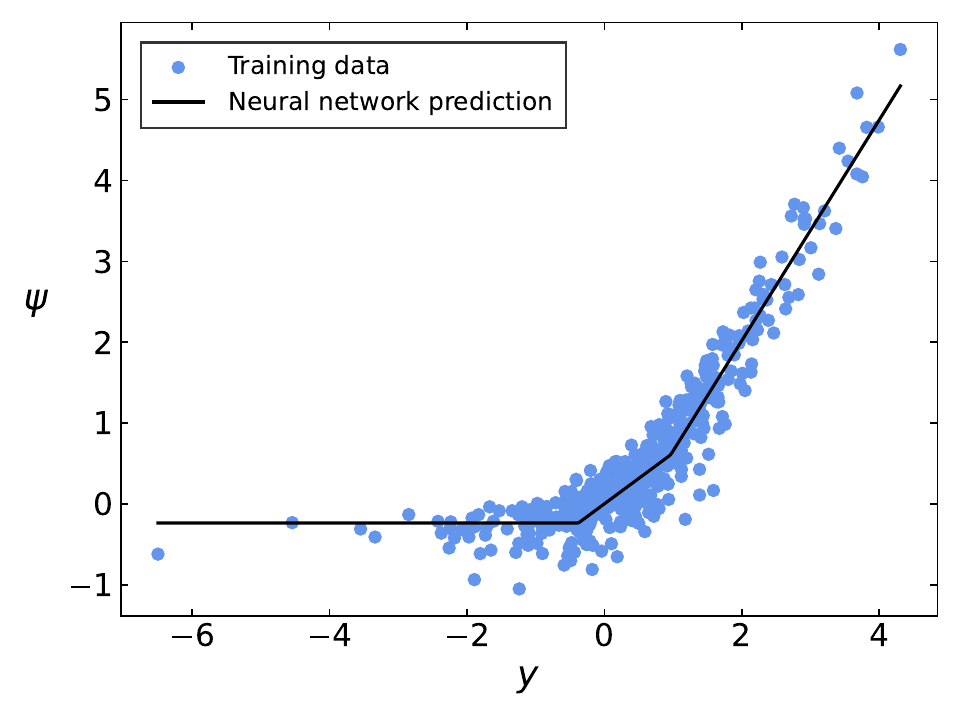}
        \subcaption{$r=0.2$}
        \label{fig:5e}
      \end{minipage}&
      \begin{minipage}[t]{0.33\hsize}
        \centering
        \includegraphics[width=.95\columnwidth]{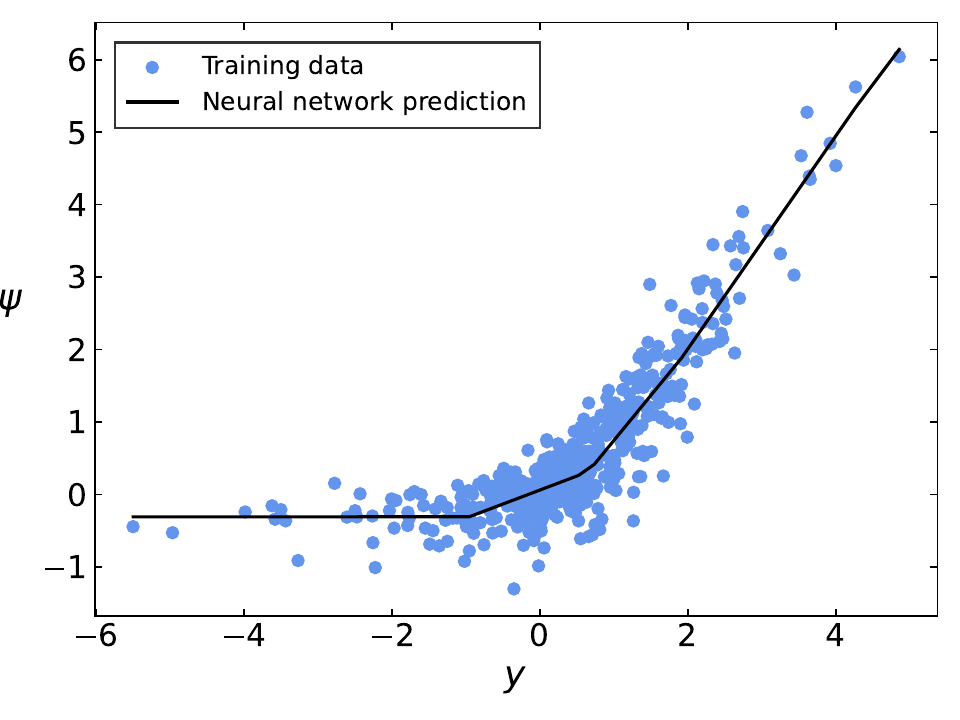}
        \subcaption{$r=0.25$}
        \label{fig:5f}
      \end{minipage} 
    \end{tabular}
    \caption{Synthetic training data ${\psi^{(i)}}$ and neural network predictions $\psi_{\rm NN}^{(i)}$ (with the regularization terms, $\lambda = 0.5$) as a function of $y^{(i)} = \hat{{\bm w}}^{(0)}\cdot {\bm x}^{(i)}$. Here, $r$ is the noise strength introduced in the synthetic data $\{({\bm x}^{(i)},\psi^{(i)})\}$.}
    \label{fig:5}
\end{figure*}

\begin{table}[t]
  \centering
  \begin{tabular}{ccc}
    $r$ & $p_1$ & $p_2$ \vspace{.5mm}\\ \hline
    $0$ & $-0.0026\pm 0.0007$ & $-0.5020\pm 0.0008$ \vspace{.5mm}\\ 
    \vspace{.5mm}
    $0.05$ & $0.0055\pm 0.0027$ & $-0.4888\pm 0.0047$ \vspace{.5mm}\\ 
    \vspace{.5mm}
    $0.1$ & $-0.0035\pm 0.0044$ & $-0.5190\pm 0.0078$ \vspace{.5mm}\\ 
    \vspace{.5mm}
    $0.15$ & $0.0012\pm 0.0084$ & $-0.504\pm 0.016$ \vspace{.5mm}\\ 
    \vspace{.5mm}
    $0.2$ & $-0.040\pm 0.011$ & $-0.536\pm 0.019$ \vspace{.5mm}\\ 
    \vspace{.5mm}
    $0.25$ & $-0.035\pm 0.017$ & $-0.532\pm 0.020$ \vspace{.5mm}\\ 
    \hline
    \hline
    true & $0$ & $-0.5$  
  \end{tabular}
  \caption{The dependence of the normalized parameters $p_1$ and $p_2$ on the noise strength $r$, with the regularization term $\lambda = 0.5$. For each value of $r$, the reference values of the parameters are obtained in a single training with all data. The uncertainties are estimated by the bootstrap method.
  }
  \label{tb:2}
\end{table}

\begin{figure*}[t]
    \begin{tabular}{cc}
      \begin{minipage}[t]{0.33\hsize}
        \centering
        \includegraphics[width=.95\columnwidth]{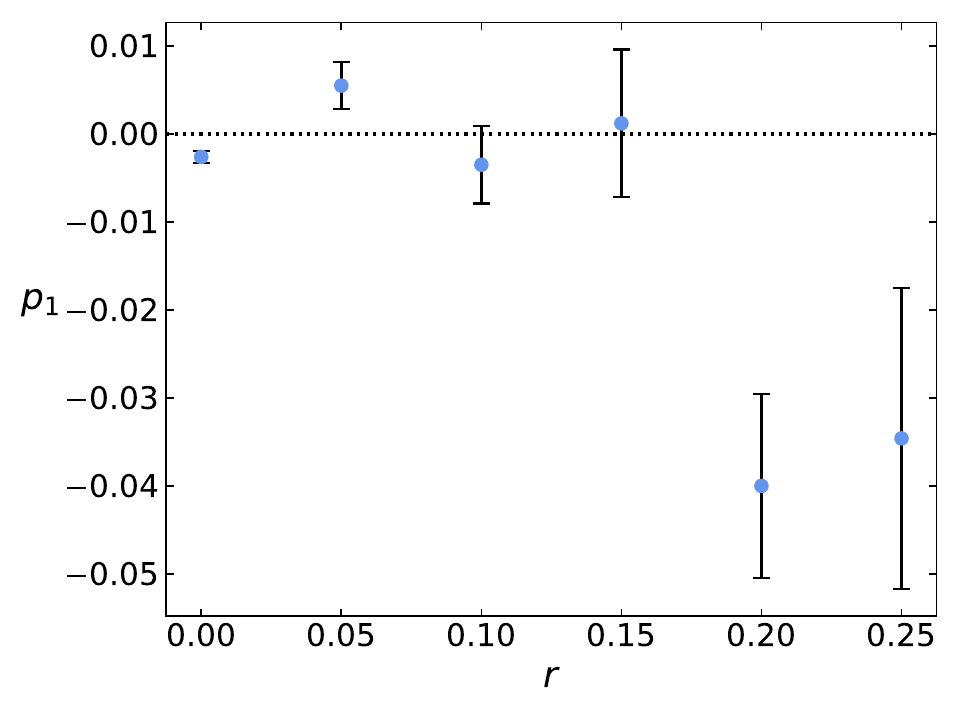}
        \subcaption{Parameter $p_1$ as a function of $r$. The true value is $p_1=0$.}
        \label{fig:6a}
      \end{minipage} &
      \begin{minipage}[t]{0.33\hsize}
        \centering
        \includegraphics[width=.95\columnwidth]{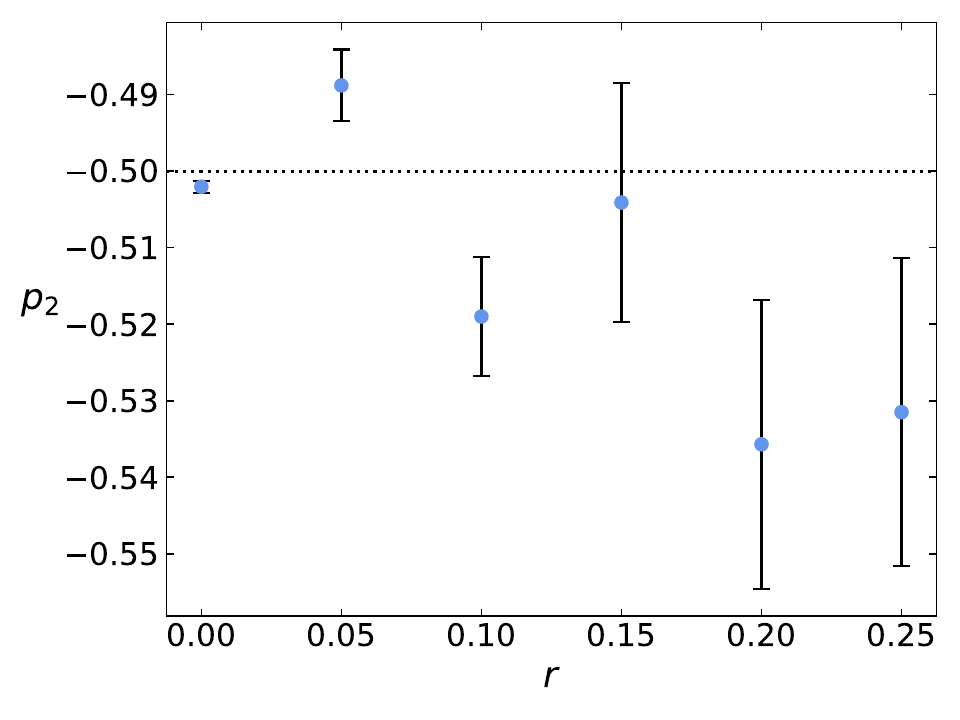}
        \subcaption{Parameter $p_2$ as a function of $r$. The true value is $p_2=-1/2$.}
        \label{fig:6b}
      \end{minipage}
    \end{tabular}
    \caption{The trained parameters after the normalization procedure and the values of the loss as a function of the noise strength in the synthetic data with the regularization term $\lambda = 0.5$. In Fig.~\ref{fig:6a} and \ref{fig:6b}, the error bars are estimated by the bootstrap method.}
    \label{fig:6}
\end{figure*}

\begin{figure*}[t]
    \begin{tabular}{ccc}
      \begin{minipage}[t]{0.33\hsize}
        \centering
        \includegraphics[width=.95\columnwidth]{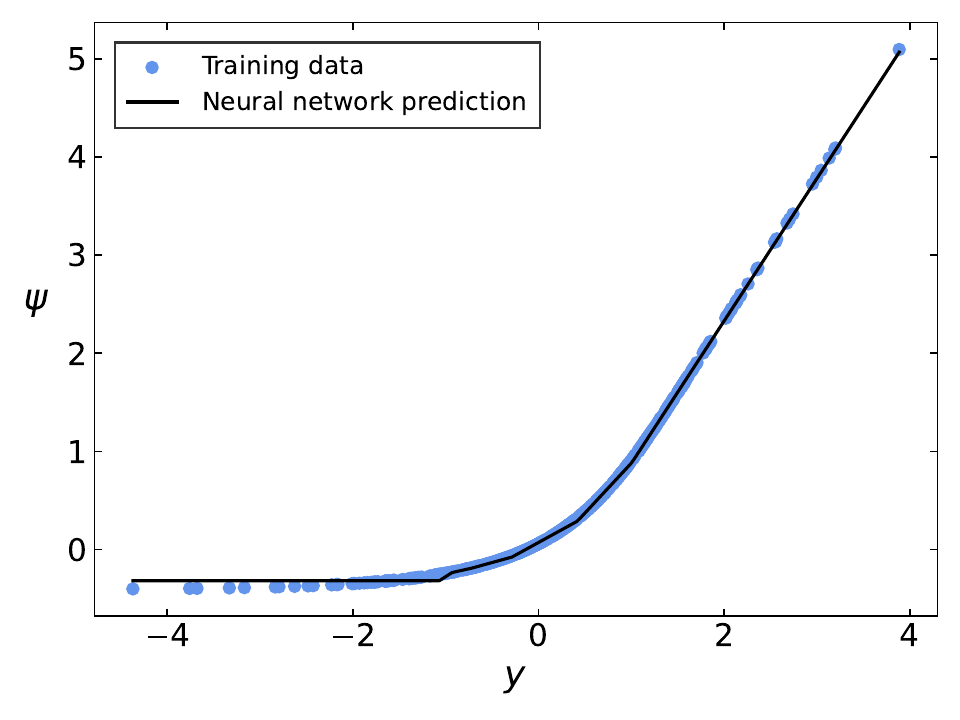}
        \subcaption{$r=0$}
        \label{fig:7a}
      \end{minipage} &
      \begin{minipage}[t]{0.33\hsize}
        \centering
        \includegraphics[width=.95\columnwidth]{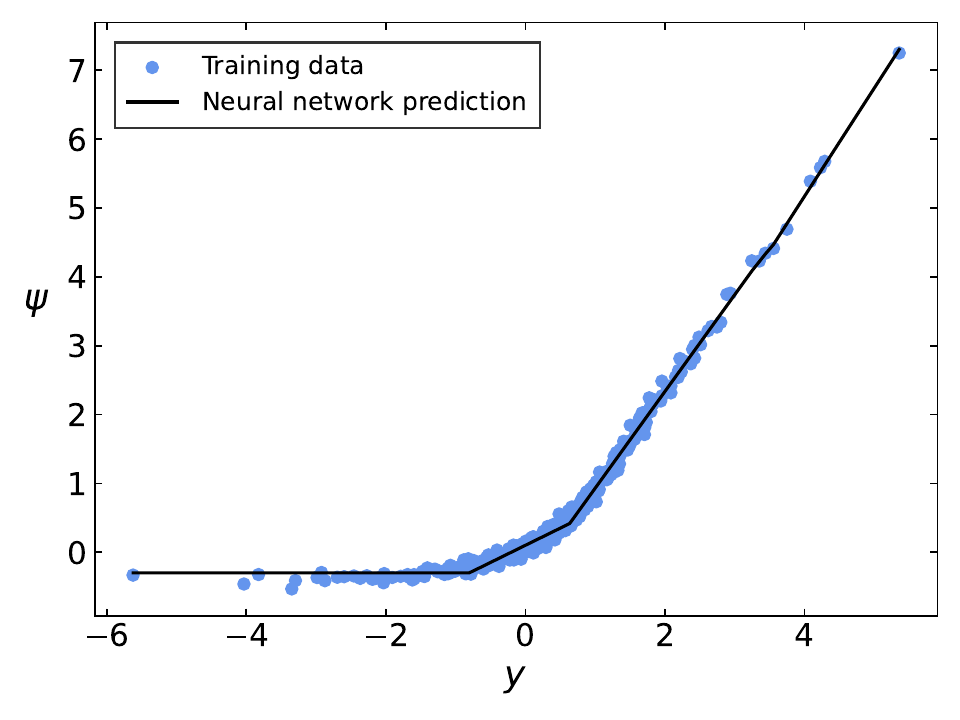}
        \subcaption{$r=0.05$}
        \label{fig:7b}
      \end{minipage}&
      \begin{minipage}[t]{0.33\hsize}
        \centering
        \includegraphics[width=.95\columnwidth]{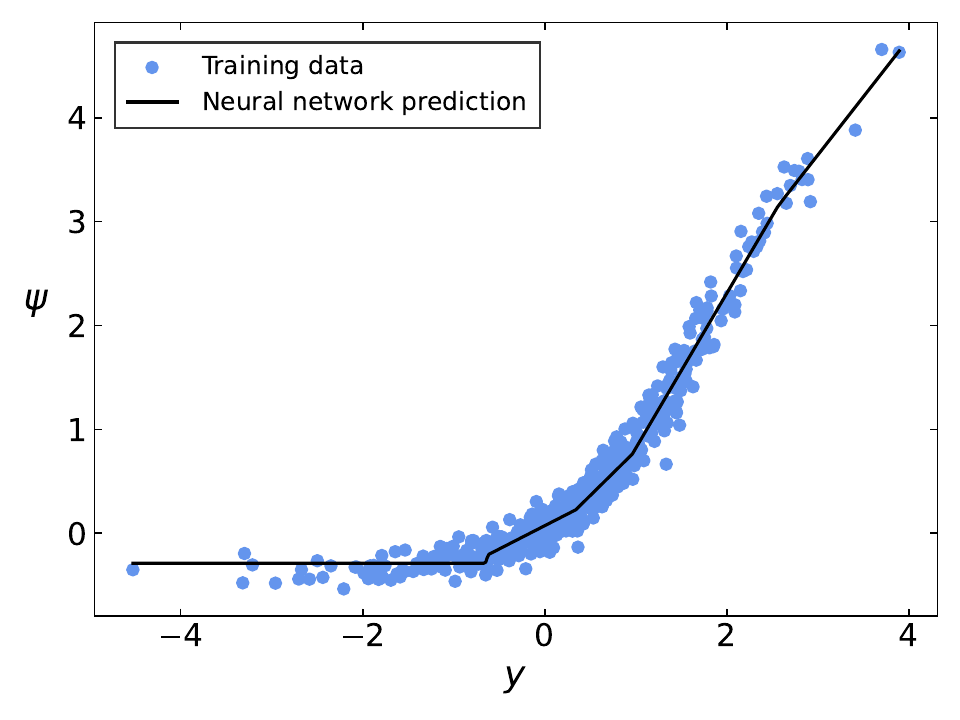}
        \subcaption{$r=0.1$}
        \label{fig:7c}
      \end{minipage} \\
      
      \begin{minipage}[t]{0.33\hsize}
        \centering
        \includegraphics[width=.95\columnwidth]{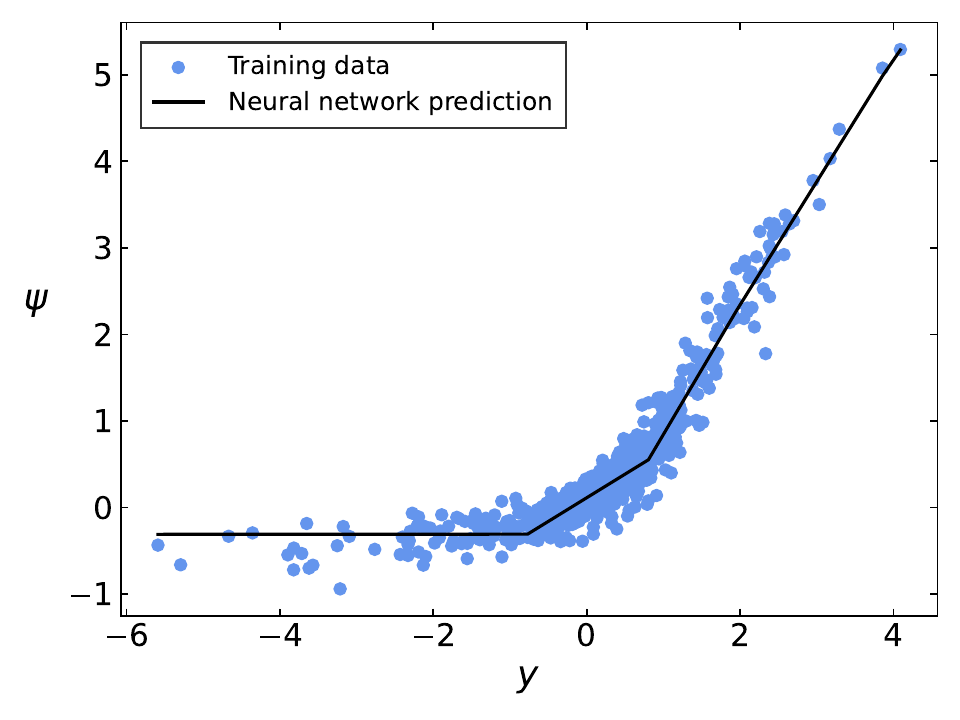}
        \subcaption{$r=0.15$}
        \label{fig:7d}
      \end{minipage} &
      \begin{minipage}[t]{0.33\hsize}
        \centering
        \includegraphics[width=.95\columnwidth]{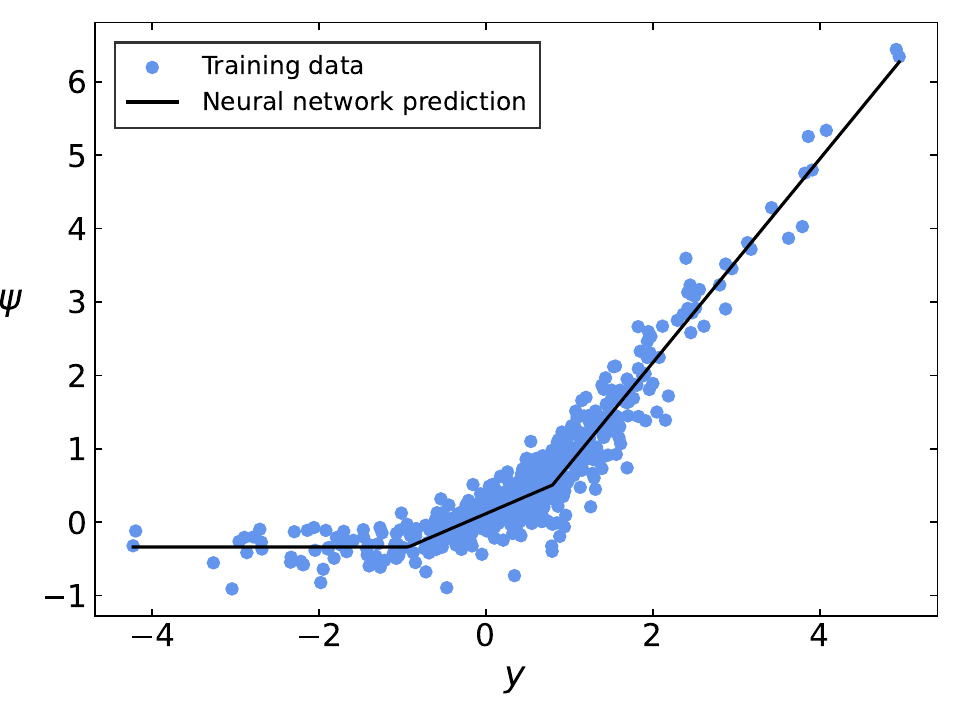}
        \subcaption{$r=0.2$}
        \label{fig:7e}
      \end{minipage}&
      \begin{minipage}[t]{0.33\hsize}
        \centering
        \includegraphics[width=.95\columnwidth]{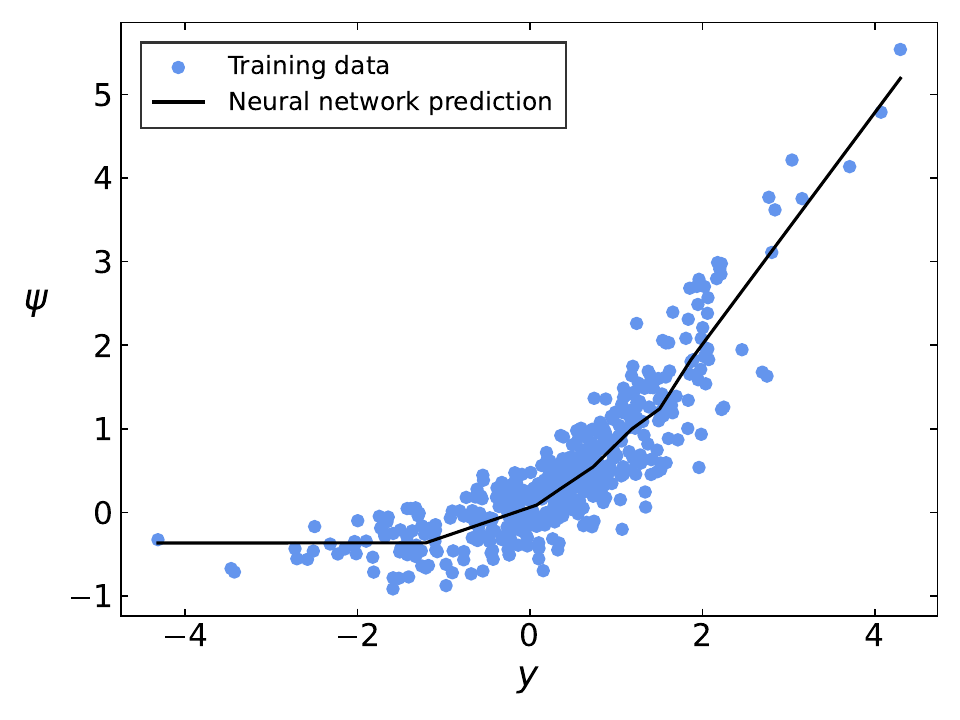}
        \subcaption{$r=0.25$}
        \label{fig:7f}
      \end{minipage} 
    \end{tabular}
    \caption{Synthetic training data ${\psi^{(i)}}$ and neural network predictions $\psi_{\rm NN}^{(i)}$ (with the regularization terms, $\lambda = 1$) as a function of $y^{(i)} = \hat{{\bm w}}^{(0)}\cdot {\bm x}^{(i)}$. Here, $r$ is the noise strength introduced in the synthetic data $\{({\bm x}^{(i)},\psi^{(i)})\}$.}
    \label{fig:7}
\end{figure*}

\begin{table}[t]
  \centering
  \begin{tabular}{ccc}
    $r$ & $p_1$ & $p_2$ \vspace{.5mm}\\ \hline
    $0$ & $-0.0011\pm 0.0004$ & $-0.5013\pm 0.0007$ \vspace{.5mm}\\ 
    \vspace{.5mm}
    $0.05$ & $-0.0015\pm 0.0033$ & $-0.4945\pm 0.0047$ \vspace{.5mm}\\ 
    \vspace{.5mm}
    $0.1$ & $-0.0058\pm 0.0056$ & $-0.5056\pm 0.0070$ \vspace{.5mm}\\ 
    \vspace{.5mm}
    $0.15$ & $0.0131\pm 0.0078$ & $-0.479\pm 0.014$ \vspace{.5mm}\\ 
    \vspace{.5mm}
    $0.2$ & $-0.005\pm 0.012$ & $-0.522\pm 0.016$ \vspace{.5mm}\\ 
    \vspace{.5mm}
    $0.25$ & $-0.013\pm 0.018$ & $-0.490\pm 0.032$ \vspace{.5mm}\\ 
    \hline
    \hline
    true & $0$ & $-0.5$   
  \end{tabular}
  \caption{The dependence of the normalized parameters $p_1$ and $p_2$ on the noise strength $r$ with the regularization term $\lambda = 1$.
  For each value of $r$, the reference values of the parameters are obtained in a single training with all data. The uncertainties are estimated by the bootstrap method.
  }
  \label{tb:3}
\end{table}

\begin{figure*}[t]
    \begin{tabular}{cc}
      \begin{minipage}[t]{0.33\hsize}
        \centering
        \includegraphics[width=.95\columnwidth]{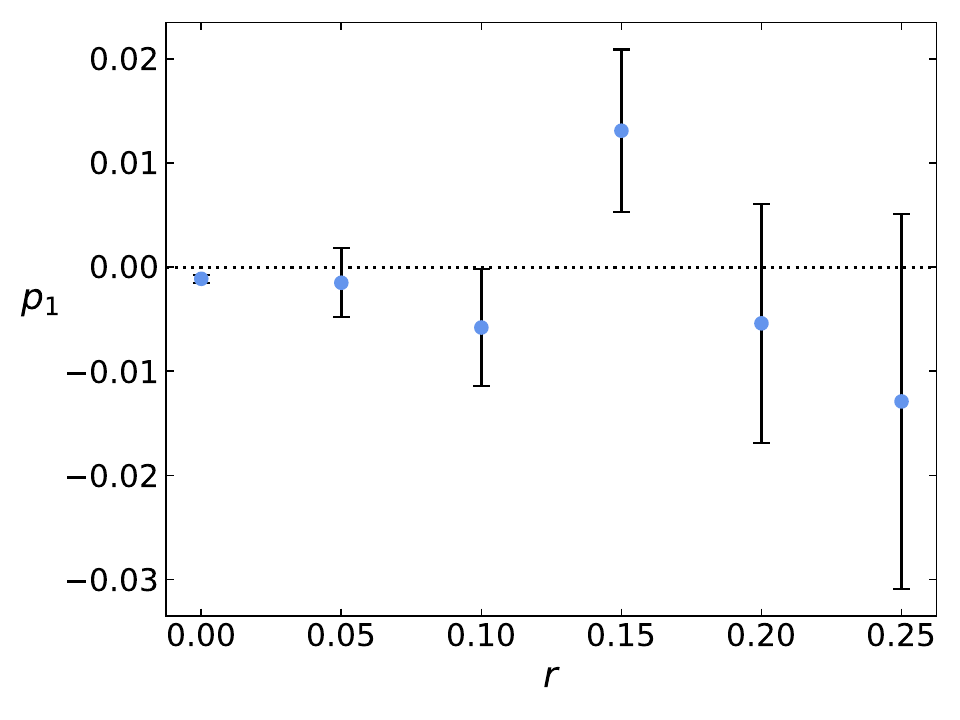}
        \subcaption{Parameter $p_1$ as a function of $r$. The true value is $p_1=0$.}
        \label{fig:8a}
      \end{minipage} &
      \begin{minipage}[t]{0.33\hsize}
        \centering
        \includegraphics[width=.95\columnwidth]{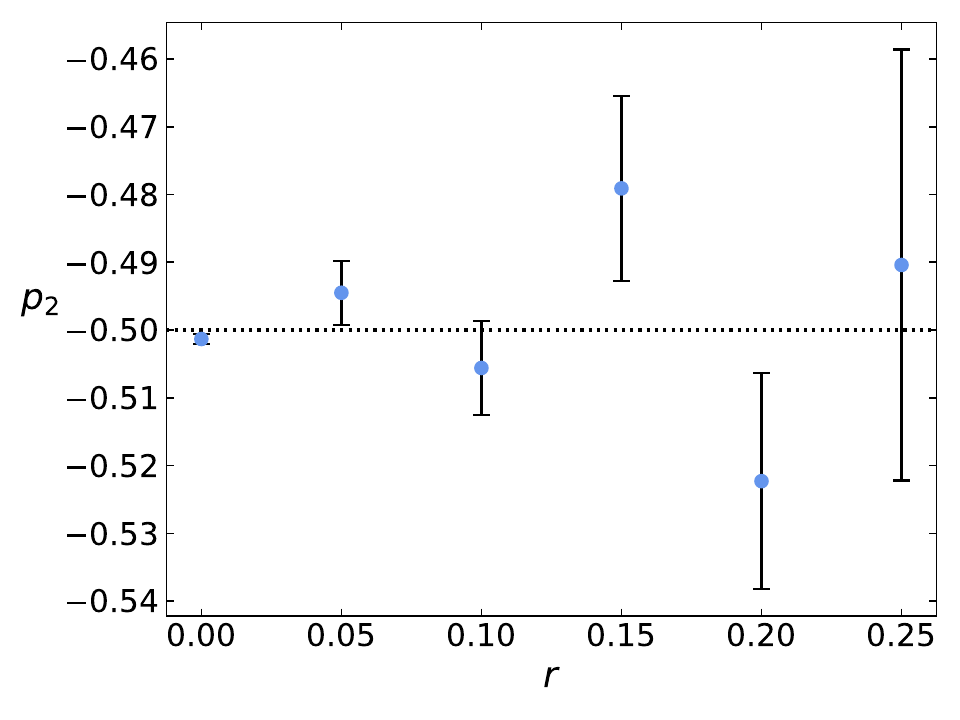}
        \subcaption{Parameter $p_2$ as a function of $r$. The true value is $p_2=-1/2$.}
        \label{fig:8b}
      \end{minipage}
    \end{tabular}
    \caption{The trained parameters after the normalization procedure and the values of the loss as a function of the noise strength in the synthetic data with the regularization term $\lambda = 1$. In Fig.~\ref{fig:8a} and \ref{fig:8b}, the error bars are estimated by the bootstrap method.}
    \label{fig:8}
\end{figure*}

\subsection{Training with experimental data}

Now we turn to training with actual experimental data. The data are taken from Ref.~\cite{Maruoka_2023} with other data plots using the different PDMS surface and different spheres. The size of the data set is $N_{\rm data} = 127$.
The architecture of the neural network is the same as that of the previous section.
To begin with, we performed training without regularization terms, which resulted in the following estimates of the parameters:
\begin{equation}
    {\bm w}^{(0)} = 
    \begin{pmatrix}
    0.751 & -0.245 & -0.250 & -0.006
    \end{pmatrix}^\top\,.
\end{equation}
This combination of parameters is almost proportional to the already known combination \eqref{eq:1dim_LHS_combination} which appears in the LHS of Eq.~\eqref{eq:1dim_parametrization}.
In other words, this result is trivial and the neural network has not learned non-trivial self-similarity.
To avoid finding trivial relations, we need to introduce regularization terms.
We show the results of training with regularization terms, 
$\lambda=0.5$ and $1$, in Fig.~\ref{fig:12}.
The plots are scattered compared to the case of the synthetic data, which may be due to the errors in the experimental data.
The neural network learns non-trivial self-similarity, where the normalized parameters are found to be
\begin{align}
    (p_1,p_2) &= (0.44\pm 0.24, -0.46\pm 0.14)~~(\lambda =0.5) \,,
\label{eq:ex_lambda_one_half} \\
    (p_1,p_2) &= (0.43\pm 0.10, -0.463\pm 0.034)~~(\lambda =1) \,,
\end{align}
while the expectations by the theoretical model are $(p_1,p_2)=(0,-0.5)$.
Here the reference values are obtained in a single training with all experimental data and the statistical uncertainties are estimated by the bootstrap method with $N_{\rm bs}=50$.
The statistical uncertainties of the $\lambda=1$ case are smaller than those of the $\lambda=0.5$ case.
This may be due to the fact that as the magnitude of the regularization term increases, the parameter values become more localized.
Therefore, the regularization term should not be too large to estimate the error.
In Eq.~\eqref{eq:ex_lambda_one_half}, the theoretical value \eqref{eq:true-value} is included in the $2\sigma$ range.
Compared to the case of training on synthetic data, the results 
show greater variability in each trial and deviate from true value. The estimation of experimental data depends on the quality and quantity of data. 
Experimental data generally include unexpected errors driven by uncontrolled factors.
The data set includes the data points of different surface treatment\footnote{The surface of data in Ref.~\cite{Maruoka_2023} was dusted by chalks, which was found to be most effective treatment to eliminate the adhesive effect. However, the treatment of other data was coated with the grease WD-40, which also had the effect of eliminating the adhesive effect though it was not as effective as dusting with chalks. Their data collapse plotted using Eq.~\eqref{eq:phi-truth} was not clear. See the supplemental materials of Ref.~\cite{Maruoka_2023}.}. It is possible that such a different treatment may be reflected in the deviation.
The present results may change depending on the architecture of the neural network. 

Nevertheless, the estimation of exponents using the experimental data supplies a nontrivial insight into the self-similar structure of the problem.

\begin{figure*}[t]
  \begin{minipage}[b]{0.45\linewidth}
    \centering
    \includegraphics[width=7cm]{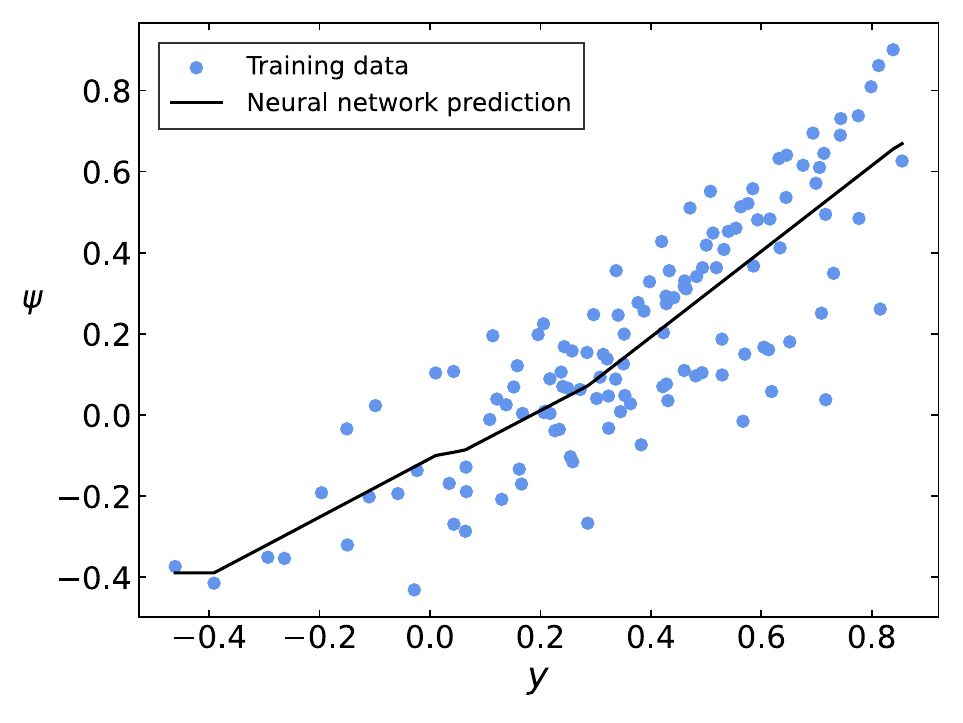}
    \subcaption{$\lambda=0.5$}
  \end{minipage}
  \hspace{5mm}
  \begin{minipage}[b]{0.45\linewidth}
    \centering
    \includegraphics[width=7cm]{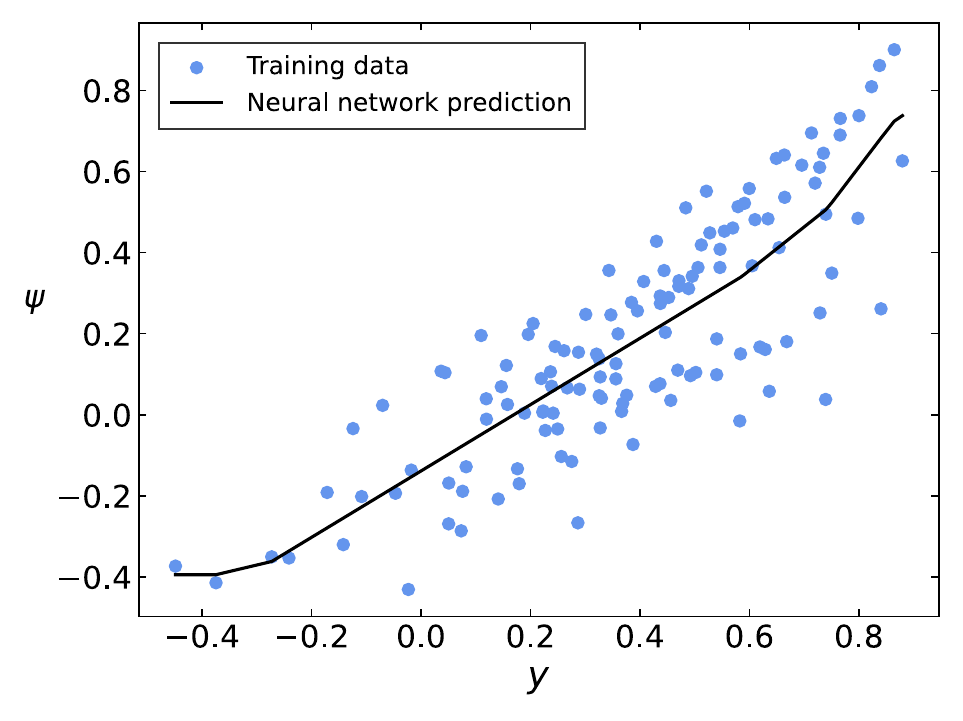}
    \subcaption{$\lambda=1$}
  \end{minipage}
  \caption{The results of training on the experimental data with regularization terms.}
  \label{fig:12}
\end{figure*}

\section{ Extension to the cases where the scaling function has multiple arguments
}\label{sec:example-two-combinations}

The present method can be extended to the situations where scaling functions have multiple arguments.
In this section, we show that our method works for the cases where the scaling function has two arguments by using synthetic data.
We here explore two examples. One is a hypothetical model, and the other is based on a physical model.

\subsection{ Synthetic system }\label{sec:two-var-ex1}

Let $\{\Psi,\pi_0,\pi_1,\pi_2,\pi_3\}$ be the set of dimensionless parameters in the system, 
and suppose that $\Psi$ is written in terms of other parameters as
\begin{equation}
    \Psi = \Phi(Z_1,Z_2),
    \label{two_var}
\end{equation}
where the two arguments are 
\begin{equation}
Z_1 = \pi_0 \pi_1 \pi_2,\quad Z_2 =\pi_0 \pi_3^{\frac{1}{2}} .
\end{equation}
and the function $\Phi$ is taken to be
\begin{equation}
 \Phi(Z_1,Z_2) \coloneqq (1+Z_1+Z_1^2) e^{Z_2}.
  \label{eq:two_var_func}
\end{equation}
We randomly generated $\pi_0,\pi_1,\pi_2,\pi_3$ from [0,1], 
and computed $\Psi$ according to Eq.~\eqref{two_var}.
We add noises to the generated data in the same
way as in Sec.~\ref{sec:training-synthetic}.

Suppose that we do not know the combination
of the argument of the RHS of Eq.~\eqref{two_var}.
We would like to determine this combination
based on the synthetic data.
Let us parametrize the argument as
\begin{equation}
    \Psi=\Phi (\pi_0^{p^{(0)}_0}\pi_1^{p^{(0)}_1}\pi_2^{p^{(0)}_2}\pi_3^{p^{(0)}_3} ,\pi_0^{p^{(1)}_0}\pi_1^{p^{(1)}_1}\pi_2^{p^{(1)}_2}\pi_3^{p^{(1)}_3} ).
    \label{solvable}
\end{equation}
Introducing the log of the parameters,
\begin{equation}
\begin{split}
 &\psi  \coloneqq \ln \Psi, \quad 
 x_0\coloneqq \ln \pi_0,\quad 
 x_1\coloneqq \ln \pi_1, 
 \\
 &x_2\coloneqq \ln \pi_2, \quad 
 x_3\coloneqq \ln \pi_3,    
\end{split}
\end{equation}
Eq.~\eqref{solvable} can be written as
\begin{equation}
\begin{split}
 \psi &=\ln \Phi \left( \exp \left[\sum_{i=0}^3 p^{(0)}_i x_i\right], 
    \exp \left[\sum_{i=0}^3 p^{(1)}_i x_i\right] \right)
\\    
& \eqqcolon \phi \left( \sum_{i=0}^{3}p^{(0)}_i x_i, \sum_{i=0}^{3}p^{(1)}_i x_i \right),    
\end{split}
\end{equation}
where we have defined
\begin{equation}
    \phi(x,y) \coloneqq \ln(1+e^x+e^{2x}) + e^y.
\end{equation}

The true values of the vectors $\bm p^{(0)},\bm p^{(1)}$ are given by
\begin{equation}\label{eq:art_two_var_ans}
    {\bm p_{\rm true}^{(0)} }= 
    \begin{pmatrix}
    1 & 1 & 1 & 0    
    \end{pmatrix}^\top, 
    \quad 
    {\bm p_{\rm true}^{(1)} }=
    \begin{pmatrix}
    1 & 0 & 0 & \frac{1}{2}
    \end{pmatrix}^\top.
\end{equation}

We estimated their true values from the data by using
neural networks as explained in Sec.~\ref{sec:method}.
The structure of the neural network we used is
4-2-100-100-100-100-1.
We generated 1,000 data points.
The number of epochs was set 20,000 and 
we used the mean squared error \eqref{eq:MSE_loss} as the loss function.

To make a meaningful comparison, we have to fix the ambiguities of exponents discussed in Sec.~\ref{sec:method}.
The raw values of the exponents obtained from the trained neural network are in general linear combinations of $\bm p^{(0)}$ and $\bm p^{(1)}$.
We fix the ambiguities as follows. 
Let us denote the raw  parameters by $\bm w^{(0,0)}$ and $\bm w^{(0,1)}$\footnote{
Even if the training is perfect, these parameters are linear combinations of $\bm p^{(0)}$ and $\bm p^{(1)}$, 
$\bm w^{(0,0)} = c_0 \bm p^{(0)} + c_1\bm p^{(1)}, \bm w^{(0,1)}= c_2 \bm p^{(0)} + c_3 \bm p^{(1)}$.}.
We first add a vector $\propto \bm w^{(0,1)}$ to $\bm w^{(0,0)}$
in such a way that the third components of the total vector is zero. 
Then, we further scale the obtained vector so that its zeroth component becomes 1, and the resulting vector is denoted by $\bm p^{(0)}$.
Similarly, we add a vector $\propto \bm p^{(0)}$ to $\bm w^{(0,1)}$
so that the first component of the total vector is zero,
and then we scale it to make the zeroth component to unity. 
The operation described above is summarized as follows: 
\begin{align}
 &\bm a = \bm w^{(0,0)} - \frac{w^{(0,0)}_3}{w^{(0,1)}_3}\bm w^{(0,1)}, \ \ \bm p^{(0)} = \frac{1}{a_0} \bm a \label{op:1} \\
 &\bm b = \bm w^{(0,1)} - \frac{w^{(0,1)}_1}{p^{(0)}_1} \bm p^{(0)}, \ \ \bm p^{(1)} = \frac{1}{b_0} \bm b \label{op:2}.
\end{align}

In Table~\ref{TABLE:II} and Fig.~\ref{fig:p_and_q_plot_two}, 
we show the results of estimated exponents 
for different values of noise strength $r$, 
alongside their the ground-truth values. 
The uncertainties associated with these estimates were determined using the bootstrap method, employing a resampling count $N_{\rm bs} =50$.
We find that the estimated values are consistent with true values 
across all examined noise strengths ($r =0, 0.05, 0.1$). 
We also observe that the uncertainties are increasing function of the noise strength.
In Fig.~\ref{fig:two_var_loss}, we depict the behavior of the loss function post-training as a function of the noise strength $r$.
Similarly to the case when a single combination is estimated, the quadratic rise of the loss in $r$ signifies the absence of overfitting.

In Fig.~\ref{fig:two var artificial sym result}, 
we plot $\psi$ of the data (black points) and the neural network (blue surface) as a function of the two combinations of the parameters 
estimated in the training. 
This visualization indicates a successful data collapse,
affirming the accuracy of the parameter estimation.

\begin{table*}[t]
  \centering
  \begin{minipage}{1.0\textwidth}
      \begin{tabular}{ccccc}
        $r$ & $p^{(0)}_1$ & $p^{(0)}_2$ & $p^{(1)}_2$ & $p^{(1)}_3$\vspace{.5mm}\\ \hline
        $0$ & $0.980  \pm 0.017$ & $0.982\pm 0.023$ & $-0.0001 \pm 0.0029$ & $0.5055\pm 0.0044$ \vspace{.5mm}\\ 
        \vspace{.5mm}
        $0.05$ & $0.937 \pm 0.067$ & $0.953\pm 0.053$ & $0.0107 \pm 0.0067$ & $0.528\pm 0.027$\vspace{.5mm}\\ 
        \vspace{.5mm}
        $0.1$ & $0.97  \pm 0.11$ & $1.01 \pm 0.12$ & $-0.014 \pm 0.030$ & $0.565 \pm 0.092$\vspace{.5mm}\\ 
        \hline
        \hline
        true & $1$ & $1$ & $0$ & $0.5$\vspace{.5mm}
      \end{tabular}
  \end{minipage}
  \caption{Normalized parameters $\bm p^{(0)},\bm p^{(1)}$ 
  for different values of noise strength $r$. }
  \label{TABLE:II}
\end{table*}

\begin{figure}
\begin{minipage}{\textwidth}
    \includegraphics[clip,trim=0cm 0cm 0cm 0cm,width=\columnwidth]{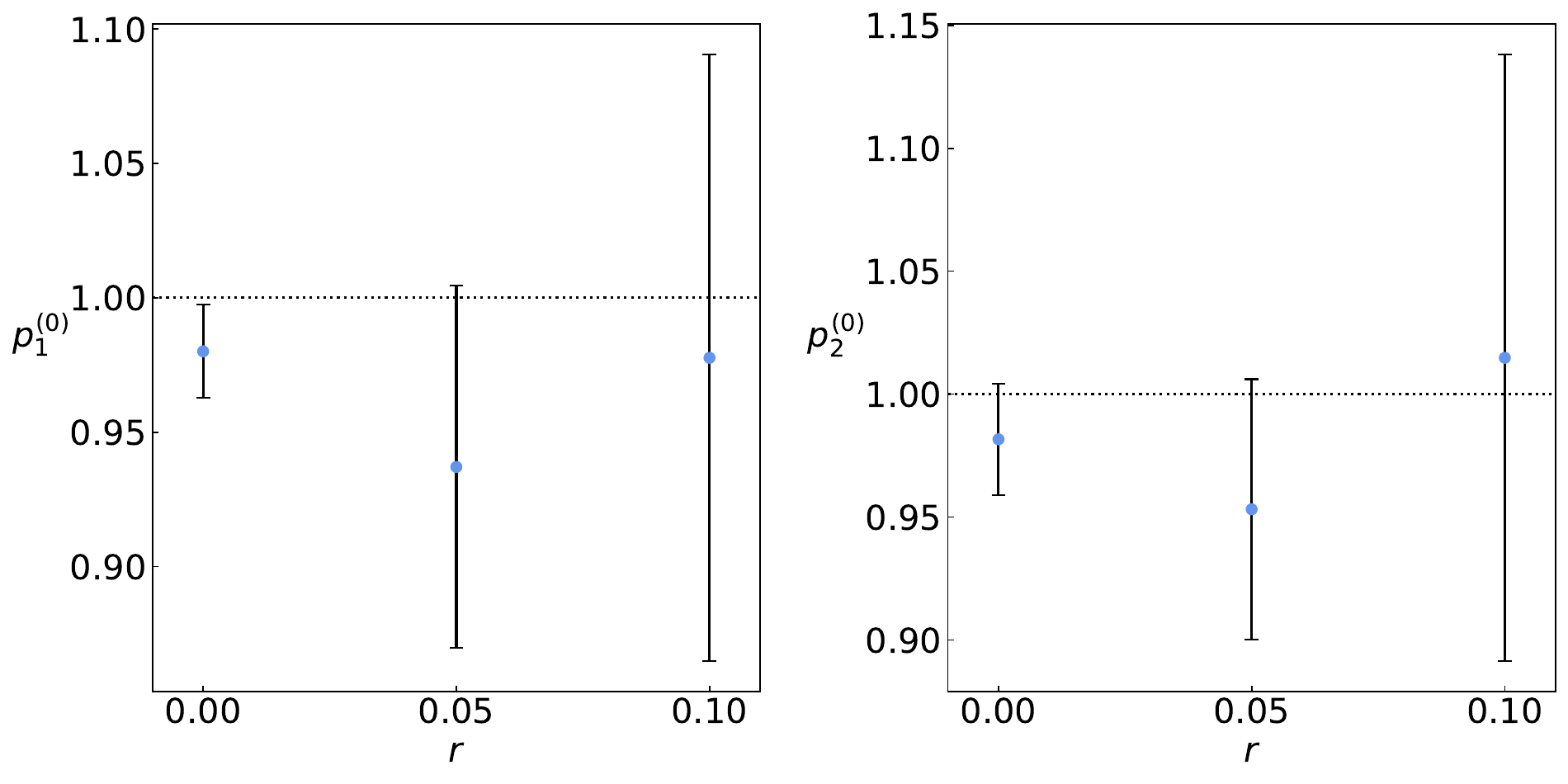}
    \centering
    \subcaption{The noise strength $r$ dependency of parameters $\bm p^{(0)}$}
    \label{fig:p_plot_two_var}
\end{minipage}\\
\begin{minipage}{\textwidth}
    \includegraphics[clip,trim=0cm 0cm 0cm -1cm,width=\columnwidth]{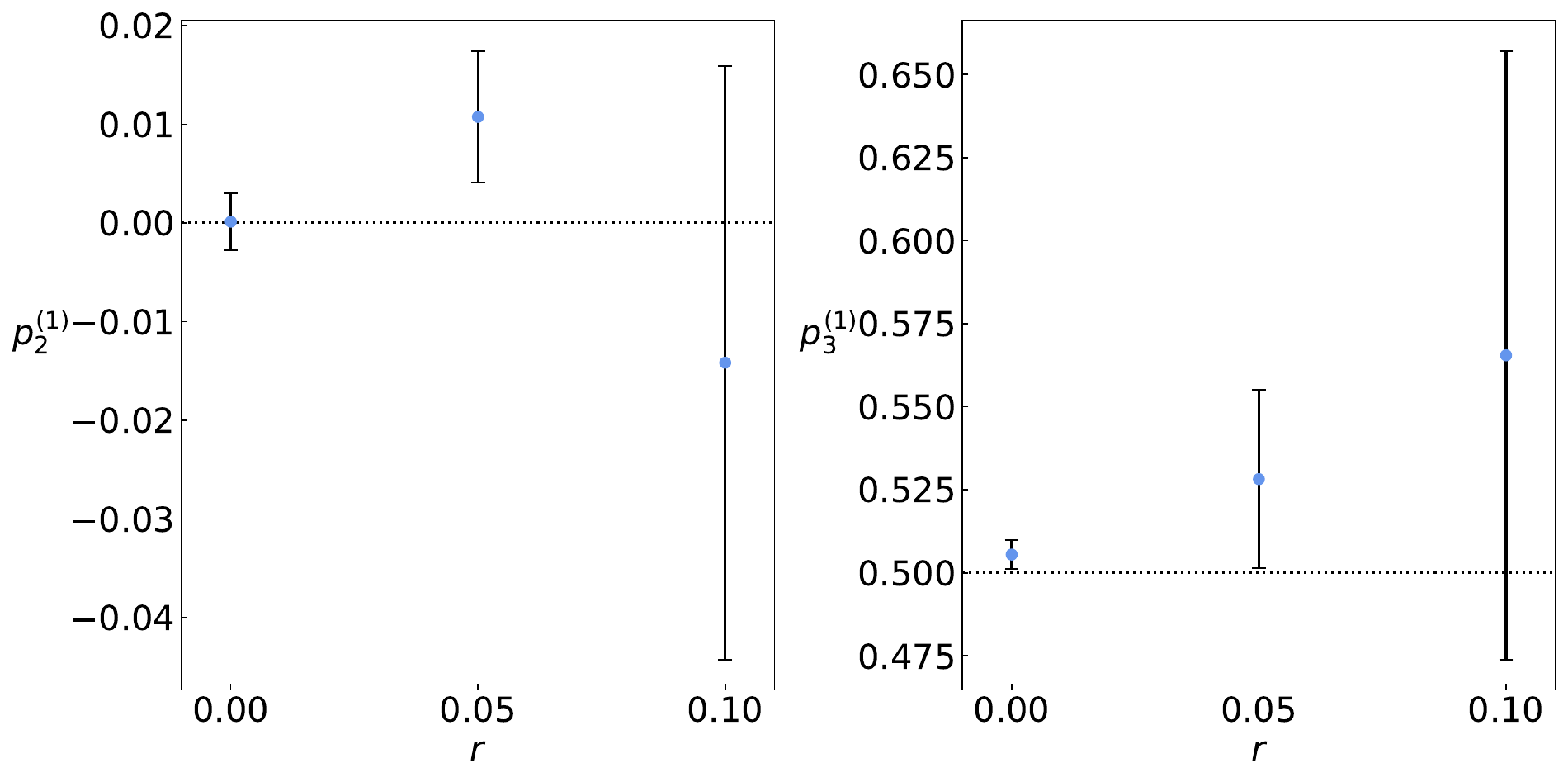}
    \centering
    \subcaption{The noise strength $r$ dependency of parameters $\bm p^{(1)}$}
    \label{fig:q_plot_two_var}
\end{minipage}
    \centering
    \caption{The noise strength $r$ dependency for each parameters}
    \label{fig:p_and_q_plot_two}
\end{figure}

\begin{figure}
    \centering
    \includegraphics[clip,trim=0cm 14.8cm 20cm 0cm,width=\columnwidth]{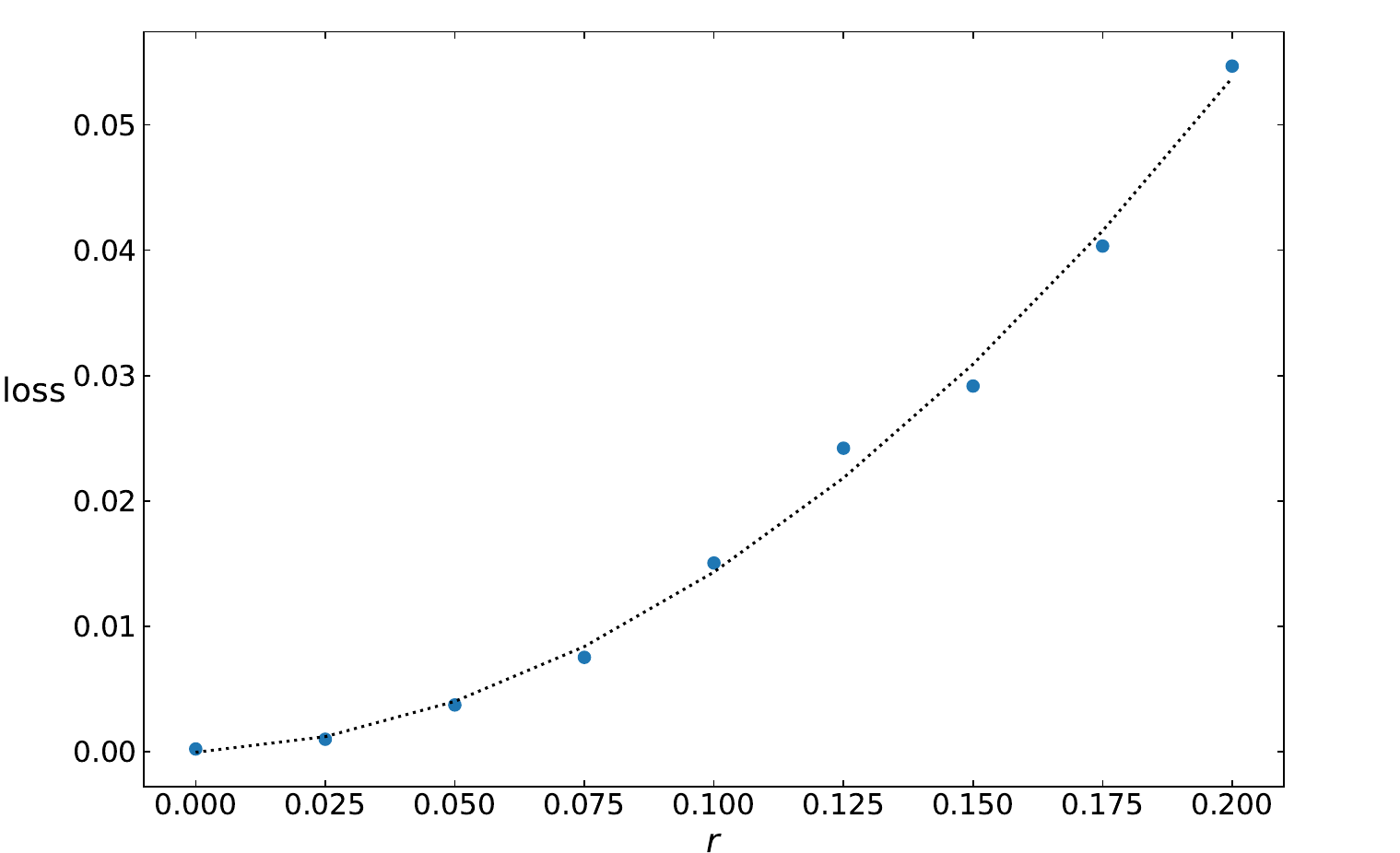}
    \caption{Loss function as a function the noise strength~$r$. The blue points are the values of the loss and the black dotted line represents a quadratic fit (${\rm loss} = 1.25r^2 + 0.0186r - 0.0000542$).}
    \label{fig:two_var_loss}
\end{figure}

\begin{figure*}[t]
\begin{tabular}{cc}
\begin{minipage}{0.325\hsize}
\centering
\includegraphics[clip,trim=4cm 5cm 4.3cm 7cm,width=\columnwidth]{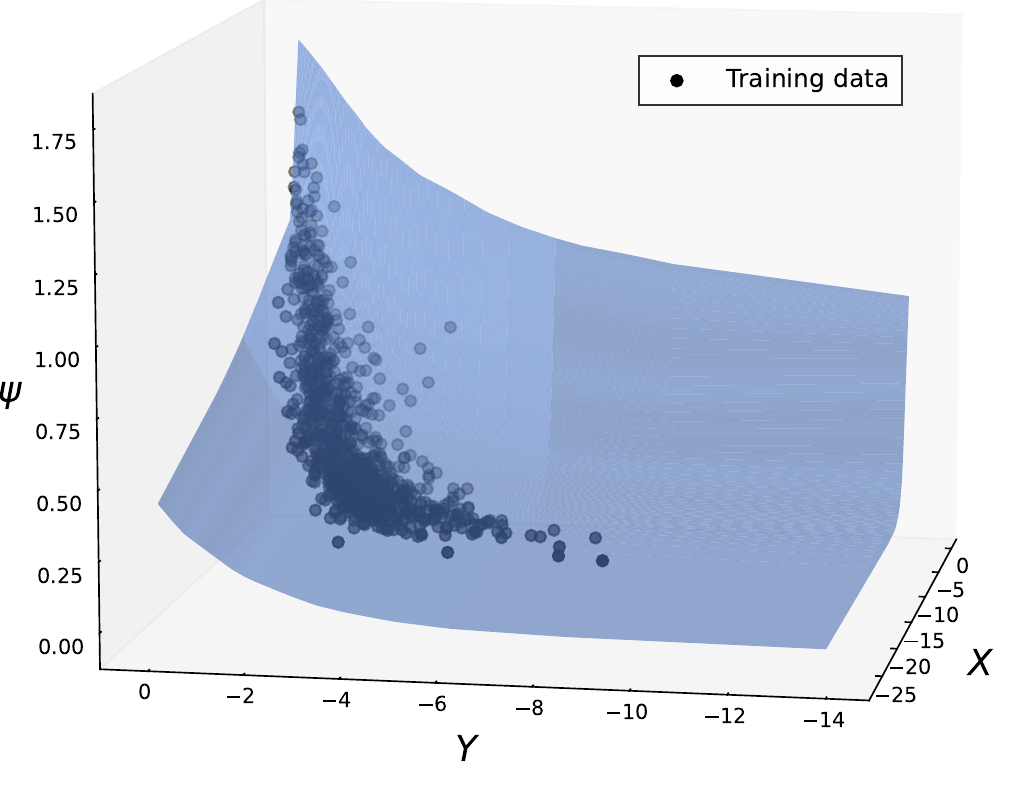}
\subcaption{$r=0$}
\label{fig:0 errror}
\end{minipage}
\begin{minipage}{0.325\hsize}
\includegraphics[clip,trim=4cm 5cm 4.3cm 7cm,width=\columnwidth]{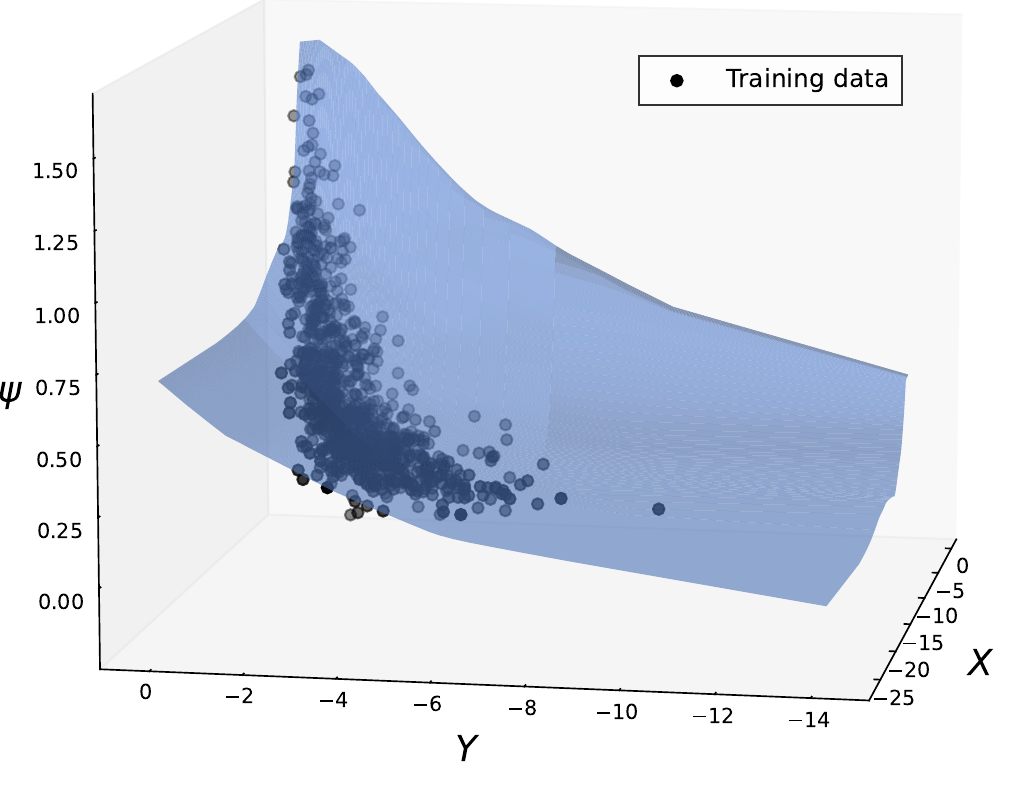}
\subcaption{$r=0.05$}
\label{fig:5 error}
\end{minipage}
\begin{minipage}{0.325\hsize}
\includegraphics[clip,trim=4cm 5cm 4.3cm 7cm,width=\columnwidth]{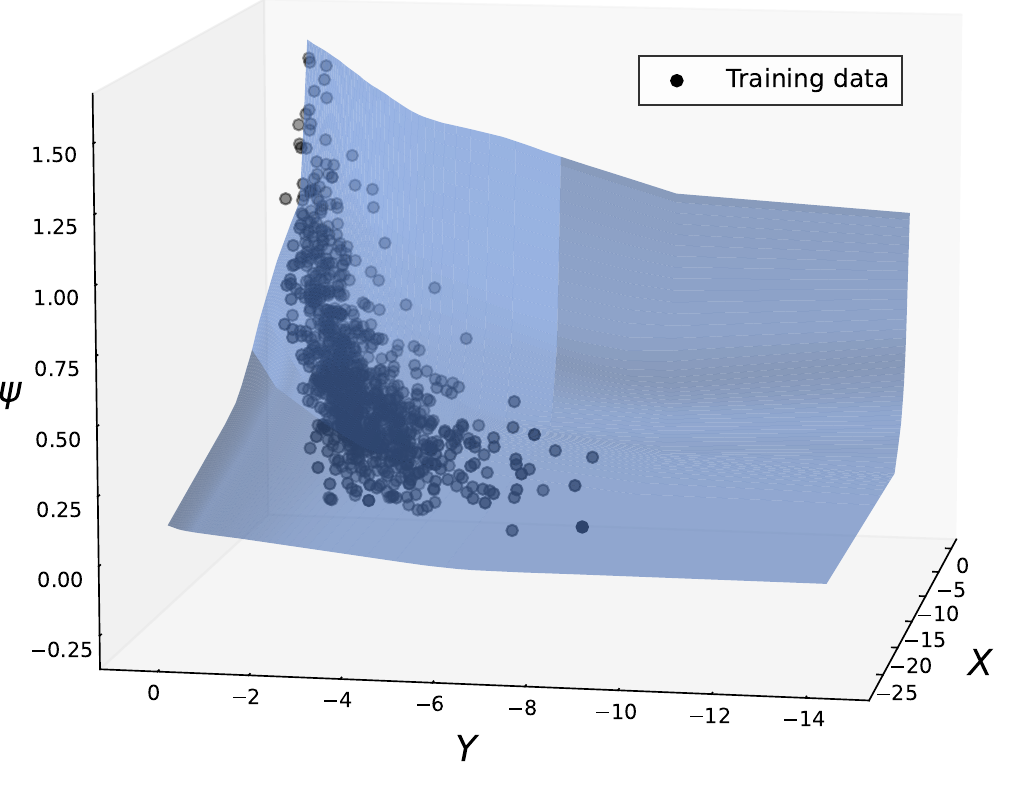}
\subcaption{$r=0.1$}
\label{fig:10 error}
\end{minipage}
\end{tabular}
\caption{The black points plot the training data $\psi^{(i)}$ and the blue surface represents neural network predictions $\psi_{\rm NN}^{(i)}$ as a function of $X^{(i)} = \bm p^{(0)} \cdot \bm x^{(i)}$ and $Y^{(i)} =\bm p^{(1)} \cdot \bm x^{(i)}$.}
\label{fig:two var artificial sym result}
\end{figure*}

\subsection{ Dynamical impact of a solid sphere onto a Zener viscoelastic board }\label{sec:example-two-combinations_Zener}

\subsubsection{Setting}

In the following, we examine our methods using the synthetic data 
corresponding to the dynamical impact of a solid sphere onto a Zener viscoelastic board.
The experimental configuration is the same as Sec.~\ref{sec:example-1}, but the Zener element is used for the foundation of viscoelasticity instead of the Maxwell element\footnote{The dynamical impacts of Zener viscoelastic materials were considered in the supplemental material of Ref.~\cite{Maruoka_2023}}.
The Zener model can be depicted as a dashpot $\mu$ and an elastic spring $E$ connected in series, which is in turn connected in parallel with another elastic spring $E_K$.

The physical parameters of the system are as follows,
\begin{equation}
\bm z = \left(\delta_m, h, R, \rho, \mu, E, E_K, v_i \right)^\top.
\end{equation}
We introduce the dimensionless parameters as 
\begin{equation}
\begin{split}
\Pi &\coloneq \frac{\delta_m}{R}, \quad 
\kappa \coloneq \frac{h}{R}, \quad 
\theta \coloneq \frac{\mu}{E^{1/2} \rho^{1/2} R}, 
\\
\nu &\coloneq \frac{E}{E_K}, \quad 
\eta \coloneq \frac{\rho v_i^2}{E}.    
\end{split}
  \label{eq:Zener-pi}
\end{equation}
In terms of the dimensionless parameters, 
the physically significant relation in this system is written 
as~\cite{Maruoka_2023} 
\begin{equation}
 \Psi = \Phi(Z, \nu), 
  \label{eq:two_var_eq}
\end{equation}
where
\begin{equation}
  \Psi = \frac{\Pi^3}{\eta \kappa},
  \quad 
   Z = \frac{\Pi}{\theta \eta^{1/2}},
\end{equation}
and the function $\Phi(Z, \nu)$ is defined as
\begin{equation}
 \Phi (Z, \nu) \coloneq \frac{2Z}{\frac{Z}{\nu}  + 3( 1 - e^{-Z} )}.
\end{equation}

Suppose we perform the rescaling of units~\eqref{eq:unit-rescale}.
The dimension function $\phi$ acting on physical parameters $\bm z$ is given as
\begin{equation}
\phi = \left( L, L, L, \frac{M}{L^3}, \frac{M}{LT}, \frac{M}{LT^2}, \frac{M}{LT^2}, \frac{L}{T} \right)^\top.
\end{equation}
An element $\varphi \in G^{(\rm s)}_{\rm second}$ acting on the dimensionless parameters $\bm \pi =( \Pi, \kappa, \eta, \theta, \nu )^{\top}$ is
\begin{equation}
\varphi = \left( A, A^3 B^{-1}, B, A B^{-1/2}, 1 \right)^\top.
\end{equation}
One can easily check that the transformation $\varphi$ leaves Eq.~\eqref{eq:Zener-pi} invariant. The CGM number $\Psi$ is fixed similarity parameter as in the previous section.

\subsubsection{Training}

We examined our neural network method using synthetically generated data according to Eq.~\eqref{eq:two_var_eq}.
We prepared two training data sets. 
In the first data set, 
$\Pi^{(i)}, \eta^{(i)}, \theta^{(i)}, \nu^{(i)}$ are prepared in a similar manner to Sec.~\ref{sec:two-var-ex1} and $\Psi^{(i)}$ is generated by using those data set $\Pi^{(i)}, \eta^{(i)}, \theta^{(i)}, \nu^{(i)}$ through Eq.~\eqref{eq:two_var_eq}.
The second set is generated in such a way that it is closer to actual experimental data.
In actual experiments, 
there can be certain parameters difficult to change 
for technical reasons. 
For example, the impact speed can be readily altered by modifying the drop height of the impactor,
while adjusting the viscous coefficient and the elastic modulus is challenging as these parameters are dependent on the selected viscoelastic board, and the options for such boards are limited.
The second set of training data is prepared 
taking into account this point. 

We introduce the following parameters,
\begin{equation}
\begin{split}
&\psi \coloneq \ln \Psi, \quad 
x_0 = \ln \Pi, \quad
x_1 = \ln \eta, \quad \\
&x_2 = \ln \theta,\quad
x_3 = \ln \nu.
\end{split}
\end{equation}
$\kappa$ can be omitted from the input during training without loss of generality, as per Sec.~\ref{sec:method-nn} .
We parametrize Eq.~\eqref{eq:two_var_eq} as
\begin{equation}
\begin{split}
 \psi &=\ln \Phi \left( \exp \left[\sum_{i=0}^3 p^{(0)}_i x_i\right], 
    \exp \left[\sum_{i=0}^3 p^{(1)}_i x_i\right] \right)
\\    
& \eqqcolon \phi \left( \sum_{i=0}^{3}p^{(0)}_i x_i, \sum_{i=0}^{3}p^{(1)}_i x_i \right),    
\end{split}
\label{eq:psi-phi-zener}
\end{equation}
where we have defined
\begin{equation}    
\phi(a , b) \coloneqq \ln \Phi (e^a, e^b). 
\end{equation}
Hence, the true values of the vectors $\bm p^{(0)}, \bm p^{(1)}$ are given by
\begin{equation}\label{two_var_ans}
    {\bm p_{\rm true}^{(0)} }= 
    \begin{pmatrix}
    1 & -0.5 & -1 & 0    
    \end{pmatrix}^\top, 
    \quad 
    {\bm p_{\rm true}^{(1)} }=
    \begin{pmatrix}
    0 & 0 & 0 & 1
    \end{pmatrix}^\top.
\end{equation}

We estimate these true values by optimizing a neural network as explained in Sec.~\ref{sec:method}.
Namely, we introduce a function corresponding to Eq.~\eqref{eq:psi-phi-zener} using a neural network. 
The structure of the neural network we used is 4-2-100-100-100-1. 
The number of prepared data points is $N_{\rm data} = 250$.
The number of epochs is 20,000 and we used the mean squared error as the loss function.

After optimizing the neural network, we fix the ambiguity 
of pow-law exponents in the following way.
Let $\bm w^{(0,0)}, \bm w^{(0,1)}$ be the 
raw parameter values of the first layer of the optimized neural network.
We compute $\bm p^{(0)}$ and $\bm p^{(1)}$, which are to be compared with the ground-true values, as 
\begin{equation}
\begin{split}
& \bm  a = \bm w^{(0,0)} - \frac{w^{(0,0)}_3}{w^{(0,1)}_3}\bm w^{(0,1)}, \quad
\bm p^{(0)} = \frac{1}{a_0} \bm a, \\
& \bm b = \bm w^{(0,1)} - \frac{w^{(0,1)}_0}{p^{(0)}_0} \bm p^{(0)},\quad
\bm p^{(1)} = \frac{1}{b_3} \bm b.
\end{split}
\end{equation}

Table~\ref{table:data two_var} presents the calculated estimations of the power-law exponents for both data sets.
These estimated values are within two standard deviations, resulting in a successful data collapse for both the random data and the data close to the experiment. 
Figures~\ref{fig:synthetic data two var surface} and \ref{fig:experimental data two var surface} visually illustrate the outcome of the training with the random data and the data under the experimental condition, where the successful data collapse can be clearly seen.

\begin{table*}[t]
  \centering
  \begin{tabular}{ccccc}
    kind of data & $p^{(0)}_1$ & $p^{(0)}_2$ & $p^{(1)}_1$ & $p^{(1)}_2$ \vspace{.5mm}\\ \hline
    synthetic & $-0.499\pm 0.011 $ & $-1.015\pm 0.021 $ & $ 0.0003 \pm 0.0061 $ & $ 0.0059 \pm 0.0067 $\vspace{.5mm}\\ 
    closer to the experimental condition & $  -0.519 \pm 0.018 $ & $ -1.089 \pm 0.086 $ &$ 0.047 \pm 0.053 $ & $  0.228 \pm 0.25 $\vspace{.5mm}\\ 
    \hline\hline
    true &$-0.5$ & $-1$ & $0$ & $0$ 
  \end{tabular}
  \caption{ Power-law exponents obtained by trained neural networks. }
  \label{table:data two_var}
\end{table*}

\begin{figure*}[t]
\begin{tabular}{cc}
\begin{minipage}{0.45\hsize}
\centering
\includegraphics[clip,trim=4cm 5cm 4.3cm 7cm,width=\columnwidth]{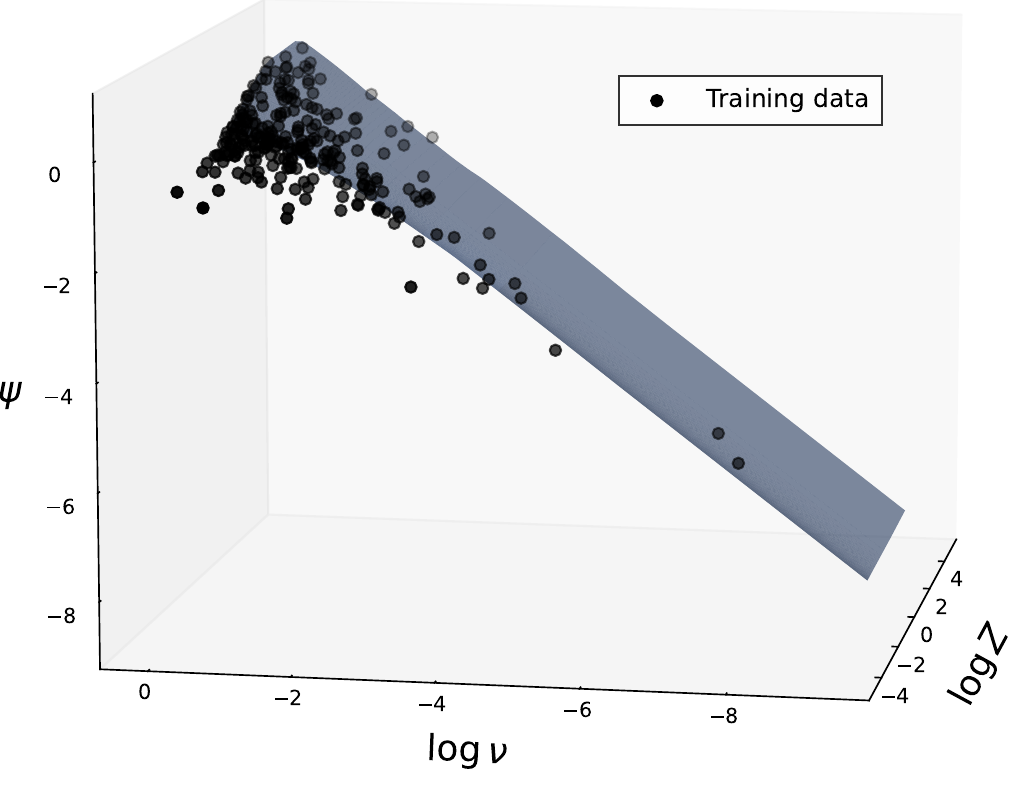}
\subcaption{Training with the first data set.}
\label{fig:synthetic data two var surface}
\end{minipage}
\begin{minipage}{0.45\hsize}
\includegraphics[clip,trim=4cm 5cm 4.3cm 7cm,width=\columnwidth]{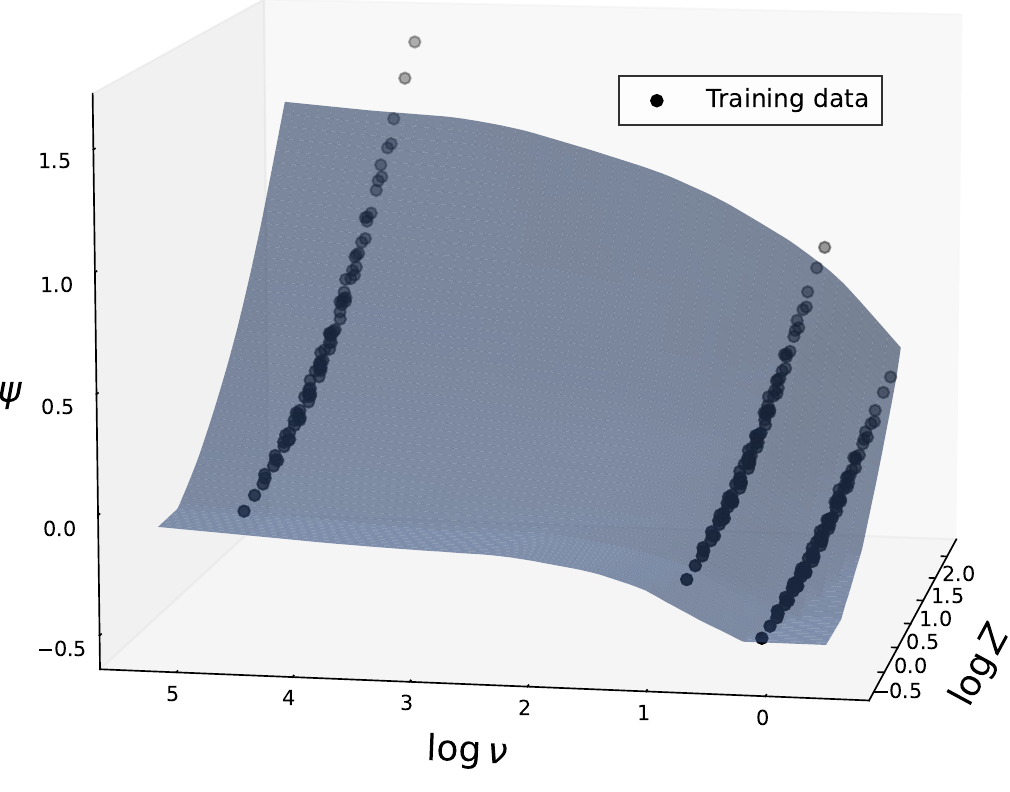}
\subcaption{Training with the second data set, which is prepared to respect experimental constraints.}
\label{fig:experimental data two var surface}
\end{minipage}
\end{tabular}
\caption{The result of training on with 
the first (left) and the second (right) data sets: The black dots represent the training data and the blue surface are the output of the neural network.}
\label{fig:Zener surface}
\end{figure*}

\section{ Conclusion and discussion }\label{sec:conclusion}

In this paper, we have developed a novel approach for finding self-similarity and the associated data collapse
in a data-driven way.
The information of self-similarity is encoded in 
the structures of monomials in the arguments 
of functions characterizing physical relations.
In the proposed method, we implement such dependencies in a parametrized way using neural networks. 
By performing the training of the neural networks
using a given data set, 
one can extract the exponents characterizing the self-similar solutions of the system. 

We have tested the method using synthetic and experimental data.\footnote{
\changed{
Additionally, we tested our method on data without self-similarity of the second kind.
We revealed several indicators suggesting the absence of self-similarity in the data.
See Appendix~\ref{sec:appendix_b} for details.
}
}
In the case where the scaling function is assumed to have single argument, we have shown that our method works effectively using numerically generated data.
We also applied our method to actual experimental data and gained non-trivial insights into self-similarity structure of the system.
The method can be naturally generalized 
to the situations where the scaling function has multiple arguments. 
We confirmed that our method also works well in the case of two arguments using numerically generated data.

In summarizing the outcomes of our test training with both synthetic and experimental data, we can assert that our approach has effectively achieved its aim of instituting the traditional method for investigating the similarity of the second kind.
Traditionally, the identification of data collapses 
has been pursued through a process of trial and error, aside from dimensional analysis.
However, our approach, which utilizes neural networks, paves the way for a new direction.

Here are some tips for effective estimations;
\begin{itemize}
    \item[T1] Ensure the quality and quantity of the dataset.
    \item[T2] The data should cover a wide range of parameters.
    \item[T3] Estimation of similarity parameters should be performed with the minimum numbers of physical parameters.
\end{itemize}

The accuracy of the estimation greatly depends on 
both the quality and the quantity of data. 
A dataset that is too small can lead to overfitting.
The minimum number of data points required for a reliable estimation can depend on various properties such as the setup being considered and the errors in the data. 
From the results of our training on synthetic data, it is clear that our method will work once sufficient data is given.
In the training with actual experimental data discussed in Sec.~\ref{sec:example-1}, the number of data points was $N_{\rm data} = 127$.
We found non-trivial self-similar structure, which seems to be consistent with the previous study within the range of possible uncertainties of the training results.
The quality of data, which is crucial for successful training, is influenced by two main factors.
First is the extent to which the dataset is free from errors. We investigated how noise affects our estimations by introducing synthetic noise into our data. Our analyses show that while noise enlarges uncertainties in the estimations, they remain robust even in the presence of significant noise (up to a relative noise strength $r = 0.25$). 
The other factor is the range of parameters in the data set. 
To facilitate accurate estimations, 
the data should be collected from a wide range of physical parameters.
The range of possible physical parameters is often limited 
especially in actual experiments.
In the example of dynamical impacts of viscoelastic board, 
while the impact velocity can be easily differentiated by changing the height to drop the impactor, 
changing the viscous coefficient and the elastic modulus 
is difficult because these parameters are derived from 
a chosen viscoelastic board and there are not many available choices of boards. 
Such a restriction induces particular correlations in the data,
and we may have to prevent neural networks to find
such correlations by introducing appropriate regularization terms. 

Selecting the appropriate physical parameters is also crucial for accurate estimations.
Every problem has a basic set of parameters that must be considered. 
The recipe of Barenblatt is effective for the screening process\footnote{See Ref.~\cite{barenblatt_1996} (pp.159-160) and Ref.~\cite{barenblatt_2003} (pp.91-93).}.
When a parameter is not involved in the problem at all, the dimensionless function $\Phi$ converges to a finite limit and their intermediate asymptotics is obtained. 
Achieving precise estimations with our neural network approach is facilitated by reducing the number of similarity parameters we need to consider. 

\changed{
In certain cases, however, the data itself may exhibit hysteresis or bivalued behavior~\cite{Hatano,Saitoh},
that can introduce instability into the estimation process. When such behavior is present, the accuracy of the estimation can be significantly improved by restricting the analysis to data points within the single-valued region, as discussed in 
Appendix~\ref{sec:appendix_a}.
}

Finally, let us summarize the limitations of our approach;
\begin{itemize}
\item[L1] At least one similarity parameter should be given a priori as a fixed parameter.
\item[L2] Number of similarity parameters of the second \changed{class} must be assumed.
\item[L3] The approach is only applicable to the similarity of the second kind in which the power-law exponents appearing in the combination of parameters are constants. 
\end{itemize}

L1 is not a critical weakness as it can be obtained naturally in the framework of the crossover of scaling laws. 
Even if the problem is complicated, simplification through idealization can reveal a scaling law within a certain range of scale. This simplification allows us to gradually approach the original problem from its idealized version.

Regarding the second limitation, our strategy requires an initial assumption about the number of combinations 
of the similarity parameters of the second \changed{class}. 
This necessitates a process of trial and error. 
For instance, if assuming two invariants proves insufficient, we may need to explore the possibility of additional invariants. 
\changed{
We find that the estimation becomes visibly unsatisfactory when we have a false assumption about the number of invariants (see Appendix \ref{sec:appendix_b}).
}

The third limitation is more fundamental. Our method yields specific numerical values for power exponents, which may not suit problems where these exponents are nontrivial functions of dimensionless parameters, which corresponds to self-similar solutions of the second kind, Type B\footnote{E.g., the self-similar solution of the second kind for the porous medium equation considering the fissures in the rock in Chapter~3 of Ref.~\cite{barenblatt_2003}.
}. 
If it is possible to determine the parameter range where the power exponents are viewed as unchanging, the present method continues to be of value.
We are of the opinion that this situation can potentially be addressed by adjusting the exponents using supplementary neural networks. The practicality of this approach remains to be evaluated.

Finally, we discuss the broader significance of the present work. Self-similarity is a fundamental concept in physics. As emphasized in Barenblatt's recipe, self-similarity has to be ideally pursued in a data-driven way without introducing biases. Our research demonstrates that modern technologies using neural networks offer a novel method to achieve this. We believe that an approach using neural networks can be a powerful tool for uncovering physical laws across a variety of physical problem. \\

The source codes and the data for this article are openly available from Ref.~\cite{repository}.

\begin{acknowledgments}
The work of R.~W. was supported by Grant-in-Aid for JSPS Fellows No. JP22KJ1940.
The work of Y. ~H. was supported in part by JSPS KAKENHI Grant Numbers JP22H05111, JP22H05118, JP24K23186 and in part by JST, PRESTO Grant Number JPMJPR24K8. The work of H.~M was supported by Grants-in-Aid of MEXT, Japan for Scientific Research, Grant No. JP21H01006.
We wish to thank J. R{\o}nning, S. Lakhal, M. M. Bandi, 
M. Tezuka, T. Yamaguchi, K. Hashimoto, and J. Stout for helpful discussions.
\end{acknowledgments}

\appendix

\section{The estimation for the data including bivalued behavior }\label{sec:appendix_a}

\changed{
In some cases, hysteresis appears in the data collapse relations~\cite{Hatano,Saitoh}, resulting in bivalued behavior. It is therefore important to assess whether our method remains effective when applied to such bivalued data.
}

\changed{
We synthesized data including an artificial hysteresis possessing the same similarity relations 
as those in Eqs.~(\ref{eq:phi-truth}) and~(\ref{eq:dimless-vars_second}), and data collapse is achieved by plotting the similarity parameters, $\Psi = \frac{\Pi^3}{\kappa \eta}$ and $Z = \frac{\Pi}{\theta \eta^{1/2}}$. 
We generated data numerically by using 
relations $\Psi = \Phi_1 \left(Z\right)$ and $\Psi = \Phi_2 \left(Z\right)$, where
\begin{eqnarray}\label{eq:setup_bivalued}
\Phi_1\left( Z\right) &=& \frac{2}{3}\frac{Z}{1-\exp\left(-Z \right)},  \\
\Phi_2\left( Z\right) &=& \frac{2}{3}Z.
\end{eqnarray}
We synthesized 100 number of data points. We examined our approach using this synthetic data.
}

\changed{
Our results suggest that the method can still be effective for data exhibiting bivalued behavior (see Fig.~\ref{fig:hysteresis1}). 
The estimated values were $(p_1,p_2) = (0.1 \pm 4.2, -0.6 \pm 1.5)$ while true values are $(p_1,p_2) = (0.0 , -0.5 )$.
Although there was a slight deviation from the true values, the data collapse was generally effective. However, the high variance in the estimates indicated some instability in the results. 
By selectively analyzing only the data points within the single-valued region (where $Z > 2$), we improved the estimation, yielding $(p_1, p_2) = (0.00 \pm 0.18, -0.50 \pm 0.09)$ (Fig.~\ref{fig:hysteresis2}).
}

\begin{figure*}[t]
\begin{tabular}{cc}
\begin{minipage}{0.5\hsize}
\centering
\includegraphics[width=\columnwidth]{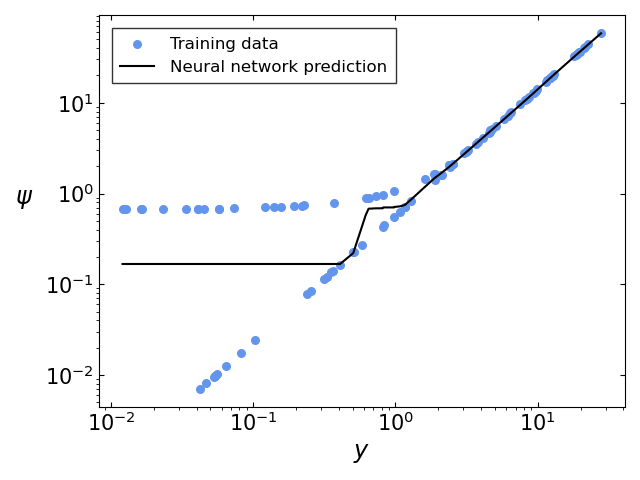}
\subcaption{
Estimation results for data with bivalued behavior using our approach. The estimated values were $(p_1,p_2) = (0.1 \pm 4.2, -0.6 \pm 1.5)$ while true values are $(p_1,p_2) = (0.0, -0.5)$.}
\label{fig:hysteresis1}
\end{minipage}
\begin{minipage}{0.5\hsize}
\includegraphics[width=\columnwidth]{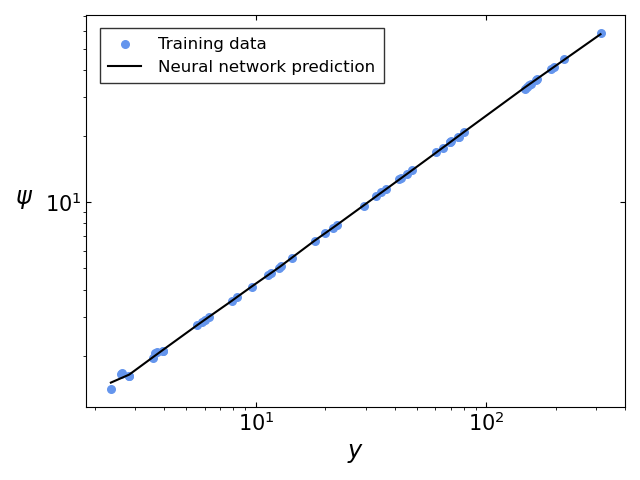}
\subcaption{Estimation results using data points in the single-valued region ($Z>2$). The estimated values were $(p_1,p_2) = (0.00 \pm 0.18, -0.50 \pm 0.09)$ while true values are $(p_1,p_2) = (0.0, -0.5)$.}
\label{fig:hysteresis2}
\end{minipage}
\end{tabular}
\caption{Estimation results for data involving the bivalued behavior where $\Psi = \Pi^3/ \kappa \eta$ and $Z_{NN} = \Pi \kappa^{p_1}\eta^{p2} \theta^{-1}$. The circle dots indicate the data plots of $\Psi$, the solid line indicate the data plots of $\Psi$ obtained from a neural network as $e^{\psi_{NN}}$. }
\label{fig:test_SS_reg_ex}
\end{figure*}

\changed{
These results indicate that selective screening of data points can enhance estimation accuracy, depending on the data characteristics. While using the entire dataset can reveal overall behavior, targeted selection may be necessary for more accurate results in cases involving hysteresis.
}

\section{
Testing the neural network method on data without self-similarity of the second kind}\label{sec:appendix_b}

\changed{
In this appendix, we test our method of detecting self-similarity via neural networks on data that lack self-similarity of the second kind.
}

\changed{
We generated data according to the following relation,
\begin{equation}\label{eq:verify_ss_sincos}
\Phi(x_0,x_1) = 1+ \sin^2 x_0 + \cos^2 x_1.
\end{equation}
As $x_0$ and $x_1$ are considered as dimensionless, $\Phi$ does not exhibit self-similarity of the second kind.
Under this condition, we applied our neural network method, assuming the presence of self-similarity of the second kind, and analyzed its behavior.
The dataset is generated by sampling $x_0$ and $x_1$ from a uniform distribution over $[0,3]$
and computing $\Phi$ via Eq.~\eqref{eq:verify_ss_sincos}. 
We created 1,000 training data points for this analysis.
The neural network architecture consists of layers with 2-1-10-10-1 nodes, respectively. For estimating statistical error via the bootstrap method, we used 100 resampled datasets, allowing replacement.
}

\changed{
Our findings reveal several indicators pointing to the absence of self-similarity in the data.
First, Fig.~\ref{fig:test_loss_hist} shows that the loss values do not decrease significantly during training, indicating poor model convergence.
Second, as can be seen in Fig.~\ref{fig:test_loss_vs_param},
even in cases where the loss values do decrease, 
the second parameter 
is either nearly zero or highly variable, 
meaning that one of the two parameters effectively vanishes.
This results indicates that the best fit is achieved when 
the network output depends on either either $x_0$ or $x_1$.
}

\changed{
Figure~\ref{fig:test_collapse} provides examples of the ``data collapse'' plot, displaying both the training data and the neural network predictions based on the estimated parameters. These plots show poor data collapse, further suggesting the absence of self-similarity in the system.
}

\changed{
In summary, our analysis demonstrates that when applied to a system lacking self-similarity, our neural network method fails to achieve stable learning outcomes. The behavior of the loss function, the variability of key parameters, and the poor data collapse in the plots collectively serve as strong indicators of the absence of self-similarity. 
These results suggest that the distributions of loss values and parameter estimates can act as effective criteria for identifying systems without self-similarity, providing users with diagnostic tools to evaluate the suitability of this method for various datasets.
}

\begin{figure*}[t]
\begin{tabular}{cc}
\begin{minipage}{0.5\hsize}
\includegraphics[clip,trim=0cm 0.5cm 0cm 0cm,width=\columnwidth]{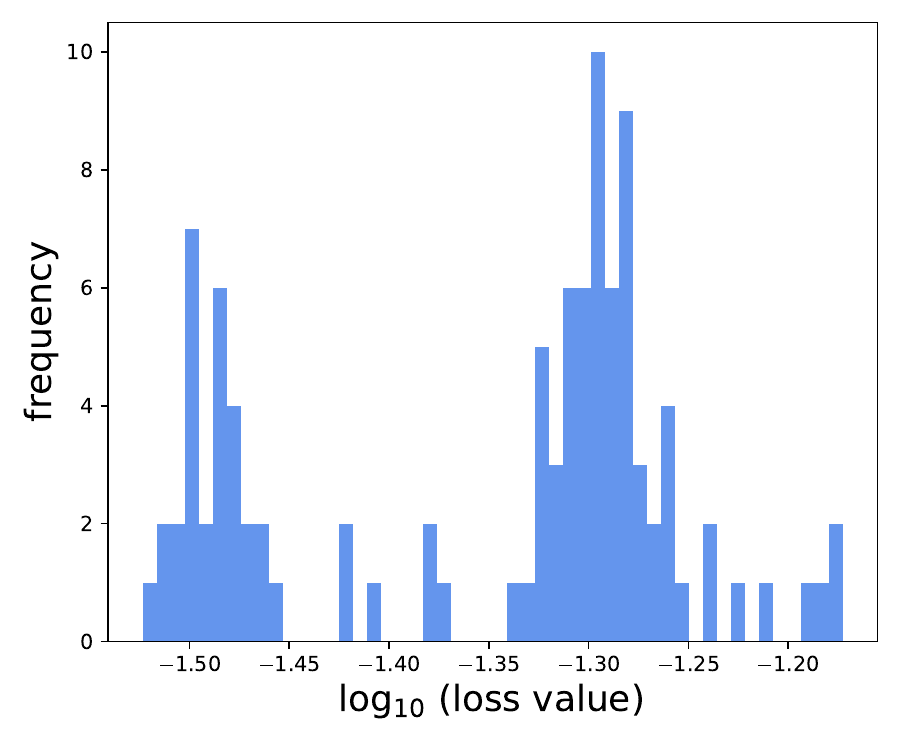}
\subcaption{Histogram of the logarithm of loss values.}
\label{fig:test_loss_hist}
\end{minipage}
\begin{minipage}{0.5\hsize}
\centering
\includegraphics[clip,trim=0cm 0cm 0cm 0cm,width=\columnwidth]{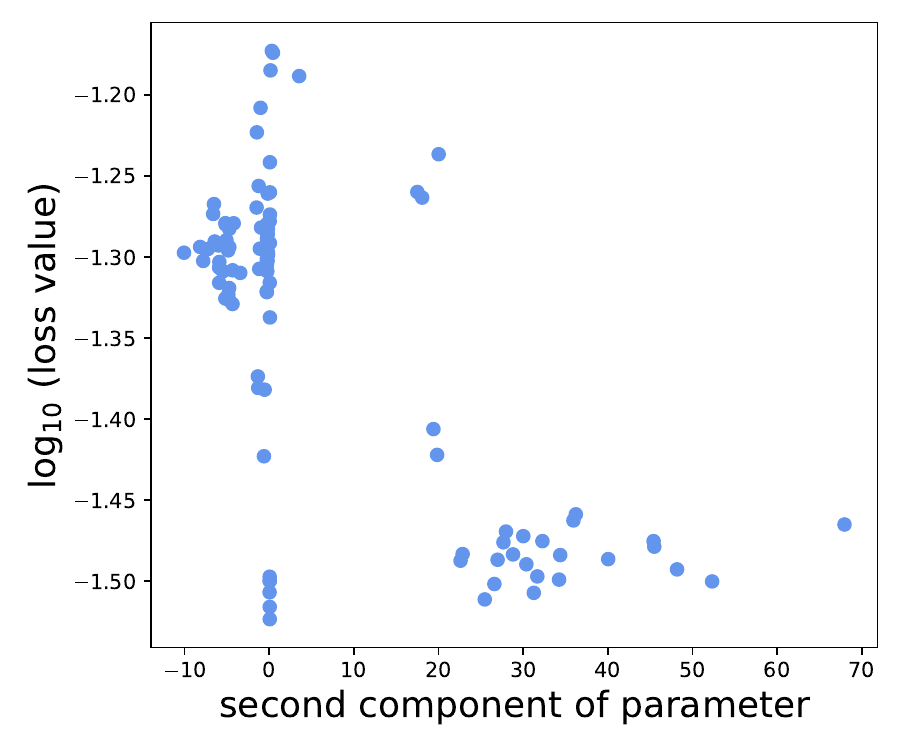}
\subcaption{
Scatter plot of the second parameter and the logarithm of loss values.}
\label{fig:test_loss_vs_param}
\end{minipage}
\end{tabular}
\caption{The distribution of loss values and the relation between the loss value and second parameter on the analysis assuming the existence of self-similarity of the second kind to the data without the self-similarity of the second kind.}
\label{fig:test_loss_param}
\end{figure*}

\begin{figure*}[t]
\begin{tabular}{cc}
\begin{minipage}{0.5\hsize}
\centering
\includegraphics[width=\columnwidth]{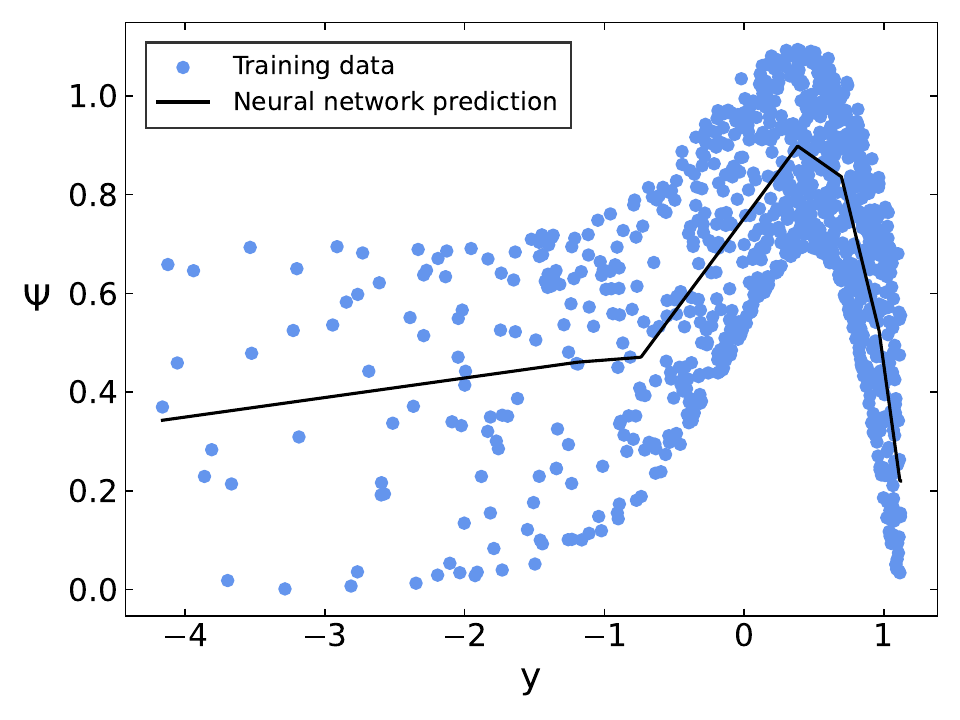}
\subcaption{p =[1,0.03042903]}
\label{fig:test_collapse_0}
\end{minipage}
\begin{minipage}{0.5\hsize}
\includegraphics[width=\columnwidth]{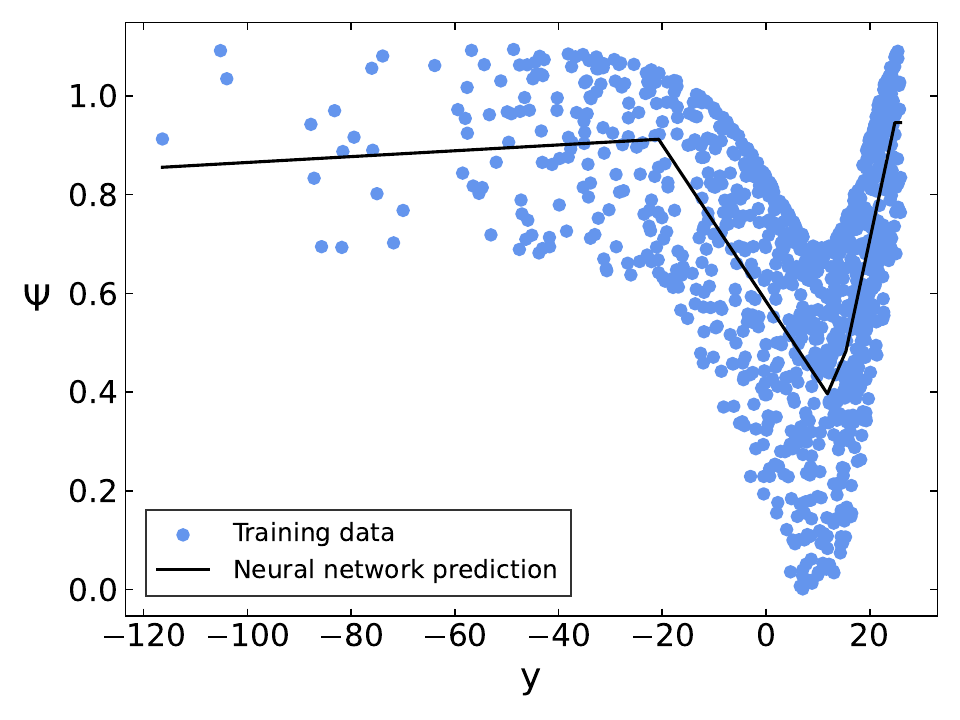}
\subcaption{p=[1, 22.832462]}
\label{fig:test_collapse_30}
\end{minipage}
\end{tabular}
\caption{The examples of training results on the analysis assuming the existence of self-similarity of the second kind to the data without the self-similarity of the second kind, where $y= p \cdot \ln x$ and $\Psi = \ln \Phi$.}
\label{fig:test_collapse}
\end{figure*}

\bibliography{refs}

\begin{thebibliography}{37}%
\makeatletter
\providecommand \@ifxundefined [1]{%
 \@ifx{#1\undefined}
}%
\providecommand \@ifnum [1]{%
 \ifnum #1\expandafter \@firstoftwo
 \else \expandafter \@secondoftwo
 \fi
}%
\providecommand \@ifx [1]{%
 \ifx #1\expandafter \@firstoftwo
 \else \expandafter \@secondoftwo
 \fi
}%
\providecommand \natexlab [1]{#1}%
\providecommand \enquote  [1]{``#1''}%
\providecommand \bibnamefont  [1]{#1}%
\providecommand \bibfnamefont [1]{#1}%
\providecommand \citenamefont [1]{#1}%
\providecommand \href@noop [0]{\@secondoftwo}%
\providecommand \href [0]{\begingroup \@sanitize@url \@href}%
\providecommand \@href[1]{\@@startlink{#1}\@@href}%
\providecommand \@@href[1]{\endgroup#1\@@endlink}%
\providecommand \@sanitize@url [0]{\catcode `\\12\catcode `\$12\catcode `\&12\catcode `\#12\catcode `\^12\catcode `\_12\catcode `\%12\relax}%
\providecommand \@@startlink[1]{}%
\providecommand \@@endlink[0]{}%
\providecommand \url  [0]{\begingroup\@sanitize@url \@url }%
\providecommand \@url [1]{\endgroup\@href {#1}{\urlprefix }}%
\providecommand \urlprefix  [0]{URL }%
\providecommand \Eprint [0]{\href }%
\providecommand \doibase [0]{https://doi.org/}%
\providecommand \selectlanguage [0]{\@gobble}%
\providecommand \bibinfo  [0]{\@secondoftwo}%
\providecommand \bibfield  [0]{\@secondoftwo}%
\providecommand \translation [1]{[#1]}%
\providecommand \BibitemOpen [0]{}%
\providecommand \bibitemStop [0]{}%
\providecommand \bibitemNoStop [0]{.\EOS\space}%
\providecommand \EOS [0]{\spacefactor3000\relax}%
\providecommand \BibitemShut  [1]{\csname bibitem#1\endcsname}%
\let\auto@bib@innerbib\@empty
\bibitem [{\citenamefont {Goldenfeld}(2018)}]{goldenfeld2018lectures}%
  \BibitemOpen
  \bibfield  {author} {\bibinfo {author} {\bibfnamefont {N.}~\bibnamefont {Goldenfeld}},\ }\href {https://doi.org/https://doi.org/10.1201/9780429493492} {\emph {\bibinfo {title} {Lectures on phase transitions and the renormalization group}}}\ (\bibinfo  {publisher} {CRC Press},\ \bibinfo {year} {2018})\BibitemShut {NoStop}%
\bibitem [{\citenamefont {Barenblatt}(1996)}]{barenblatt_1996}%
  \BibitemOpen
  \bibfield  {author} {\bibinfo {author} {\bibfnamefont {G.~I.}\ \bibnamefont {Barenblatt}},\ }\href {https://doi.org/10.1017/CBO9781107050242} {\emph {\bibinfo {title} {Scaling, Self-Similarity, and Intermediate Asymptotics}}},\ Cambridge Texts in Applied Mathematics\ (\bibinfo  {publisher} {Cambridge University Press},\ \bibinfo {year} {1996})\BibitemShut {NoStop}%
\bibitem [{\citenamefont {Barenblatt}(2003)}]{barenblatt_2003}%
  \BibitemOpen
  \bibfield  {author} {\bibinfo {author} {\bibfnamefont {G.~I.}\ \bibnamefont {Barenblatt}},\ }\href {https://doi.org/10.1017/CBO9780511814921} {\emph {\bibinfo {title} {Scaling}}},\ Cambridge Texts in Applied Mathematics\ (\bibinfo  {publisher} {Cambridge University Press},\ \bibinfo {year} {2003})\BibitemShut {NoStop}%
\bibitem [{\citenamefont {de~Gennes}(1979)}]{deGennesScaling}%
  \BibitemOpen
  \bibfield  {author} {\bibinfo {author} {\bibfnamefont {P.-G.}\ \bibnamefont {de~Gennes}},\ }\href {https://doi.org/10.1002/actp.1981.010320517} {\emph {\bibinfo {title} {Scaling concepts in polymer physics}}}\ (\bibinfo  {publisher} {Cornell University Press, New York},\ \bibinfo {year} {1979})\BibitemShut {NoStop}%
\bibitem [{\citenamefont {B\"aumchen}\ \emph {et~al.}(2013)\citenamefont {B\"aumchen}, \citenamefont {Benzaquen}, \citenamefont {Salez}, \citenamefont {McGraw}, \citenamefont {Backholm}, \citenamefont {Fowler}, \citenamefont {Rapha\"el},\ and\ \citenamefont {Dalnoki-Veress}}]{Baumchen}%
  \BibitemOpen
  \bibfield  {author} {\bibinfo {author} {\bibfnamefont {O.}~\bibnamefont {B\"aumchen}}, \bibinfo {author} {\bibfnamefont {M.}~\bibnamefont {Benzaquen}}, \bibinfo {author} {\bibfnamefont {T.}~\bibnamefont {Salez}}, \bibinfo {author} {\bibfnamefont {J.~D.}\ \bibnamefont {McGraw}}, \bibinfo {author} {\bibfnamefont {M.}~\bibnamefont {Backholm}}, \bibinfo {author} {\bibfnamefont {P.}~\bibnamefont {Fowler}}, \bibinfo {author} {\bibfnamefont {E.}~\bibnamefont {Rapha\"el}},\ and\ \bibinfo {author} {\bibfnamefont {K.}~\bibnamefont {Dalnoki-Veress}},\ }\bibfield  {title} {\bibinfo {title} {Relaxation and intermediate asymptotics of a rectangular trench in a viscous film},\ }\href {https://doi.org/10.1103/PhysRevE.88.035001} {\bibfield  {journal} {\bibinfo  {journal} {Phys. Rev. E}\ }\textbf {\bibinfo {volume} {88}},\ \bibinfo {pages} {035001} (\bibinfo {year} {2013})}\BibitemShut {NoStop}%
\bibitem [{\citenamefont {Hatano}(2008)}]{Hatano}%
  \BibitemOpen
  \bibfield  {author} {\bibinfo {author} {\bibfnamefont {T.}~\bibnamefont {Hatano}},\ }\bibfield  {title} {\bibinfo {title} {Scaling properties of granular rheology near the jamming transition},\ }\href {https://doi.org/10.1143/JPSJ.77.123002} {\bibfield  {journal} {\bibinfo  {journal} {Journal of the Physical Society of Japan}\ }\textbf {\bibinfo {volume} {77}},\ \bibinfo {pages} {123002} (\bibinfo {year} {2008})}\BibitemShut {NoStop}%
\bibitem [{\citenamefont {Saitoh}\ and\ \citenamefont {Kawasaki}(2020)}]{Saitoh}%
  \BibitemOpen
  \bibfield  {author} {\bibinfo {author} {\bibfnamefont {K.}~\bibnamefont {Saitoh}}\ and\ \bibinfo {author} {\bibfnamefont {T.}~\bibnamefont {Kawasaki}},\ }\bibfield  {title} {\bibinfo {title} {Critical scaling of diffusion coefficients and size of rigid clusters of soft athermal particles under shear},\ }\href {https://doi.org/10.3389/fphy.2020.00099} {\bibfield  {journal} {\bibinfo  {journal} {Frontiers in Physics}\ }\textbf {\bibinfo {volume} {8}},\ \bibinfo {pages} {99} (\bibinfo {year} {2020})}\BibitemShut {NoStop}%
\bibitem [{\citenamefont {Barenblatt}(2014)}]{Barenblatt_2014}%
  \BibitemOpen
  \bibfield  {author} {\bibinfo {author} {\bibfnamefont {G.~I.}\ \bibnamefont {Barenblatt}},\ }\href {https://doi.org/10.1017/CBO9781139030014} {\emph {\bibinfo {title} {Flow, Deformation and Fracture}}},\ Cambridge Texts in Applied Mathematics\ (\bibinfo  {publisher} {Cambridge University Press},\ \bibinfo {year} {2014})\BibitemShut {NoStop}%
\bibitem [{\citenamefont {Mandelbrot}(1983)}]{Mandelbrot}%
  \BibitemOpen
  \bibfield  {author} {\bibinfo {author} {\bibfnamefont {B.}~\bibnamefont {Mandelbrot}},\ }\href@noop {} {\emph {\bibinfo {title} {The Fractal Geometry of Nature}}}\ (\bibinfo  {publisher} {Macmillan, New York},\ \bibinfo {year} {1983})\BibitemShut {NoStop}%
\bibitem [{\citenamefont {Stanley}(1999)}]{Stanley_1999}%
  \BibitemOpen
  \bibfield  {author} {\bibinfo {author} {\bibfnamefont {H.~E.}\ \bibnamefont {Stanley}},\ }\bibfield  {title} {\bibinfo {title} {Scaling, universality, and renormalization: Three pillars of modern critical phenomena},\ }\href {https://doi.org/10.1103/RevModPhys.71.S358} {\bibfield  {journal} {\bibinfo  {journal} {Reviews of modern physics}\ }\textbf {\bibinfo {volume} {71}},\ \bibinfo {pages} {S358} (\bibinfo {year} {1999})}\BibitemShut {NoStop}%
\bibitem [{\citenamefont {Cabella}\ \emph {et~al.}(2011)\citenamefont {Cabella}, \citenamefont {Martinez},\ and\ \citenamefont {Ribeiro}}]{Cabella}%
  \BibitemOpen
  \bibfield  {author} {\bibinfo {author} {\bibfnamefont {B.~C.~T.}\ \bibnamefont {Cabella}}, \bibinfo {author} {\bibfnamefont {A.~S.}\ \bibnamefont {Martinez}},\ and\ \bibinfo {author} {\bibfnamefont {F.}~\bibnamefont {Ribeiro}},\ }\bibfield  {title} {\bibinfo {title} {Data collapse, scaling functions, and analytical solutions of generalized growth models},\ }\href {https://doi.org/10.1103/PhysRevE.83.061902} {\bibfield  {journal} {\bibinfo  {journal} {Phys. Rev. E}\ }\textbf {\bibinfo {volume} {83}},\ \bibinfo {pages} {061902} (\bibinfo {year} {2011})}\BibitemShut {NoStop}%
\bibitem [{\citenamefont {Kimchi}\ \emph {et~al.}(2018)\citenamefont {Kimchi}, \citenamefont {Sheckelton}, \citenamefont {McQueen},\ and\ \citenamefont {Lee}}]{Kimchi}%
  \BibitemOpen
  \bibfield  {author} {\bibinfo {author} {\bibfnamefont {I.}~\bibnamefont {Kimchi}}, \bibinfo {author} {\bibfnamefont {J.~P.}\ \bibnamefont {Sheckelton}}, \bibinfo {author} {\bibfnamefont {T.~M.}\ \bibnamefont {McQueen}},\ and\ \bibinfo {author} {\bibfnamefont {P.~A.}\ \bibnamefont {Lee}},\ }\bibfield  {title} {\bibinfo {title} {Scaling and data collapse from local moments in frustrated disordered quantum spin systems},\ }\href {https://doi.org/https://doi.org/10.1038/s41467-018-06800-2} {\bibfield  {journal} {\bibinfo  {journal} {Nature communications}\ }\textbf {\bibinfo {volume} {9}},\ \bibinfo {pages} {4367} (\bibinfo {year} {2018})}\BibitemShut {NoStop}%
\bibitem [{\citenamefont {Yokota}\ and\ \citenamefont {Okumura}(2011)}]{Yokota}%
  \BibitemOpen
  \bibfield  {author} {\bibinfo {author} {\bibfnamefont {M.}~\bibnamefont {Yokota}}\ and\ \bibinfo {author} {\bibfnamefont {K.}~\bibnamefont {Okumura}},\ }\bibfield  {title} {\bibinfo {title} {Dimensional crossover in the coalescence dynamics of viscous drops confined in between two plates},\ }\href {https://doi.org/10.1073/pnas.1017112108} {\bibfield  {journal} {\bibinfo  {journal} {Proceedings of the National Academy of Sciences}\ }\textbf {\bibinfo {volume} {108}},\ \bibinfo {pages} {6395} (\bibinfo {year} {2011})}\BibitemShut {NoStop}%
\bibitem [{\citenamefont {Murano}\ and\ \citenamefont {Okumura}(2020)}]{Okumura2020}%
  \BibitemOpen
  \bibfield  {author} {\bibinfo {author} {\bibfnamefont {M.}~\bibnamefont {Murano}}\ and\ \bibinfo {author} {\bibfnamefont {K.}~\bibnamefont {Okumura}},\ }\bibfield  {title} {\bibinfo {title} {Rising bubble in a cell with a high aspect ratio cross-section filled with a viscous fluid and its connection to viscous fingering},\ }\href {https://doi.org/10.1103/PhysRevResearch.2.013188} {\bibfield  {journal} {\bibinfo  {journal} {Phys. Rev. Res.}\ }\textbf {\bibinfo {volume} {2}},\ \bibinfo {pages} {013188} (\bibinfo {year} {2020})}\BibitemShut {NoStop}%
\bibitem [{\citenamefont {Goldenfeld}\ \emph {et~al.}(1989)\citenamefont {Goldenfeld}, \citenamefont {Martin},\ and\ \citenamefont {Oono}}]{goldenfeld1989intermediate}%
  \BibitemOpen
  \bibfield  {author} {\bibinfo {author} {\bibfnamefont {N.}~\bibnamefont {Goldenfeld}}, \bibinfo {author} {\bibfnamefont {O.}~\bibnamefont {Martin}},\ and\ \bibinfo {author} {\bibfnamefont {Y.}~\bibnamefont {Oono}},\ }\bibfield  {title} {\bibinfo {title} {Intermediate asymptotics and renormalization group theory},\ }\href {https://doi.org/https://doi.org/10.1007/BF01060993} {\bibfield  {journal} {\bibinfo  {journal} {Journal of scientific computing}\ }\textbf {\bibinfo {volume} {4}},\ \bibinfo {pages} {355} (\bibinfo {year} {1989})}\BibitemShut {NoStop}%
\bibitem [{\citenamefont {Goodfellow}\ \emph {et~al.}(2016)\citenamefont {Goodfellow}, \citenamefont {Bengio},\ and\ \citenamefont {Courville}}]{Goodfellow-et-al-2016}%
  \BibitemOpen
  \bibfield  {author} {\bibinfo {author} {\bibfnamefont {I.}~\bibnamefont {Goodfellow}}, \bibinfo {author} {\bibfnamefont {Y.}~\bibnamefont {Bengio}},\ and\ \bibinfo {author} {\bibfnamefont {A.}~\bibnamefont {Courville}},\ }\href@noop {} {\emph {\bibinfo {title} {Deep Learning}}}\ (\bibinfo  {publisher} {MIT Press},\ \bibinfo {year} {2016})\ \bibinfo {note} {\url{http://www.deeplearningbook.org}}\BibitemShut {NoStop}%
\bibitem [{\citenamefont {Bishop}\ and\ \citenamefont {Bishop}(2024)}]{bishop2024deep}%
  \BibitemOpen
  \bibfield  {author} {\bibinfo {author} {\bibfnamefont {C.~M.}\ \bibnamefont {Bishop}}\ and\ \bibinfo {author} {\bibfnamefont {H.}~\bibnamefont {Bishop}},\ }\href {https://doi.org/https://doi.org/10.1007/978-3-031-45468-4} {\emph {\bibinfo {title} {Deep learning: foundations and concepts}}}\ (\bibinfo  {publisher} {Springer},\ \bibinfo {year} {2024})\BibitemShut {NoStop}%
\bibitem [{\citenamefont {Cybenko}(1989)}]{cybenko1989approximation}%
  \BibitemOpen
  \bibfield  {author} {\bibinfo {author} {\bibfnamefont {G.}~\bibnamefont {Cybenko}},\ }\bibfield  {title} {\bibinfo {title} {Approximation by superpositions of a sigmoidal function},\ }\href {https://doi.org/https://doi.org/10.1007/BF02551274} {\bibfield  {journal} {\bibinfo  {journal} {Mathematics of control, signals and systems}\ }\textbf {\bibinfo {volume} {2}},\ \bibinfo {pages} {303} (\bibinfo {year} {1989})}\BibitemShut {NoStop}%
\bibitem [{\citenamefont {Hornik}\ \emph {et~al.}(1989)\citenamefont {Hornik}, \citenamefont {Stinchcombe},\ and\ \citenamefont {White}}]{HORNIK1989359}%
  \BibitemOpen
  \bibfield  {author} {\bibinfo {author} {\bibfnamefont {K.}~\bibnamefont {Hornik}}, \bibinfo {author} {\bibfnamefont {M.}~\bibnamefont {Stinchcombe}},\ and\ \bibinfo {author} {\bibfnamefont {H.}~\bibnamefont {White}},\ }\bibfield  {title} {\bibinfo {title} {Multilayer feedforward networks are universal approximators},\ }\href {https://doi.org/https://doi.org/10.1016/0893-6080(89)90020-8} {\bibfield  {journal} {\bibinfo  {journal} {Neural Networks}\ }\textbf {\bibinfo {volume} {2}},\ \bibinfo {pages} {359} (\bibinfo {year} {1989})}\BibitemShut {NoStop}%
\bibitem [{\citenamefont {Wang}\ \emph {et~al.}(2023)\citenamefont {Wang}, \citenamefont {Fu}, \citenamefont {Du}, \citenamefont {Gao}, \citenamefont {Huang}, \citenamefont {Liu}, \citenamefont {Chandak}, \citenamefont {Liu}, \citenamefont {Van~Katwyk}, \citenamefont {Deac} \emph {et~al.}}]{wang2023scientific}%
  \BibitemOpen
  \bibfield  {author} {\bibinfo {author} {\bibfnamefont {H.}~\bibnamefont {Wang}}, \bibinfo {author} {\bibfnamefont {T.}~\bibnamefont {Fu}}, \bibinfo {author} {\bibfnamefont {Y.}~\bibnamefont {Du}}, \bibinfo {author} {\bibfnamefont {W.}~\bibnamefont {Gao}}, \bibinfo {author} {\bibfnamefont {K.}~\bibnamefont {Huang}}, \bibinfo {author} {\bibfnamefont {Z.}~\bibnamefont {Liu}}, \bibinfo {author} {\bibfnamefont {P.}~\bibnamefont {Chandak}}, \bibinfo {author} {\bibfnamefont {S.}~\bibnamefont {Liu}}, \bibinfo {author} {\bibfnamefont {P.}~\bibnamefont {Van~Katwyk}}, \bibinfo {author} {\bibfnamefont {A.}~\bibnamefont {Deac}}, \emph {et~al.},\ }\bibfield  {title} {\bibinfo {title} {Scientific discovery in the age of artificial intelligence},\ }\href {https://doi.org/10.1038/s41586-023-06221-2} {\bibfield  {journal} {\bibinfo  {journal} {Nature}\ }\textbf {\bibinfo {volume} {620}},\ \bibinfo {pages} {47} (\bibinfo {year} {2023})}\BibitemShut {NoStop}%
\bibitem [{\citenamefont {Carleo}\ and\ \citenamefont {Troyer}(2017)}]{doi:10.1126/science.aag2302}%
  \BibitemOpen
  \bibfield  {author} {\bibinfo {author} {\bibfnamefont {G.}~\bibnamefont {Carleo}}\ and\ \bibinfo {author} {\bibfnamefont {M.}~\bibnamefont {Troyer}},\ }\bibfield  {title} {\bibinfo {title} {Solving the quantum many-body problem with artificial neural networks},\ }\href {https://doi.org/10.1126/science.aag2302} {\bibfield  {journal} {\bibinfo  {journal} {Science}\ }\textbf {\bibinfo {volume} {355}},\ \bibinfo {pages} {602} (\bibinfo {year} {2017})}\BibitemShut {NoStop}%
\bibitem [{\citenamefont {Raissi}\ \emph {et~al.}(2019)\citenamefont {Raissi}, \citenamefont {Perdikaris},\ and\ \citenamefont {Karniadakis}}]{RAISSI2019686}%
  \BibitemOpen
  \bibfield  {author} {\bibinfo {author} {\bibfnamefont {M.}~\bibnamefont {Raissi}}, \bibinfo {author} {\bibfnamefont {P.}~\bibnamefont {Perdikaris}},\ and\ \bibinfo {author} {\bibfnamefont {G.}~\bibnamefont {Karniadakis}},\ }\bibfield  {title} {\bibinfo {title} {Physics-informed neural networks: A deep learning framework for solving forward and inverse problems involving nonlinear partial differential equations},\ }\href {https://doi.org/https://doi.org/10.1016/j.jcp.2018.10.045} {\bibfield  {journal} {\bibinfo  {journal} {Journal of Computational Physics}\ }\textbf {\bibinfo {volume} {378}},\ \bibinfo {pages} {686} (\bibinfo {year} {2019})}\BibitemShut {NoStop}%
\bibitem [{\citenamefont {Tanaka}\ and\ \citenamefont {Tomiya}(2017)}]{doi:10.7566/JPSJ.86.063001}%
  \BibitemOpen
  \bibfield  {author} {\bibinfo {author} {\bibfnamefont {A.}~\bibnamefont {Tanaka}}\ and\ \bibinfo {author} {\bibfnamefont {A.}~\bibnamefont {Tomiya}},\ }\bibfield  {title} {\bibinfo {title} {Detection of phase transition via convolutional neural networks},\ }\href {https://doi.org/10.7566/JPSJ.86.063001} {\bibfield  {journal} {\bibinfo  {journal} {Journal of the Physical Society of Japan}\ }\textbf {\bibinfo {volume} {86}},\ \bibinfo {pages} {063001} (\bibinfo {year} {2017})}\BibitemShut {NoStop}%
\bibitem [{rep()}]{repository}%
  \BibitemOpen
  \href@noop {} {\bibinfo {title} {{GitHub repository for python codes for this paper}}},\ \bibinfo {howpublished} {\url{https://github.com/RyotaWatanabe7/Self-similarity-finder}}\BibitemShut {NoStop}%
\bibitem [{\citenamefont {Bhattacharjee}\ and\ \citenamefont {Seno}(2001)}]{Somendra_M_Bhattacharjee_2001}%
  \BibitemOpen
  \bibfield  {author} {\bibinfo {author} {\bibfnamefont {S.~M.}\ \bibnamefont {Bhattacharjee}}\ and\ \bibinfo {author} {\bibfnamefont {F.}~\bibnamefont {Seno}},\ }\bibfield  {title} {\bibinfo {title} {A measure of data collapse for scaling},\ }\href {https://doi.org/10.1088/0305-4470/34/33/302} {\bibfield  {journal} {\bibinfo  {journal} {Journal of Physics A: Mathematical and General}\ }\textbf {\bibinfo {volume} {34}},\ \bibinfo {pages} {6375} (\bibinfo {year} {2001})}\BibitemShut {NoStop}%
\bibitem [{\citenamefont {Maruoka}(2023)}]{Maruoka_2023}%
  \BibitemOpen
  \bibfield  {author} {\bibinfo {author} {\bibfnamefont {H.}~\bibnamefont {Maruoka}},\ }\bibfield  {title} {\bibinfo {title} {A framework for crossover of scaling law as a self-similar solution: dynamical impact of viscoelastic board},\ }\href {https://doi.org/10.1140/epje/s10189-023-00292-9} {\bibfield  {journal} {\bibinfo  {journal} {The European Physical Journal E}\ }\textbf {\bibinfo {volume} {46}},\ \bibinfo {pages} {35} (\bibinfo {year} {2023})}\BibitemShut {NoStop}%
\bibitem [{\citenamefont {Wang}\ and\ \citenamefont {Yu}(2023)}]{wang2023physicsguided}%
  \BibitemOpen
  \bibfield  {author} {\bibinfo {author} {\bibfnamefont {R.}~\bibnamefont {Wang}}\ and\ \bibinfo {author} {\bibfnamefont {R.}~\bibnamefont {Yu}},\ }\href@noop {} {\bibinfo {title} {Physics-guided deep learning for dynamical systems: A survey}} (\bibinfo {year} {2023}),\ \Eprint {https://arxiv.org/abs/2107.01272} {arXiv:2107.01272 [cs.LG]} \BibitemShut {NoStop}%
\bibitem [{\citenamefont {Otto}\ \emph {et~al.}(2023)\citenamefont {Otto}, \citenamefont {Zolman}, \citenamefont {Kutz},\ and\ \citenamefont {Brunton}}]{otto2023unified}%
  \BibitemOpen
  \bibfield  {author} {\bibinfo {author} {\bibfnamefont {S.~E.}\ \bibnamefont {Otto}}, \bibinfo {author} {\bibfnamefont {N.}~\bibnamefont {Zolman}}, \bibinfo {author} {\bibfnamefont {J.~N.}\ \bibnamefont {Kutz}},\ and\ \bibinfo {author} {\bibfnamefont {S.~L.}\ \bibnamefont {Brunton}},\ }\href@noop {} {\bibinfo {title} {A unified framework to enforce, discover, and promote symmetry in machine learning}} (\bibinfo {year} {2023}),\ \Eprint {https://arxiv.org/abs/2311.00212} {arXiv:2311.00212 [cs.LG]} \BibitemShut {NoStop}%
\bibitem [{\citenamefont {Barenboim}\ \emph {et~al.}(2021)\citenamefont {Barenboim}, \citenamefont {Hirn},\ and\ \citenamefont {Sanz}}]{10.21468/SciPostPhys.11.1.014}%
  \BibitemOpen
  \bibfield  {author} {\bibinfo {author} {\bibfnamefont {G.}~\bibnamefont {Barenboim}}, \bibinfo {author} {\bibfnamefont {J.}~\bibnamefont {Hirn}},\ and\ \bibinfo {author} {\bibfnamefont {V.}~\bibnamefont {Sanz}},\ }\bibfield  {title} {\bibinfo {title} {{Symmetry meets AI}},\ }\href {https://doi.org/10.21468/SciPostPhys.11.1.014} {\bibfield  {journal} {\bibinfo  {journal} {SciPost Phys.}\ }\textbf {\bibinfo {volume} {11}},\ \bibinfo {pages} {014} (\bibinfo {year} {2021})}\BibitemShut {NoStop}%
\bibitem [{\citenamefont {Krippendorf}\ and\ \citenamefont {Syvaeri}(2020)}]{Krippendorf_2021}%
  \BibitemOpen
  \bibfield  {author} {\bibinfo {author} {\bibfnamefont {S.}~\bibnamefont {Krippendorf}}\ and\ \bibinfo {author} {\bibfnamefont {M.}~\bibnamefont {Syvaeri}},\ }\bibfield  {title} {\bibinfo {title} {Detecting symmetries with neural networks},\ }\href {https://doi.org/10.1088/2632-2153/abbd2d} {\bibfield  {journal} {\bibinfo  {journal} {Machine Learning: Science and Technology}\ }\textbf {\bibinfo {volume} {2}},\ \bibinfo {pages} {015010} (\bibinfo {year} {2020})}\BibitemShut {NoStop}%
\bibitem [{\citenamefont {Yang}\ \emph {et~al.}(2023)\citenamefont {Yang}, \citenamefont {Walters}, \citenamefont {Dehmamy},\ and\ \citenamefont {Yu}}]{pmlr-v202-yang23n}%
  \BibitemOpen
  \bibfield  {author} {\bibinfo {author} {\bibfnamefont {J.}~\bibnamefont {Yang}}, \bibinfo {author} {\bibfnamefont {R.}~\bibnamefont {Walters}}, \bibinfo {author} {\bibfnamefont {N.}~\bibnamefont {Dehmamy}},\ and\ \bibinfo {author} {\bibfnamefont {R.}~\bibnamefont {Yu}},\ }\bibfield  {title} {\bibinfo {title} {Generative adversarial symmetry discovery},\ }in\ \href {https://proceedings.mlr.press/v202/yang23n.html} {\emph {\bibinfo {booktitle} {Proceedings of the 40th International Conference on Machine Learning}}},\ \bibinfo {series} {Proceedings of Machine Learning Research}, Vol.\ \bibinfo {volume} {202},\ \bibinfo {editor} {edited by\ \bibinfo {editor} {\bibfnamefont {A.}~\bibnamefont {Krause}}, \bibinfo {editor} {\bibfnamefont {E.}~\bibnamefont {Brunskill}}, \bibinfo {editor} {\bibfnamefont {K.}~\bibnamefont {Cho}}, \bibinfo {editor} {\bibfnamefont {B.}~\bibnamefont {Engelhardt}}, \bibinfo {editor} {\bibfnamefont {S.}~\bibnamefont {Sabato}},\ and\ \bibinfo {editor} {\bibfnamefont {J.}~\bibnamefont
  {Scarlett}}}\ (\bibinfo  {publisher} {PMLR},\ \bibinfo {year} {2023})\ pp.\ \bibinfo {pages} {39488--39508}\BibitemShut {NoStop}%
\bibitem [{\citenamefont {Desai}\ \emph {et~al.}(2022)\citenamefont {Desai}, \citenamefont {Nachman},\ and\ \citenamefont {Thaler}}]{PhysRevD.105.096031}%
  \BibitemOpen
  \bibfield  {author} {\bibinfo {author} {\bibfnamefont {K.}~\bibnamefont {Desai}}, \bibinfo {author} {\bibfnamefont {B.}~\bibnamefont {Nachman}},\ and\ \bibinfo {author} {\bibfnamefont {J.}~\bibnamefont {Thaler}},\ }\bibfield  {title} {\bibinfo {title} {Symmetry discovery with deep learning},\ }\href {https://doi.org/10.1103/PhysRevD.105.096031} {\bibfield  {journal} {\bibinfo  {journal} {Phys. Rev. D}\ }\textbf {\bibinfo {volume} {105}},\ \bibinfo {pages} {096031} (\bibinfo {year} {2022})}\BibitemShut {NoStop}%
\bibitem [{\citenamefont {Olver}(1986)}]{MR836734}%
  \BibitemOpen
  \bibfield  {author} {\bibinfo {author} {\bibfnamefont {P.~J.}\ \bibnamefont {Olver}},\ }\href {https://doi.org/10.1007/978-1-4684-0274-2} {\emph {\bibinfo {title} {Applications of {L}ie groups to differential equations}}},\ \bibinfo {series} {Graduate Texts in Mathematics}, Vol.\ \bibinfo {volume} {107}\ (\bibinfo  {publisher} {Springer-Verlag},\ \bibinfo {address} {New York},\ \bibinfo {year} {1986})\ pp.\ \bibinfo {pages} {xxvi+497}\BibitemShut {NoStop}%
\bibitem [{\citenamefont {Johnson}(1985)}]{Johnson_1985}%
  \BibitemOpen
  \bibfield  {author} {\bibinfo {author} {\bibfnamefont {K.}~\bibnamefont {Johnson}},\ }\href {https://doi.org/10.1017/CBO9781139171731} {\emph {\bibinfo {title} {Contact Mechanics}}},\ Cambridge Texts in Applied Mathematics\ (\bibinfo  {publisher} {Cambridge University Press},\ \bibinfo {year} {1985})\BibitemShut {NoStop}%
\bibitem [{\citenamefont {Chastel}\ \emph {et~al.}(2016)\citenamefont {Chastel}, \citenamefont {Gondret},\ and\ \citenamefont {Mongruel}}]{Chastel_2016}%
  \BibitemOpen
  \bibfield  {author} {\bibinfo {author} {\bibfnamefont {T.}~\bibnamefont {Chastel}}, \bibinfo {author} {\bibfnamefont {P.}~\bibnamefont {Gondret}},\ and\ \bibinfo {author} {\bibfnamefont {A.}~\bibnamefont {Mongruel}},\ }\bibfield  {title} {\bibinfo {title} {Texture-driven elastohydrodynamic bouncing},\ }\href {https://doi.org/10.1017/jfm.2016.580} {\bibfield  {journal} {\bibinfo  {journal} {Journal of Fluid Mechanics}\ }\textbf {\bibinfo {volume} {805}},\ \bibinfo {pages} {577 } (\bibinfo {year} {2016})}\BibitemShut {NoStop}%
\bibitem [{\citenamefont {Chastel}\ and\ \citenamefont {Mongruel}(2019)}]{Chastel_2019}%
  \BibitemOpen
  \bibfield  {author} {\bibinfo {author} {\bibfnamefont {T.}~\bibnamefont {Chastel}}\ and\ \bibinfo {author} {\bibfnamefont {A.}~\bibnamefont {Mongruel}},\ }\bibfield  {title} {\bibinfo {title} {Sticking collision between a sphere and a textured wall in a viscous fluid},\ }\href {https://doi.org/10.1103/PhysRevFluids.4.014301} {\bibfield  {journal} {\bibinfo  {journal} {Physical Review Fluids}\ }\textbf {\bibinfo {volume} {4}},\ \bibinfo {pages} {014301} (\bibinfo {year} {2019})}\BibitemShut {NoStop}%
\bibitem [{\citenamefont {Mongruel}\ and\ \citenamefont {Gondret}(2020)}]{Mongruel_2020}%
  \BibitemOpen
  \bibfield  {author} {\bibinfo {author} {\bibfnamefont {A.}~\bibnamefont {Mongruel}}\ and\ \bibinfo {author} {\bibfnamefont {P.}~\bibnamefont {Gondret}},\ }\bibfield  {title} {\bibinfo {title} {Viscous dissipation in the collision between a sphere and a textured wall},\ }\href {https://doi.org/10.1017/jfm.2020.325} {\bibfield  {journal} {\bibinfo  {journal} {Journal of Fluid Mechanics}\ }\textbf {\bibinfo {volume} {896}},\ \bibinfo {pages} {A8} (\bibinfo {year} {2020})}\BibitemShut {NoStop}%
\end{thebibliography}%

\end{document}